\documentclass[draftclsnofoot,onecolumn]{IEEEtran}
\IEEEoverridecommandlockouts \makeatletter
\def\ps@headings{%
\def\@evenhead{\scriptsize\thepage \hfil \leftmark\mbox{}}%
\def\@oddfoot{}%
\def\@evenfoot{}}
\makeatother
\usepackage{graphicx}
\usepackage{color}
\usepackage{times}
\usepackage{fancyhdr}
\usepackage{amssymb,amsmath}
\usepackage{lastpage}
\usepackage{epsfig}
\usepackage[lined,boxed,commentsnumbered, ruled]{algorithm2e}
\usepackage[compress]{cite}
\usepackage{cases}
\usepackage{caption}
\usepackage{longtable}
\usepackage{multirow}
\usepackage{threeparttable}
\usepackage{rotating}
\usepackage{bm}
\usepackage{float}
\usepackage{array}
\usepackage{subfig}
\usepackage{multirow}
\usepackage{geometry}
\usepackage{soul}
\usepackage{tikz}
\usetikzlibrary{matrix}
\usepackage[utf8]{inputenc}
\usepackage{cite}
\usepackage{comment}
\usepackage{etoolbox}
\usepackage{enumerate}
\usepackage[normalem]{ulem}
\usepackage{stfloats}
\usepackage{pgfplots}
\usepackage{pgfplotstable}
\usepackage{mathrsfs}

\usepgfplotslibrary{fillbetween}
\pgfplotsset{compat=1.18}
\usetikzlibrary{arrows.meta, decorations.pathreplacing, positioning}
\newtheorem{remark}{Remark}

\newtheorem{theorem}{Theorem}
\newtheorem{lemma}{Lemma}
\newtheorem{example}{Example}
\newtheorem{corollary}{Corollary}
\newtheorem{defn}{Definition}

\newcolumntype{C}[1]{>{\centering\let\newline\\\arraybackslash\hspace{0pt}}m{#1}}
\makeatletter
\def\ps@headings{%
\def\@oddhead{\mbox{}\scriptsize\rightmark \hfil \thepage}%
\def\@evenhead{\scriptsize\thepage \hfil \leftmark\mbox{}}%
\def\@oddfoot{}%
\def\@evenfoot{}}
\makeatother

\begin{document}
\title{On the Capacity of DNA Labeling in the Single-Label Setting}

\author{
  \IEEEauthorblockN{
    Zihan~Wu\IEEEauthorrefmark{1},
    Qi~Cao\IEEEauthorrefmark{1},
    Ling~Liu\IEEEauthorrefmark{1}
    and Baoming~Bai\IEEEauthorrefmark{2}
  }\\
  \IEEEauthorblockA{
    \IEEEauthorrefmark{1}
    Guangzhou Institute of Technology, Xidian University, Guangzhou, China
  }
  \\
  \IEEEauthorblockA{
    \IEEEauthorrefmark{2}
    State Key Laboratory of Integrated Service Networks,
    Xidian University, Xi’an 710071, China
  }
  \thanks{This paper was presented in part at ISIT 2026~\cite{Wu2026OnTC}.}
}
\maketitle

\begin{abstract}
DNA labeling has attracted increasing attention in biomedical applications, including molecular imaging, diagnostics, and genomic analysis. In a DNA labeling process, a set of DNA sequence patterns, referred to as labels, is designed according to the requirements of a specific application. For each DNA sequence, the labeling process generates an output sequence that records the positions of the labels. DNA sequences with different labeling outputs can therefore be distinguished through the labeling process. To quantify this capability, the labeling capacity is defined as the exponential growth rate of the maximum number of DNA sequences that can be distinguished through the labeling process as the sequence length tends to infinity~\cite{dnalable1}. To date, the labeling capacities of several cases in the single-label setting have been determined.

In this paper, we formulate the labeling process as a deterministic channel and show that its zero-error capacity is equal to the labeling capacity. For a single label, the corresponding channel can be represented by a star graph. Thus, characterizing the labeling capacity of a single label is equivalent to determining the zero-error capacity of the corresponding star graph. We derive the zero-error capacities of all star graphs, thereby providing a complete characterization of the labeling capacities for all single-label cases. Furthermore, we develop a general method for constructing capacity-achieving codes. These results apply to labeling problems over arbitrary finite alphabets and are not restricted to the DNA alphabet. Finally, for a fixed label length, we exactly characterize the range of achievable labeling capacities and identify all single-label structures that attain the minimum and maximum capacities.
\end{abstract}

\begin{IEEEkeywords}
DNA labeling, zero-error capacity, star graph, channel with memory.
\end{IEEEkeywords}

\section{Introduction}
DNA labeling has been widely applied in biomedical applications. By detecting specific DNA sequence patterns, DNA labeling provides information about genomic regions of interest and their sequence characteristics. For example, fluorescence in situ hybridization (FISH) uses sequence-specific probes to visualize the locations of genomic regions within cells~\cite{cui2016fluorescence}. In genomic analysis, DNA labeling approaches such as optical mapping record the positions of characteristic sequence patterns along long DNA molecules, providing information for genome assembly and structural analysis~\cite{YUAN20202051}. Moreover, sequence-specific labeling techniques have also been employed in molecular diagnostics to facilitate the detection of disease-related genetic markers~\cite{faltin2013current}. 

This use of sequence-specific patterns for information extraction provides a new perspective on information access in DNA data storage. As DNA storage systems continue to scale, efficiently retrieving specific information from large collections of DNA molecules has become increasingly important. Conventional readout approaches typically rely on sequencing DNA molecules to obtain their complete sequence information~\cite{church2012next}. However, for certain access tasks, identifying specific sequence patterns is sufficient without recovering the entire sequences. To address this challenge, several strategies based on sequence-specific recognition, including PCR-based random access~\cite{organick2018random}, hybridization-based search~\cite{bee2021molecular}, and CRISPR-based approaches~\cite{imburgia2025random}, have been developed for information retrieval in DNA storage systems.

The labeling process can be modeled as follows~\cite{dnalable1}. Let
\(\bm{\alpha}_1,\bm{\alpha}_2,\dots,\bm{\alpha}_M\) be \(M\) labels such that no label is a prefix of any other. For a DNA sequence \(\bm{x}\) with length \(\ell(\bm{x})\geq1\), the corresponding output sequence is of the same length and records the occurrences of these labels. Specifically, the symbol at position \(i\), \(i\in\{0,1,\dots,\ell(\bm{x})-1\}\), is \(m\) if \(\bm{\alpha}_m\) starts at that position for some \(m\in\{1,2,\dots,M\}\), and is zero otherwise. The labeling capacity is defined as the exponential growth rate of the maximum number of distinct output sequences that can be generated as the sequence length tends to infinity.
For example, consider a single label \(\bm{\alpha}=\mathrm{\textcolor{red}{CG}}\) and the DNA sequence
\(\bm{x}=\textnormal{AC\textcolor{red}{CG}G\textcolor{red}{CG}ATC}\). The corresponding output sequence is
\(\textnormal{00\textcolor{red}{1}00\textcolor{red}{1}0000}\).
Since two occurrences of \(\textcolor{red}{\mathrm{CG}}\) cannot start at adjacent positions, the output sequences cannot contain two consecutive \(\textcolor{red}{1}\)'s. Therefore, the set of possible output sequences satisfies the \((d,k)\) run-length-limited (RLL) constraint~\cite{marcus2001introduction} with parameters \((d,k)=(1,\infty)\).
Several cases have been characterized in both the single-label and multiple-label settings, while a general characterization of the labeling capacity remains open~\cite{dnalable1}.

The RLL constraint also appears in the study of zero-error capacity.
This observation further motivates the study of labeling capacity from a zero-error perspective.
In particular, Ahlswede \emph{et al.}~\cite{1998} considered a binary channel with one memory, where \(00\) and \(01\) are the only length-two input blocks with disjoint output sets.
They constructed a capacity-achieving code by restricting the codewords to those satisfying the \((1,\infty)\)-RLL constraint.

The zero-error capacity is defined as the maximum rate at which information can be transmitted with zero error.
This concept was introduced by Shannon in 1956~\cite{1956}, where a discrete memoryless channel was represented by a graph whose vertices are letters of the input alphabet. An edge is placed between two vertices if they are indistinguishable at the channel output. Within this framework, Shannon established a
lower bound on the capacity of a circle graph of length $5$ which was later shown to be tight by Lovász in 1979~\cite{1979}. 
Subsequently, the study of zero-error capacity was extended to channels with memory. The zero-error capacities of binary channels with one memory have been characterized in previous works~\cite{1998,2016,2018}. For binary channels with two memories, several classes of channels have been studied~\cite{zhang2024zero,cao2022zero}; but their zero-error capacities have not been completely characterized.

General characterizations of zero-error capacity remain scarce. For discrete memoryless channels, the Lovász theta function~\cite{1979} provides a powerful upper bound on the zero-error capacity. For channels with memory, \cite{oneedge} developed a coding scheme for channels represented by graphs with one edge, yielding lower bounds on their zero-error capacities.

In this paper, we formulate the labeling process as a deterministic channel and show that its zero-error capacity is equal to the labeling capacity. For a single label, the corresponding channel can be represented by a star graph. Thus, characterizing the labeling capacity of a single label is equivalent to determining the zero-error capacity of the corresponding star graph. We derive the zero-error capacities of all star graphs, thereby providing a complete characterization of the labeling capacities for all single-label cases. Furthermore, we develop a general method for constructing capacity-achieving codes. These results apply to labeling problems over arbitrary finite alphabets and are not restricted to the DNA alphabet. Finally, for a fixed label length, we exactly characterize the range of achievable labeling capacities, as illustrated in Fig.~\ref{fig:capacity}, and identify all single-label structures that attain the minimum and maximum capacities.

\begin{figure}[!t]
\centering
\begin{tikzpicture}
\begin{axis}[
    width=0.8\linewidth,
    height=0.5\linewidth,
    xlabel={Label length},
    ylabel={Labeling capacity},
    xmin=0,
    xmax=100,
    ymin=0,
    ymax=1,
    xtick={0,10,20,30,40,50,60,70,80,90,100},
    ytick={0,0.2,0.4,0.6,0.8,1.0},
    grid=none,
    legend style={
        at={(0.72,0.95)},
        anchor=north,
        legend columns=1,
        align=left,
        legend cell align=left,
    },
]

\addplot[
    name path=Cmax,
    red,
    thin,
    mark=*,
    mark size=1pt
]
table[x=m,y=Cmax] {data.dat};

\addlegendentry{Maximum labeling capacity $(-\log\gamma)$}

\addplot[
    name path=Cmin,
    blue,
    thin,
    mark=*,
    mark size=1pt
]
table[x=m,y=Cmin] {data.dat};

\addlegendentry{Minimum labeling capacity $(-\log\beta)$}

\addplot[
    draw=none,
    fill opacity=0.15
]
fill between[
    of=Cmax and Cmin
];

\addlegendimage{area legend, fill opacity=0.15}
\addlegendentry{Achievable range}

\end{axis}
\end{tikzpicture}
\caption{The range of labeling capacities in the single-label setting. For a single label \(\bm{\alpha}\) of length \(\ell(\bm{\alpha})\), the labeling capacity \(C_0(\bm{\alpha})\) satisfies \(-\log\beta\leq C_0(\bm{\alpha})\leq-\log\gamma\), where \(\beta\) and \(\gamma\) are the unique positive roots of the equations \(x+x^{\ell(\bm{\alpha})}=1\) and \(x+x^{(\ell(\bm{\alpha})+1)/2}=1\), respectively.}
\label{fig:capacity}
\end{figure}

\section{Definitions and Preliminaries}\label{sec:model}

\begin{table*}[!t]
\centering
\begin{threeparttable}
\caption{Summary of the lemmas and corollaries.}
\label{tab:summary}

\renewcommand{\arraystretch}{1.22}
\setlength{\tabcolsep}{5pt}

\begin{tabular}{|>{\centering\arraybackslash}m{2.3cm}|m{13cm}|}
\hline
\multicolumn{1}{|c|}{Result} &
\multicolumn{1}{c|}{Main conclusion} \\ \hline

Lemma~\ref{co1}
&
Given a label sequence $\bm{a} = \{\bm{\alpha}_m\}_{m=1}^{M}$, two sequences $\bm{x}, \bm{x}' \in \mathcal{X}^{n}$ are distinguishable for the channel $\mathrm{LC}(\bm{a})$ if
$\mathrm{L}_{\bm{a}}(\bm{x}) \neq \mathrm{L}_{\bm{a}}(\bm{x}').$
Equivalently, there exist $m \in \mathbb{Z}[1,M]$ and $i \in \mathbb{Z}[0, n-\ell(\bm{\alpha}_m)]$ such that one of the two substrings
$\bm{x}_{[i;\ell(\bm{\alpha}_m)]}$ and
$\bm{x}'_{[i;\ell(\bm{\alpha}_m)]}$ equals $\bm{\alpha}_m$, while the other does not.
\\ \hline

Lemma~\ref{lem:rate_invariance}

(Lemma~2 in~\cite{2018})
&
We have
$R(\{\mathcal{C}_n\}) = R(\{\mathcal{C}'_t\})$,
i.e.,
$\lim_{n\to\infty} \frac{1}{n}\log |\mathcal{C}_n|
=
\lim_{t\to\infty} \frac{1}{t}\log |\mathcal{C}'_t|$.
\\ \hline

Lemma~\ref{co2}
&
For any label sequence $\bm{a}$,
$C_{\mathrm{L}}(\bm{a})=C_{0}(\bm{a}).$
\\ \hline

Lemma~\ref{lm:lemma2}
&
{
\renewcommand{\arraystretch}{1.1}

Let $\bm{x}$ and $\bm{y}$ be two sequences with $\ell(\bm{x}) \ge 1$.

(1) (Lemma~2 in \cite{oneedge})
If \(\bm{x}\) is both a prefix-unit and a suffix-unit of \(\bm{y}\),
then the sequence \(\bm{x}_{[0;l]}\) is a dividing-unit of \(\bm{x}\) and \(\bm{y}\),
where \(l\) is the largest common divisor of \(\ell(\bm{x})\) and \(\ell(\bm{y})\),
denoted by \(\gcd(\ell(\bm{x}), \ell(\bm{y}))\).

(2) If $\bm{x}$ is a unit of $\bm{y}$, then for any
$i \in \mathbb{Z}[0,\ell(\bm{y})-\ell(\bm{x})]$,
the sequence $\bm{y}_{[i;\ell(\bm{x})]}$ is also a unit of $\bm{y}$.
In particular, $\bm{y}_{[0;\ell(\bm{x})]}$ and
$\bm{y}_{[\ell(\bm{y})-\ell(\bm{x});\,\ell(\bm{x})]}$
are respectively a prefix-unit and a suffix-unit of $\bm{y}$.

(3) For any symbol $c \in \mathcal{X}$ and any sequence $\bm{z} \in \mathcal{X}^n$ with $n\ge1$, let $N_c(\bm{z})$ denote the number of occurrences of $c$ in $\bm{z}$.
Then, for any two units $\bm{u}$ and $\bm{v}$ of $\bm{y}$ with $\ell(\bm{u})=\ell(\bm{v})$, we have $N_c(\bm{u})=N_c(\bm{v})$.
}
\\ \hline

\raisebox{-.9\height}{Lemma~\ref{lm:2}}
&
{
\renewcommand{\arraystretch}{1.1}

Given a sequence $\bm{x}\in\mathcal{X}^n$ with $n \ge 1$ and $l\in\mathbb{Z}[1,\ell(\bm{x})]$, the following statements are equivalent: 

(1) \(\bm{x}_{[0; l]}\) is a prefix-unit of \(\bm{x}\); 

(2) \(\bm{x}_{[\ell(\bm{x})-l; l]}\) is a suffix-unit of \(\bm{x}\); 

(3) $\bm{x}_{[0; \ell(\bm{x}) - l]}=\bm{x}_{[l; \ell(\bm{x}) - l]}$.
}
\\ \hline

Lemma~\ref{lemma:main}
&
{
\renewcommand{\arraystretch}{1.1}
Let $\bm{x} \in \mathcal{X}^n$ with $n \ge 1$.

(1) $D(P(\bm{x})) = P(\bm{x})$; 

(2) For any $i\in\mathbb{Z}[0,n-\ell(P(\bm{x}))]$, we have $\bm{x}_{[i;\ell(P(\bm{x}))]}=P(\bm{x})$ if and only if $i\bmod\ell(P(\bm{x}))=0$.

(3) \(S(\bm{x})\) is the longest sequence $\bm{u}$ such that $\bm{u}\prec_{\mathrm p}\bm{x}$
    and $\bm{u}\prec_{\mathrm s}\bm{x}.$

(4) $|\mathscr{S}(\bm{x})|\leq\ell(\bm{x})$, with equality if and only if $\ell(P(\bm{x}))=1$.
}
\\ \hline

Lemma~\ref{co22}
&
Define
$\mathscr{S}'(\bm{\alpha})\triangleq
\left\{\bm{x}\mid
\bm{x}\prec_{\mathrm{p}}\bm{\alpha},\
\bm{x}\prec_{\mathrm{s}}\bm{\alpha}
\right\}$
and
$\mathscr{S}''(\bm{\alpha})\triangleq
\left\{(P(\bm{\alpha}))^k\circ\bm{\theta}\mid
k\in\mathbb{Z}[0,w-1]\right\}
\cup
\Theta\setminus\{\bm{\theta}\}.$
Then, $\mathscr{S}(\bm{\alpha})=\mathscr{S}'(\bm{\alpha})=\mathscr{S}''(\bm{\alpha})$.
\\ \hline

Corollary~\ref{c0001}
&
There exists a unique index in $\mathbb{Z}[1,|\Theta|]$, denoted by $\tilde{k}$, such that $\bm{\theta}^{\langle \tilde{k}\rangle} = \bm{\theta}$.
\\ \hline

Lemma~\ref{lm:0}
&
{
\renewcommand{\arraystretch}{1.1}
(1) For any $\bm{\varphi}\in\Theta$, we have $\bm{\varphi}\prec_{\mathrm{p}}P(\bm{\alpha})$.

(2) If $\ell(\bm{\theta}^{\langle1\rangle})=\ell(P(\bm{\alpha}))-1$, then $\tilde{k}=1$.
}
\\ \hline

Lemma~\ref{lm:01}
&
{
\renewcommand{\arraystretch}{1.1}
Let $\bm{\phi}\in\Phi$.

(1) $\bm{\alpha}_{[0;\ell(\bm{\alpha})-\ell(P(\bm{\alpha}))+2]}\preceq_{\mathrm{p}}\bm{\phi}$.

(2) 
$P(\bm{\alpha})\prec_{\mathrm{p}}\bm{\phi}$.

(3) $|\Phi|\leq\ell(P(\bm{\alpha}))-1$, where equality holds if and only if one of the following conditions holds:

\begin{tabular}[t]{@{}l@{}}
\hspace{1.5em}(a) $\ell(P(\bm{\alpha}))=1$;\\
\hspace{1.5em}(b) $\ell(P(\bm{\theta}))=1$ and $\ell(\bm{\theta})=\ell(P(\bm{\alpha}))-1$.
\end{tabular}
}
\\ \hline

Lemma~\ref{CnB}
&
$\mathcal{B}$ is suffix-free.
\\ \hline

Corollary~\ref{0o0}
&
Every sequence in $\mathcal{B}^*$
has a unique decomposition into a concatenation of elements of $\mathcal{B}$.
\\ \hline

Lemma~\ref{lm:distinguish}
&
Letting $\bm{b} \in \mathcal{B}^{*}$ with $\ell(\bm{b}) \ge \ell(\bm{\alpha})$, we have  $\bm{\alpha}\preceq_\mathrm{p}\bm{b}$.
\\ \hline

\end{tabular}
\end{threeparttable}
\end{table*}

This section introduces the basic definitions and preliminary concepts used in the study of DNA labeling and zero-error capacity.

Let $\mathcal{X}$ be a finite alphabet with $|\mathcal{X}| \ge 2$. A \emph{label} is any finite sequence $\bm{\alpha}$ over $\mathcal{X}$ with $\ell(\bm{\alpha}) > 0$, where $\ell(\bm{\alpha})$ denotes its length. Throughout this paper, $\bm{\alpha}$ denotes an arbitrary but fixed label unless otherwise specified, and all logarithms are taken to base~2, with the base omitted for brevity.
 For integers \(n_1, n_2 \in \mathbb{Z}\) such that \(n_1 \leq n_2\), let \(\mathbb{Z}[n_1, n_2] \triangleq \{i \in \mathbb{Z} : n_1 \leq i \leq n_2\}\).  
Given a sequence $\bm{x} = (x_0, \cdots, x_{n-1}) \in \mathcal{X}^n$, for any integer
$l \in \mathbb{Z}[0,n]$ and index $i \in \mathbb{Z}[0,n-l]$, the subsequence
$\bm{x}_{[i;l]}$ is defined as
\[
\bm{x}_{[i;l]} \triangleq 
\begin{cases} 
\bm\varepsilon, & \text{if } l = 0,\\
(x_i, x_{i+1}, \cdots, x_{i+l-1}), & \text{if } l=1,2,\cdots,n, \\
\end{cases}
\]
where \(\bm{\varepsilon}\) denotes the empty sequence with length \(0\).
Moreover, for sequences $\bm{u} \in \mathcal{X}^{n_1}$ and
$\bm{v} \in \mathcal{X}^{n_2}$, we define their concatenation as
$\bm{u}\circ\bm{v}
\triangleq
(u_0,\cdots,u_{n_1-1},v_0,\cdots,v_{n_2-1}).$
For any sequence $\bm{x}$ and non-negative integer $t$, let
$\bm{x}^{t}$ denote the concatenation of $\bm{x}$ with itself $t$ times,
where $\bm{x}^{0}$ is defined as the empty sequence $\varepsilon$.
For any two sequences $\bm{u}$ and $\bm{v}$, we write
$\bm{u}\preceq_{\mathrm p}\bm{v}$ and
$\bm{u}\preceq_{\mathrm s}\bm{v}$
if $\bm{u}$ is a prefix and a suffix of $\bm{v}$, respectively.
The symbols $\prec_{\mathrm p}$ and $\prec_{\mathrm s}$ denote proper
prefix and proper suffix relations, respectively.\footnote{
A \emph{proper prefix} (resp., \emph{proper suffix}) of a sequence
$\bm{v}$ is a prefix (resp., suffix) $\bm{u}$ of $\bm{v}$ such that $\bm{u}\neq\bm{v}$.
}
To determine the labeling capacity of a single label, we establish several lemmas and corollaries, which we summarize in Table~\ref{tab:summary} for ease of reference.

\subsection{DNA Labeling}

\begin{defn}\label{def:labeling_sequences}
Let $\bm{\alpha}_1, \bm{\alpha}_2, \dots, \bm{\alpha}_M$ be $M$ labels such that no label is a prefix of any other, and let 
$\bm{a}\triangleq
\{\bm{\alpha}_m \}^{M}_{m=1}$
denote the \emph{label sequence}.
For any sequence $\bm{x} \in \mathcal{X}^n$, the \emph{$\bm{a}$-labeling sequence} of $\bm{x}$ is defined as
$\mathrm{L}_{\bm{a}}(\bm{x}) \triangleq (c_0, c_1, \dots, c_{n-1})$,
where, for each $i \in \mathbb{Z}[0,n-1]$, the symbol $c_i$ is given by
\[
c_i =
\begin{cases}
m, & \text{if } i\in\mathbb{Z}[0,n - \ell(\bm{\alpha}_m)]
       \text{ and }
       \bm{x}_{[i;\,\ell(\bm{\alpha}_m)]} = \bm{\alpha}_m, \\[4pt]
0, & \text{otherwise}.
\end{cases}
\]
Since the label sequence $\bm{a}$ is prefix-free, at most one label can match at any position $i$; therefore, $\mathrm{L}_{\bm{a}}(\bm{x})$ is uniquely defined.
\end{defn}

\begin{example}\label{ep:1}
Consider the label sequence $\bm{a}=\{\textnormal{\textcolor{red}{CG}, \textcolor{blue}{A}} \}$, and a sequence 
$\bm{x} = \textnormal{\textcolor{blue}{A}C\textcolor{red}{CG}G\textcolor{red}{CG}\textcolor{blue}{A}TC}$. 
Its $\bm{a}$-labeling sequence is 
$\mathrm{L}_{\bm{a}}(\bm{x}) = \textnormal{\textcolor{blue}{2}0\textcolor{red}{1}00\textcolor{red}{1}0\textcolor{blue}{2}00}$.
\end{example}

\begin{defn}[Definition~1 in \cite{dnalable1}]\label{rm1}
    Given a label sequence $\bm{a}=
\{\bm{\alpha}_m \}^{M}_{m=1}$, 
define $\mathcal{F}_n({\bm{a}})$ as the set of all 
${\bm{a}}$-labeling sequences of length-$n$, that is,
$\mathcal{F}_n({\bm{a}})=\{ \mathrm{L}_{{\bm{a}}}(\bm{x}) \mid \bm{x} \in \mathcal{X}^n \}.$
The \emph{labeling capacity} of ${\bm{a}}$ is defined as
$$C_\mathrm{L}(\bm{a})\triangleq \limsup_{n \to \infty}\frac{\log |\mathcal{F}_n(\bm{a})|}{n}.$$
\end{defn}

\subsection{Zero-Error Capacity}
The DNA labeling problem can be formulated as a zero-error capacity problem. To establish this connection, given a label sequence $\bm{a} = \{\bm{\alpha}_m\}_{m=1}^{M}$, we model the mapping $\mathrm{L}_{\bm{a}}(\cdot)$ in Definition~\ref{def:labeling_sequences} as a \emph{labeling channel}, denoted by $\mathrm{LC}(\bm{a})$, as illustrated in Fig.~\ref{fig:channel}, which also includes the example in Example~\ref{ep:1}. This channel maps an $\mathcal{X}$-ary input sequence of length-$n$ to an $(M+1)$-ary output sequence of the same length. In the following, we introduce several fundamental concepts from zero-error information theory for $\mathrm{LC}(\bm{a})$, based on which we show that the labeling capacity is equivalent to the zero-error capacity of this channel. 

\begin{defn}
For the channel $\mathrm{LC}(\bm{a})$, let \(p(\bm{y}|\bm{x})\) denote the transition probability of output \(\bm{y}\) given input \(\bm{x}\). Two input sequences \(\bm{u}, \bm{v} \in \mathcal{X}^n\) are \emph{distinguishable} for the channel if
\[
\{\bm{y} : p(\bm{y}|\bm{u}) > 0\} \cap \{\bm{y} : p(\bm{y}|\bm{v}) > 0\} = \emptyset.
\]
\end{defn}

\begin{figure}[t]
  \centering
  \begin{tikzpicture}[>=stealth, thick]

    \node[draw, rounded corners=10pt,
          minimum width=4.8cm,
          minimum height=2cm] (box) {};

    \node[align=center] at (box.center)
      {channel: $\mathrm{LC}(\bm{a})$};

    \node[align=center,
          font=\footnotesize,
          yshift=-6mm] at (box.center)
      {example: $\bm{a}=\{\textnormal{\textcolor{red}{CG}, \textcolor{blue}{A}} \}$};

    \draw[->] 
      ([xshift=-3.5cm]box.west) -- (box.west)
      node[midway, above] {$\bm{x}\in\mathcal{X}^n$}
      node[midway, below]
      {\footnotesize example:
       \textcolor{blue}{A}C\textcolor{red}{CG}G\textcolor{red}{CG}\textcolor{blue}{A}TC};

    \draw[->] 
      (box.east) -- ([xshift=3.5cm]box.east)
      node[midway, above] {$\bm{y}=\mathrm{L}_{\bm{a}}(\bm{x})$}
      node[midway, below]
      {\footnotesize example:
       \textcolor{blue}{2}0\textcolor{red}{1}00\textcolor{red}{1}0\textcolor{blue}{2}00};

  \end{tikzpicture}
  \caption{Labeling channel model with an illustrative example.}
  \label{fig:channel}
\end{figure}

\begin{lemma}\label{co1}
Given a label sequence $\bm{a} = \{\bm{\alpha}_m\}_{m=1}^{M}$, two sequences $\bm{x}, \bm{x}' \in \mathcal{X}^{n}$ are distinguishable for the channel $\mathrm{LC}(\bm{a})$ if
$\mathrm{L}_{\bm{a}}(\bm{x}) \neq \mathrm{L}_{\bm{a}}(\bm{x}').$
Equivalently, there exist $m \in \mathbb{Z}[1,M]$ and $i \in \mathbb{Z}[0, n-\ell(\bm{\alpha}_m)]$ such that one of the two substrings
$\bm{x}_{[i;\ell(\bm{\alpha}_m)]}$ and
$\bm{x}'_{[i;\ell(\bm{\alpha}_m)]}$ equals $\bm{\alpha}_m$, while the other does not.
\end{lemma}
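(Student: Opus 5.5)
The statement has two parts: a distinguishability criterion, and a restatement of that criterion in terms of label occurrences. The plan is to use the fact that $\mathrm{LC}(\bm{a})$ is deterministic. For each input $\bm{x}\in\mathcal{X}^n$, the transition probability is
\[
p(\bm{y}|\bm{x})=\mathbb{1}\{\bm{y}=\mathrm{L}_{\bm{a}}(\bm{x})\}.
\]
Hence the output support of $\bm{x}$ is the singleton $\{\mathrm{L}_{\bm{a}}(\bm{x})\}$. Two singletons are disjoint exactly when their elements differ. So $\bm{x}$ and $\bm{x}'$ are distinguishable if, and in fact only if, $\mathrm{L}_{\bm{a}}(\bm{x})\neq \mathrm{L}_{\bm{a}}(\bm{x}')$. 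This settles the first assertion directly from the definition of distinguishability.

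For the equivalent formulation, I will compare the two labeling sequences coordinatewise, using Definition~\ref{def:labeling_sequences}.

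\emph{Sufficiency.} Suppose there are $m$ and $i\in\mathbb{Z}[0,n-\ell(\bm{\alpha}_m)]$ with, say, $\bm{x}_{[i;\ell(\bm{\alpha}_m)]}=\bm{\alpha}_m$ and $\bm{x}'_{[i;\ell(\bm{\alpha}_m)]}\neq\bm{\alpha}_m$. Then $c_i=m$ for $\bm{x}$. For $\bm{x}'$, I will show $c'_i\neq m$. The only way $c'_i$ could equal $m$ is for $\bm{\alpha}_m$ to start at position $i$ in $\bm{x}'$, which is excluded. (By the prefix-freeness argument, $c'_i$ is uniquely defined in any case, taking the value $0$ or some $m'\neq m$.) Therefore the two labeling sequences differ at position $i$.

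\emph{Necessity.} Suppose $\mathrm{L}_{\bm{a}}(\bm{x})$ and $\mathrm{L}_{\bm{a}}(\bm{x}')$ differ at some position $i$, with values $c_i\neq c'_i$. At least one of them is nonzero; without loss of generality $c_i=m\geq 1$. Then $i\le n-\ell(\bm{\alpha}_m)$ and $\bm{x}_{[i;\ell(\bm{\alpha}_m)]}=\bm{\alpha}_m$. Since $c'_i\neq m$, the same window of $\bm{x}'$ cannot equal $\bm{\alpha}_m$, for otherwise $c'_i=m$. This gives the required pair $(m,i)$.

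There is no real obstacle. The only care point is the use of prefix-freeness, which guarantees that a coordinate value $m$ certifies an occurrence of $\bm{\alpha}_m$ in a well-defined way and that each $c_i$ is unique.
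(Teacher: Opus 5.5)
Your proof is correct and matches the paper's approach: the paper states this lemma without proof, treating it as immediate from two facts. First, $\mathrm{LC}(\bm{a})$ is deterministic, so each input's output set is the singleton $\{\mathrm{L}_{\bm{a}}(\bm{x})\}$. Second, $\mathrm{L}_{\bm{a}}$ is defined coordinatewise in Definition~\ref{def:labeling_sequences}, with prefix-freeness making each $c_i$ well defined. Your observation that the condition is also necessary for distinguishability agrees with the ``if and only if'' the paper later uses in Definition~\ref{df:single_graph}.
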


\begin{defn}\label{df:single_graph}
Given a label sequence $\bm{a}=\{\bm{\alpha}_m\}_{m=1}^{M}$, let
$\ell_{\max}=\max_{m\in\mathbb{Z}[1,M]}\ell(\bm{\alpha}_m).$
The labeling channel $\mathrm{LC}(\bm{a})$ is represented as a graph
$G(\bm{a})=(V,E)$, where the vertex set is
$V=\mathcal{X}^{\ell_{\max}}.$
For any two distinct vertices $\bm{u},\bm{v}\in V$, 
$\{\bm{u},\bm{v}\}\in E$ if and only if $\bm{u}$ and $\bm{v}$ are distinguishable
for the channel $\mathrm{LC}(\bm{a})$, i.e.,
$\mathrm{L}_{\bm{a}}(\bm{u})\neq\mathrm{L}_{\bm{a}}(\bm{v}).$
\end{defn}

\begin{remark}
The graphs introduced in this paper and by Shannon \cite{1956} share the same vertex set. 
However, the former connects distinguishable pairs of vertices with edges, whereas the latter connects indistinguishable pairs of vertices. Hence, these two graphs are complements of each other and provide equivalent representations of the labeling channel.
\end{remark}

\begin{defn}
Let $\mathcal{C}_n$ be a set of $n$-length sequences and 
$\{\mathcal{C}_n\}$ be a sequence of such sets 
indexed by $n$. The \emph{asymptotic rate} of $\{\mathcal{C}_n\}$ is 
$R(\{\mathcal{C}_n\}) \triangleq 
\lim\limits_{n \to \infty} \frac{1}{n} \log |\mathcal{C}_n|$,
if it exists, where $|\mathcal{C}_n|$ denotes the cardinality of $\mathcal{C}_n$. If the sequences in $\mathcal{C}_n$ are pairwise 
distinguishable for the graph $G(\bm{a})$, then $\mathcal{C}_n$ is called a \emph{code} of 
length-$n$ for $G(\bm{a})$, and the sequences in $\mathcal{C}_n$ are called \emph{codewords}.
\end{defn}

\begin{defn}
 Let $\{\widehat{\mathcal{C}}_n\}$ be a sequence of codes for the graph $G(\bm{a})$ such that for all $n$, $\widehat{\mathcal{C}}_n$ achieves the largest cardinality of a code of length-$n$ for $G(\bm{a})$. The \emph{zero-error capacity} of the channel is defined as
\[
C_0({\bm{a}}) = \lim_{n \to \infty} \frac{\log |\widehat{\mathcal{C}}_n|}{n}.
\]
\noindent
According to Fekete’s Lemma \cite{fekete1923verteilung}, this limit always exists because $\log |\widehat{\mathcal{C}}_n|$ is \emph{superadditive}, i.e.,
$\log |\widehat{\mathcal{C}}_{m+n}| \geq \log |\widehat{\mathcal{C}}_m| + \log |\widehat{\mathcal{C}}_n|, \forall m, n \geq 1.$
Clearly, $0 \leq C_0({\bm{a}}) \leq \log |\mathcal{X}|$. A sequence of codes $\{\mathcal{C}_n\}$ is said to be \textit{asymptotically optimal} for $G(\bm{a})$ if
$R(\{\mathcal{C}_n\}) = C_0({\bm{a}}).$
\end{defn}

We apply the method in~\cite{2018} to construct a new code based on an exists one. 
Let $\mathcal{C}_n$ be a set of length-$n$ sequences and 
$\{\mathcal{C}_n\}$ be a sequence of such sets indexed by $n$. 
For any $n$, by adding an arbitrary prefix $\bm{u}$
and an arbitrary suffix $\bm{v}$ to all sequences in 
$\mathcal{C}_n$, we obtain a new set of sequences of length 
$t \triangleq n + \ell(\bm{u}) + \ell(\bm{v})$,
denoted by $\mathcal{C}'_t$. 
Let $\{\mathcal{C}'_t\}$ be a sequence of such sets indexed by $t$.

\begin{lemma}[Lemma~2 in~\cite{2018}]
\label{lem:rate_invariance}
We have
$R(\{\mathcal{C}_n\}) = R(\{\mathcal{C}'_t\}),$
i.e.,
$\lim_{n\to\infty} \frac{1}{n}\log |\mathcal{C}_n|
=
\lim_{t\to\infty} \frac{1}{t}\log |\mathcal{C}'_t|.$
\end{lemma}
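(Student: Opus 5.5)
The claim follows by comparing both limits through the ratio of lengths. For each $n$ we have $|\mathcal{C}'_t|=|\mathcal{C}_n|$ with $t=n+\ell(\bm{u})+\ell(\bm{v})$. Throughout, $\ell(\bm{u})$ and $\ell(\bm{v})$ are fixed, so $c\triangleq \ell(\bm{u})+\ell(\bm{v})$ is a constant independent of $n$, as in the setting of \cite{2018}. Then
\[
\frac{1}{t}\log|\mathcal{C}'_t| = \frac{n}{n+c}\cdot\frac{1}{n}\log|\mathcal{C}_n|.
\]

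The first step is to justify $|\mathcal{C}'_t|=|\mathcal{C}_n|$. The map $\bm{x}\mapsto \bm{u}\circ\bm{x}\circ\bm{v}$ is injective. Indeed, if $\bm{u}\circ\bm{x}\circ\bm{v}=\bm{u}\circ\bm{x}'\circ\bm{v}$, then reading off positions $\ell(\bm{u})$ through $\ell(\bm{u})+n-1$ gives $\bm{x}=\bm{x}'$. So no two codewords collapse.

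The second step is the limit itself. As $n\to\infty$ we have $t\to\infty$, and $n/(n+c)\to 1$. Hence, whenever $R(\{\mathcal{C}_n\})$ exists, the right-hand side converges to the same value by the product rule for limits. The value is finite because $0\le \frac1n\log|\mathcal{C}_n|\le\log|\mathcal{X}|$.

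Conversely, since $t\mapsto n=t-c$ is a bijection between the index sets for large indices, $\frac1n\log|\mathcal{C}_n|=\frac{t}{t-c}\cdot\frac1t\log|\mathcal{C}'_t|$. Existence of either limit therefore implies existence of the other, and the two are equal.

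The only point needing care is the indexing. If the prefix and suffix were allowed to depend on $n$, the argument would require $\ell(\bm{u})+\ell(\bm{v})=o(n)$; the same computation then still works, because $n/t\to1$. I would state the fixed-length (or sublinear) assumption explicitly and note that this is the main, albeit mild, obstacle. Beyond that, the proof is a routine limit computation.
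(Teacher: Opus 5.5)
Your proof is correct and is the standard argument. The paper gives no proof of its own; it imports the lemma from \cite{2018}. The setting you make explicit (constant $\ell(\bm{u})+\ell(\bm{v})$, injectivity giving $|\mathcal{C}'_t|=|\mathcal{C}_n|$, and $n/t\to 1$) is exactly how the lemma is used in the proof of Theorem~\ref{thm}, where the appended suffix is the fixed label $\bm{\alpha}$. One trivial caveat: your bound $0\le\frac{1}{n}\log|\mathcal{C}_n|$ presupposes $\mathcal{C}_n\neq\emptyset$, which the paper establishes for $n\ge 2\ell(\bm{\alpha})$.
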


Next, we show that the DNA labeling problem can be formulated as a zero-error capacity problem.
\begin{lemma}\label{co2}
For any label sequence $\bm{a}$,
$C_{\mathrm{L}}(\bm{a})=C_{0}(\bm{a}).$
\end{lemma}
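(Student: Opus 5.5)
The plan is to show that the largest length-$n$ code for $G(\bm{a})$ has exactly $|\mathcal{F}_n(\bm{a})|$ codewords. The identity $C_{\mathrm{L}}(\bm{a})=C_0(\bm{a})$ then follows by taking normalized logarithms and letting $n\to\infty$.

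The channel $\mathrm{LC}(\bm{a})$ is deterministic: for input $\bm{x}$, the only output with positive probability is $\mathrm{L}_{\bm{a}}(\bm{x})$. Hence, by Lemma~\ref{co1} and the definition of distinguishability, two inputs $\bm{x},\bm{x}'\in\mathcal{X}^n$ are distinguishable if and only if $\mathrm{L}_{\bm{a}}(\bm{x})\neq\mathrm{L}_{\bm{a}}(\bm{x}')$. The converse direction is immediate here, since equal outputs give intersecting output sets.

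With this characterization, I will prove the two bounds on $|\widehat{\mathcal{C}}_n|$ separately.
\begin{itemize}
\item \emph{Lower bound.} For each $\bm{y}\in\mathcal{F}_n(\bm{a})$, pick one preimage $\bm{x}_{\bm{y}}\in\mathcal{X}^n$ with $\mathrm{L}_{\bm{a}}(\bm{x}_{\bm{y}})=\bm{y}$. The chosen sequences have pairwise distinct labeling sequences, so they are pairwise distinguishable and form a code. Thus $|\widehat{\mathcal{C}}_n|\ge|\mathcal{F}_n(\bm{a})|$.
\item \emph{Upper bound.} Any code $\mathcal{C}_n$ has pairwise distinct images under $\mathrm{L}_{\bm{a}}$, so $\mathrm{L}_{\bm{a}}$ restricted to $\mathcal{C}_n$ is an injection into $\mathcal{F}_n(\bm{a})$. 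Thus $|\mathcal{C}_n|\le|\mathcal{F}_n(\bm{a})|$.
\end{itemize}
Together these give $|\widehat{\mathcal{C}}_n|=|\mathcal{F}_n(\bm{a})|$ for every $n\ge1$.

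Finally, $C_{\mathrm{L}}(\bm{a})$ is defined with a $\limsup$, whereas $C_0(\bm{a})=\lim_n \frac1n\log|\widehat{\mathcal{C}}_n|$ exists by Fekete's lemma via superadditivity. Since the two sequences coincide term by term, the $\limsup$ equals this limit, and $C_{\mathrm{L}}(\bm{a})=C_0(\bm{a})$.

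I do not expect a substantive obstacle. The only point needing care is that codes are defined via distinguishability for $G(\bm{a})$, whose vertices are length-$\ell_{\max}$ blocks, while the codewords are length-$n$ sequences. I will read ``pairwise distinguishable'' as distinguishability for $\mathrm{LC}(\bm{a})$ on $\mathcal{X}^n$, which is what Lemma~\ref{co1} describes: some position $i$ where a label matches in one sequence but not the other. That condition is exactly $\mathrm{L}_{\bm{a}}(\bm{x})\ne\mathrm{L}_{\bm{a}}(\bm{x}')$.
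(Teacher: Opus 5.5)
Your proposal is correct and follows the same argument as the paper. You bound $|\widehat{\mathcal{C}}_n|$ above by $|\mathcal{F}_n(\bm{a})|$ because $\mathrm{L}_{\bm{a}}$ is injective on any code, and below by choosing one preimage per labeling sequence, so $|\widehat{\mathcal{C}}_n|=|\mathcal{F}_n(\bm{a})|$ for every $n$; your remarks on the $\limsup$ versus the Fekete limit and on the ``only if'' direction of Lemma~\ref{co1} (which follows from the channel being deterministic) state explicitly what the paper leaves implicit.
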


\begin{IEEEproof}
Let \(\widehat{\mathcal{C}}_n\) be an optimal zero-error code for the graph $G(\bm{a})$ and define 
\(\mathcal{Y}_n \triangleq \{ \mathrm{L}_{\bm{a}}(\bm{x}) \mid \bm{x} \in \widehat{\mathcal{C}}_n \}\). 
Since $\mathrm{L}_{\bm{a}}(\cdot)$ is deterministic and distinct codewords in \(\widehat{\mathcal{C}}_n\) produce distinct outputs, we have \(|\mathcal{Y}_n| = |\widehat{\mathcal{C}}_n|\). 
Moreover, since \(\mathcal{Y}_n \subseteq \mathcal{F}_n(\bm{a})= \{ \mathrm{L}_{\bm{a}}(\bm{x}) \mid \bm{x} \in \mathcal{X}^n\}\), we have 
\(|\widehat{\mathcal{C}}_n| \leq |\mathcal{F}_n(\bm{a})|\).

Conversely, for each output $\bm{y} \in \mathcal{F}_n(\bm{a})$, select one \(\bm{x}(\bm{y}) \in \{ \bm{x} \in \mathcal{X}^n \mid \mathrm{L}_{\bm{a}}(\bm{x}) = \bm{y} \}\), 
and let 
\(\mathcal{X}_n \triangleq \{ \bm{x}(\bm{y}) \mid \bm{y} \in \mathcal{F}_n(\bm{a}) \}\). 
By Lemma~\ref{co1}, \(\mathcal{X}_n\) is a code with \(|\mathcal{X}_n| = |\mathcal{F}_n(\bm{a})|\). 
Since \(\widehat{\mathcal{C}}_n\) is optimal, it follows that 
\(|\mathcal{F}_n(\bm{a})|=|\mathcal{X}_n| \leq |\widehat{\mathcal{C}}_n|\).

Combining the two bounds yields \(|\widehat{\mathcal{C}}_n| = |\mathcal{F}_n(\bm{a})|\), which completes the proof.
\end{IEEEproof}

\subsection{Key Structural Parameters of Labels}

\begin{defn}[Definition~3 in \cite{oneedge}]\label{5}
Let $\bm{x}$ and $\bm{y}$ be two sequences with $\ell(\bm{x}) \ge 1$.
\begin{itemize}
\item 
The sequence $\bm{x}$ is a \emph{unit} of $\bm{y}$ if 
$\ell(\bm{x}) \le \ell(\bm{y})$ and there exists an integer 
$t$ such that 
$y_i = x_{(i - t) \bmod \ell(\bm{x})}$ for any 
$i \in \mathbb{Z}[0, \ell(\bm{y}) - 1]$. 
A \emph{unit} $\bm{x}$ of $\bm{y}$ is called a \emph{prefix-unit} (resp. \emph{suffix-unit}) of $\bm{y}$ if $t = 0$ (resp. $t = \ell(\bm{y}) \bmod \ell(\bm{x})$). 
A prefix-unit $\bm{x}$ is called a \emph{dividing-unit} of $\bm{y}$ if 
$\ell(\bm{x})$ divides $\ell(\bm{y})$.
Clearly, if $\bm{x}$ is a prefix-unit (resp. suffix-unit) of $\bm{y}$,
then $\bm{x}\preceq_{\mathrm p}\bm{y}$ (resp. $\bm{x}\preceq_{\mathrm s}\bm{y}$).

\item 
    Let ${P}(\bm{x})$ denote the shortest prefix-unit of $\bm{x}$, where ${P}(\bm{x})$ is also referred to as the \emph{wagon} of $\bm{x}$. 
   Define ${S}(\bm{x})$ to be the suffix of $\bm{x}$ obtained by removing its prefix ${P}(\bm{x})$, i.e.,
    $\bm{x}={P}(\bm{x})\circ{S}(\bm{x}).$
In particular, if ${P}(\bm{x}) = \bm{x}$, then ${S}(\bm{x}) = \bm{\varepsilon}$, the empty sequence.    
    Clearly,
    $\ell({P}(\bm{x}))\in\mathbb{Z}[1,\ell(\bm{x})], \ell({S}(\bm{x})) \in\mathbb{Z}[0,\ell(\bm{x})-1]$.
    
   \item 
    Let \(D(\bm{x})\) denote the shortest dividing-unit of \(\bm{x}\).  
    Clearly, $\bm{\alpha}=\left(D(\bm{x})\right)^{\ell(\bm{x})/\ell(D(\bm{x}))}.$

    \item Let ${P}(\bm\varepsilon) = {S}(\bm\varepsilon) = D(\bm\varepsilon)= \bm\varepsilon.$
\end{itemize}        
\end{defn}

\begin{example}
\begin{itemize}
    \item Consider the sequence $\bm{x} = 01010101$. 
The sequence $10$ is a unit of $\bm{x}$, but it is neither a prefix-unit nor a suffix-unit. 
The sequence $010101$ is both a prefix-unit and a suffix-unit of $\bm{x}$, but it is not a dividing-unit.  
The sequences $01$ and $0101$ are both dividing-units of $\bm{x}$.  
Moreover, $P(\bm{x}) = D(\bm{x}) = 01$ and $S(\bm{x}) = 010101$.

    \item Consider the sequence $\bm{x} = 01010$. 
The sequences $01$, $0101$, and $01010$ are prefix-units of $\bm{x}$. 
The sequences $10$, $1010$, and $01010$ are suffix-units of $\bm{x}$. 
The sequence $01010$ is a dividing-unit of $\bm{x}$.  
Moreover, $P(\bm{x}) = 01$, $S(\bm{x}) = 010$, and $D(\bm{x}) = 01010$.
\end{itemize}
\end{example}

\begin{lemma}\label{lm:lemma2}
Let $\bm{x}$ and $\bm{y}$ be two sequences with $\ell(\bm{x}) \ge 1$.
\begin{enumerate}[(1)]
\item (Lemma~2 in \cite{oneedge}) If \(\bm{x}\) is both a prefix-unit and a suffix-unit of \(\bm{y}\), 
then the sequence \(\bm{x}_{[0;l]}\) is a dividing-unit of \(\bm{x}\) and \(\bm{y}\), 
where \(l\) is the largest common divisor of \(\ell(\bm{x})\) and \(\ell(\bm{y})\), 
denoted by \(\gcd(\ell(\bm{x}), \ell(\bm{y}))\).

\item
If $\bm{x}$ is a unit of $\bm{y}$, then for any
$i \in \mathbb{Z}[0,\ell(\bm{y})-\ell(\bm{x})]$,
the sequence $\bm{y}_{[i;\ell(\bm{x})]}$ is also a unit of $\bm{y}$.
In particular, $\bm{y}_{[0;\ell(\bm{x})]}$ and
$\bm{y}_{[\ell(\bm{y})-\ell(\bm{x});\,\ell(\bm{x})]}$
are respectively a prefix-unit and a suffix-unit of $\bm{y}$.

\item 
For any symbol $c \in \mathcal{X}$ and any sequence $\bm{z} \in \mathcal{X}^n$ with $n\ge1$, let $N_c(\bm{z})$ denote the number of occurrences of $c$ in $\bm{z}$.
Then, for any two units $\bm{u}$ and $\bm{v}$ of $\bm{y}$ with $\ell(\bm{u})=\ell(\bm{v})$, we have $N_c(\bm{u})=N_c(\bm{v})$.

\end{enumerate}
\end{lemma}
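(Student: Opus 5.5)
Part (1) is cited from Lemma~2 in \cite{oneedge}, so the plan is to take it as given and prove parts (2) and (3) directly from the definition of a unit. Throughout, I will work with the defining relation: $\bm{x}$ is a unit of $\bm{y}$ with shift $t$ exactly when $y_j = x_{(j-t)\bmod \ell(\bm{x})}$ for every $j \in \mathbb{Z}[0,\ell(\bm{y})-1]$. Write $p \triangleq \ell(\bm{x})$.

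For (2), fix $i \in \mathbb{Z}[0,\ell(\bm{y})-p]$ and set $\bm{z} = \bm{y}_{[i;p]}$, so that $z_k = y_{i+k} = x_{(i+k-t)\bmod p}$ for $k \in \mathbb{Z}[0,p-1]$.
\begin{itemize}
\item The key observation is that $\bm{z}$ is a cyclic rotation of $\bm{x}$. Indeed, as $k$ ranges over a full residue system mod $p$, $z_{k \bmod p} = x_{(k+i-t)\bmod p}$.
\item Take the new shift $t' = t - i$; any representative mod $p$ works. For each $j$ we then check
\[
z_{(j-t')\bmod p} = x_{(j-t'+i-t)\bmod p} = x_{(j-t)\bmod p} = y_j .
\]
\item This holds because the index shifts compose modulo $p$, and it shows that $\bm{z}$ is a unit of $\bm{y}$.
\item For the particular cases, take $i=0$. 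Then $y_j = z_{j \bmod p}$ directly, since $z_k = y_k$ for $k<p$ and $\bm{y}$ is $p$-periodic. So $t'=0$ is a valid shift and $\bm{y}_{[0;p]}$ is a prefix-unit.
\item Take $i = \ell(\bm{y})-p$. Then $t' \equiv t - \ell(\bm{y}) + p \equiv -(\ell(\bm{y})-t)$. Choosing the shift $\ell(\bm{y}) \bmod p$ gives $z_{(j-\ell(\bm{y}))\bmod p} = y_j$, which we verify at $j = \ell(\bm{y})-p+k$. So $\bm{y}_{[\ell(\bm{y})-p;p]}$ is a suffix-unit.
\end{itemize}
The only delicate point is that the shift $t$ is defined only mod $p$. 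I will phrase everything in residues to avoid sign issues.

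For (3), let $\bm{u},\bm{v}$ be units of $\bm{y}$ with common length $p$ and shifts $t_u,t_v$.
\begin{itemize}
\item Both $\bm{u}$ and $\bm{v}$ satisfy $y_j = u_{(j-t_u)\bmod p} = v_{(j-t_v)\bmod p}$ for all $j$.
\item Since $\ell(\bm{y}) \ge p$, the window $j\in\mathbb{Z}[0,p-1]$ covers each residue mod $p$ exactly once for both indexings. Hence $\bm{y}_{[0;p]}$ is a cyclic rotation of $\bm{u}$ and also a cyclic rotation of $\bm{v}$.
\item Cyclic rotations preserve symbol counts, so $N_c(\bm{u}) = N_c(\bm{y}_{[0;p]}) = N_c(\bm{v})$.
\end{itemize}
This is routine once the rotation observation from (2) is in hand. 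The main care point throughout is bookkeeping of the modular shift in the suffix-unit case.
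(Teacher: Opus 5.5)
The paper gives no proof here: it cites part (1) and treats parts (2) and (3) as immediate from the definition of a unit, so your direct-verification plan is the natural one. Your part (3) is correct, since it only uses that $\bm{y}_{[0;p]}$ is a cyclic rotation of every unit of length $p$. The real problem is the central computation in part (2), which is wrong as written.

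From $z_k=x_{(i+k-t)\bmod p}$ you get $z_{(j-t')\bmod p}=x_{(j-t'+i-t)\bmod p}$. For the index identity with $x_{(j-t)\bmod p}$ you need $t'\equiv i\pmod{p}$. With your choice $t'=t-i$ the index is $(j+2i-2t)\bmod p$ instead, and this genuinely fails. Take $\bm{x}=01$ and $\bm{y}=101$, which is a unit with $t=1$, and take $i=0$. Then $\bm{z}=10$, and your shift $t'=1$ gives $z_{(0-1)\bmod 2}=z_1=0\neq y_0=1$. The same slip produces the congruence $t'\equiv-(\ell(\bm{y})-t)$ in your suffix case, which contradicts the shift $\ell(\bm{y})\bmod p$ you then choose.

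The fix is one line: take $t'=i$. Then $z_{(j-i)\bmod p}=x_{(j-i+i-t)\bmod p}=x_{(j-t)\bmod p}=y_j$ for every $j\in\mathbb{Z}[0,\ell(\bm{y})-1]$. Equivalently, $\bm{y}$ is $p$-periodic, since $y_{j+p}=x_{(j+p-t)\bmod p}=x_{(j-t)\bmod p}=y_j$ whenever $j+p\le\ell(\bm{y})-1$. So $\bm{y}_{[i;p]}$ is one full period starting at position $i$.

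The point you missed is that the new shift does not depend on $t$ at all. That is exactly what makes the ``in particular'' clause immediate. The case $i=0$ gives shift $0$. The case $i=\ell(\bm{y})-p\equiv\ell(\bm{y})\pmod{p}$ gives shift $\ell(\bm{y})\bmod p$. Your separate checks of these two cases are correct on their own, since they use periodicity directly. With $t'=i$ they become unnecessary, and the inconsistent congruence should be dropped.
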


\begin{lemma}\label{lm:2}
Given a sequence $\bm{x}\in\mathcal{X}^n$ with $n \ge 1$ and $l\in\mathbb{Z}[1,\ell(\bm{x})]$, the following statements are equivalent:
\begin{enumerate}[(1)]
    \item \(\bm{x}_{[0; l]}\) is a prefix-unit of \(\bm{x}\);
    \item \(\bm{x}_{[\ell(\bm{x})-l; l]}\) is a suffix-unit of \(\bm{x}\);    
    \item $\bm{x}_{[0; \ell(\bm{x}) - l]} = \bm{x}_{[l; \ell(\bm{x}) - l]}$.
\end{enumerate}
\end{lemma}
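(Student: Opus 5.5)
The plan is to prove the cycle of implications (1)$\Rightarrow$(3)$\Rightarrow$(1) and (2)$\Leftrightarrow$(3), working directly from the definition of a unit in Definition~\ref{5}. Write $n=\ell(\bm{x})$ and $\bm{p}=\bm{x}_{[0;l]}$.

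For (1)$\Rightarrow$(3), suppose $\bm{p}$ is a prefix-unit of $\bm{x}$. With $t=0$ this means $x_i=p_{i\bmod l}=x_{i\bmod l}$ for every $i\in\mathbb{Z}[0,n-1]$. It follows that $x_{i+l}=x_{(i+l)\bmod l}=x_{i\bmod l}=x_i$ for all $i\in\mathbb{Z}[0,n-l-1]$, which is exactly $\bm{x}_{[0;n-l]}=\bm{x}_{[l;n-l]}$. When $l=n$ both sides are $\bm{\varepsilon}$, so this case is trivial.

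For (3)$\Rightarrow$(1), assume $x_{i+l}=x_i$ for all $i\in\mathbb{Z}[0,n-l-1]$. Induction on $\lfloor i/l\rfloor$ then gives $x_i=x_{i\bmod l}=p_{i\bmod l}$ for all $i\in\mathbb{Z}[0,n-1]$. Since also $\ell(\bm{p})=l\le n$, the sequence $\bm{p}$ is a unit of $\bm{x}$ with $t=0$, i.e.\ a prefix-unit.

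For (2)$\Leftrightarrow$(3), let $\bm{q}=\bm{x}_{[n-l;l]}$ and $r=n\bmod l$. By definition, $\bm{q}$ is a suffix-unit exactly when $x_i=q_{(i-r)\bmod l}$ for all $i$. Write $q_j=x_{n-l+j}$ and note that $n-l\equiv r \pmod l$, so the index $n-l+((i-r)\bmod l)$ is congruent to $i$ modulo $l$ and lies in $\mathbb{Z}[n-l,n-1]$. Hence (2) is equivalent to the statement that each $x_i$ equals the unique $x_{i'}$ with $i'\equiv i \pmod l$ and $i'\in\mathbb{Z}[n-l,n-1]$. This condition is equivalent to $x_i=x_{i+l}$ for all valid $i$. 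The backward direction follows by iterating the shift until the index lands in the last window. For the forward direction, $x_i$ and $x_{i+l}$ both equal the same entry of the last window. So (2) is equivalent to (3).

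The only delicate point is the bookkeeping in the suffix-unit case, namely checking that $t=n\bmod l$ aligns the last window correctly; I expect no real obstacle beyond that.
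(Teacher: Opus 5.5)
Your proof is correct, but it is organized differently from the paper's. You route both equivalences through (3) and argue index by index, proving (1)$\Leftrightarrow$(3) and (2)$\Leftrightarrow$(3) separately. The paper instead proves the cycle (1)$\Rightarrow$(2)$\Rightarrow$(3)$\Rightarrow$(1) using block decompositions. In your notation, with $k=\lfloor n/l\rfloor$, the paper's three steps are:
\begin{enumerate}[(i)]
\item (1)$\Rightarrow$(2) by citing Lemma~\ref{lm:lemma2}(2): the last length-$l$ window of a sequence having a length-$l$ unit is a suffix-unit.
\item (2)$\Rightarrow$(3) by writing $\bm{x}=\bm{q}_{[l-r;r]}\circ\bm{q}^{k}$ and observing that both $\bm{x}_{[0;n-l]}$ and $\bm{x}_{[l;n-l]}$ equal $\bm{q}_{[l-r;r]}\circ\bm{q}^{k-1}$.
\item (3)$\Rightarrow$(1) by propagating the window equality block by block to obtain $\bm{x}=\bm{p}^{k}\circ\bm{p}_{[0;r]}$. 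This is your induction on $\lfloor i/l\rfloor$ written in block form.
\end{enumerate}

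Your version is self-contained. Your direct (3)$\Rightarrow$(2) argument, via the map sending $i$ to its representative in $\mathbb{Z}[n-l,n-1]$, replaces the appeal to Lemma~\ref{lm:lemma2}(2), which the paper states without proof. It also makes explicit why $t=n\bmod l$ is the correct offset, namely $n-l\equiv r\pmod l$. The paper's block form is more compact: it reads that alignment directly off the definition in the decomposition $\bm{x}=\bm{q}_{[l-r;r]}\circ\bm{q}^{k}$.
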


\begin{IEEEproof}
Let $\bm{u}=\bm{x}_{[0;l]}$, $\bm{v}=\bm{x}_{[\ell(\bm{x})-l;l]}$,
$p=\lfloor \ell(\bm{x})/l \rfloor$, and
$r=\ell(\bm{x}) \bmod l$.

By Lemma~\ref{lm:lemma2}(2),
Condition~(1) clearly implies Condition~(2).
When Condition~(2) holds, we have 
$\bm{x}=\bm{v}_{[l-r;\,r]}\circ\bm{v}^{p}.$ 
Thus, both $\bm{x}_{[0;\,\ell(\bm{x})-l]}$ and $\bm{x}_{[l;\,\ell(\bm{x})-l]}$ are equal to
$\bm{v}_{[l-r;\,r]}\circ\bm{v}^{p-1}$, i.e., Condition~(3) holds.
Hence, Condition~(2) implies Condition~(3).
When Condition~(3) holds, we have
$$\bm{x}_{[(p-1)l;\,l]}=\bm{x}_{[(p-2)l;\,l]}=\cdots=\bm{x}_{[0;\,l]}=
\bm{u},$$
and $$\bm{x}_{[pl;\,r]}=\bm{x}_{[(p-1)l;\,r]}=\cdots=\bm{x}_{[0;\,r]}=\bm{u}_{[0;\,r]}.$$
Hence, $\bm{x}=\bm{u}^{p}\circ\bm{u}_{[0;\,r]}.$
That is, $\bm{u}$ is a prefix-unit of $\bm{x}$. 
Thus, Condition~(3) implies Condition~(1).
Therefore, Conditions~(1), (2), and (3) are equivalent.
\end{IEEEproof}

\begin{example}
Consider $\bm{x} = 01010$. Then,
$\bm{x}_{[0;2]} = 01$ is a prefix-unit of $\bm{x}$, 
$\bm{x}_{[3;2]} = 10$ is a suffix-unit of $\bm{x}$, 
and $\bm{x}_{[0;3]} = \bm{x}_{[2;3]} = 010$. 
This is consistent with Lemma~\ref{lm:2}.
\end{example}

\begin{figure}[!t]
  \centering
  \begin{tikzpicture}[
    every node/.style={inner sep=1pt},
    scale=0.5
  ]

    \node (AG) at (0,0) {AG};

    \foreach \name/\angle in {
      AA/0, AT/24, AC/48, TA/72, TT/96, TC/120, TG/144,
      CA/168, CT/192, CC/216, CG/240,
      GA/264, GT/288, GC/312, GG/336
    } {
      \node (\name) at (\angle:4) {\name};
      \draw (AG) -- (\name);
    }

  \end{tikzpicture}
  \caption{Star graph of $\mathrm{LC}(\mathrm{AG})$.}
  \label{fig:AG-star}
\end{figure}

\section{The Labeling Capacity of a Single Label}\label{sec:capacity}

In this section, we determine the labeling capacity of a single label. 
For a single label $\bm{\alpha}$, according to Definition~\ref{df:single_graph}, 
the corresponding graph $G(\bm{\alpha})$ representing the labeling channel 
$\mathrm{LC}(\bm{\alpha})$ is a star graph. 
The unique central vertex is the sequence $\bm{\alpha}$, while all other 
vertices correspond to the sequences in 
$\mathcal{X}^{\ell(\bm{\alpha})}\setminus\{\bm{\alpha}\}$.
For example, when $\mathcal{X}=\{\mathrm{A,C,G,T}\}$, 
the graph $G(\mathrm{AG})$ corresponding to the channel 
$\mathrm{LC}(\mathrm{AG})$ is the star graph shown in Fig.~\ref{fig:AG-star}.
To characterize the labeling capacity of a single label, we introduce several 
structural concepts of $\bm{\alpha}$.

\begin{defn}\label{7}
\begin{itemize}
    \item 
    Let 
    $w\triangleq \left\lfloor \frac{\ell(\bm{\alpha})}{\ell({P}(\bm{\alpha}))} \right\rfloor$ and   
    $\bm{\theta} \triangleq
    \bm{\alpha}_{[w\ell({P}(\bm{\alpha}));\, \ell(\bm{\alpha}) - w\ell({P}(\bm{\alpha}))]},$
    where \(\bm{\theta}\) is referred to as the \emph{tail} of \(\bm{\alpha}\).
    Clearly, the sequence \(\bm{\alpha}\) can be written as a concatenation of $w$ wagons and the tail \(\bm{\theta}\), i.e.,
    $\bm{\alpha}=({P}(\bm{\alpha}))^{w}\circ\bm{\theta}.$
    
\item  
Let $\bm{\theta}^{\langle 0\rangle}\triangleq{P}(\bm{\alpha}) \circ \bm{\theta},$ which is the concatenation of one wagon and the tail.
Clearly, ${P}(\bm{\alpha})$ is a prefix-unit of
$\bm{\theta}^{\langle 0\rangle}$, 
$\bm{\theta}^{\langle 0\rangle}\preceq_\mathrm{p}\bm{\alpha}$, and $\bm{\theta}^{\langle 0\rangle}\preceq_\mathrm{s}\bm{\alpha}$.

\end{itemize}
\end{defn}

\begin{defn}
Let $\bm{x}\in\mathcal{X}^n$ with $n\ge1$.
\begin{itemize}
\item 
Define the $i$-fold application of ${S}(\cdot)$ as
\[
{S}^{i}(\bm{x})
\triangleq
\begin{cases}
\bm{x}, & i = 0,\\
{S}\!\left({S}^{\,i-1}(\bm{x})\right), & i = 1,2,3,\cdots.
\end{cases}
\]

\item
Let
$\mathscr{S}(\bm{x})
\triangleq
\left\{S^k(\bm{x})\,\middle|\,k\in\mathbb{Z}^{+}\right\}.$
Clearly
$\bm{\varepsilon}=S^{|\mathscr{S}(\bm{x})|}(\bm{x})
\prec_{\mathrm{s}}S^{|\mathscr{S}(\bm{x})|-1}(\bm{x})
\prec_{\mathrm{s}}\cdots
\prec_{\mathrm{s}}S(\bm{x})
\prec_{\mathrm{s}}\bm{x}.$

\item
Let
$\Theta\triangleq
\mathscr{S}(\bm{\theta}^{\langle0\rangle})$
and let
$\bm{\theta}^{\langle k\rangle}\triangleq
S^k(\bm{\theta}^{\langle0\rangle}),
k\in\mathbb{Z}[1,|\Theta|].$

\item 
Let
$\Phi\triangleq \left\{ \bm{\alpha}_{[0;\ell(\bm{\alpha})-\ell(\bm{\varphi})]} \mid \bm{\varphi}\in\Theta\setminus\{\bm{\theta}\} \right\}$.
Clearly, for any \(\bm{\phi}\in\Phi\),
$\ell(\bm{\phi})\bmod\ell(P(\bm{\alpha}))\neq0.$
\end{itemize}
\end{defn}

\begin{example}\label{exp}
Consider the following labels:
\begin{itemize}
    \item When $\bm{\alpha} = 00100\,00100$, we have 
${P}(\bm{\alpha})={S}(\bm{\alpha})=D(\bm{\alpha})=00100$,
$\bm{\theta} = \bm{\varepsilon}$, 
$\bm{\theta}^{\langle 0\rangle} = 00100$, and
$\Theta=\{00,0,\bm{\varepsilon}\}$.

    \item When $\bm{\alpha} = 0010\,0010\,0$, we have 
$P(\bm{\alpha}) = 0010$, 
$S(\bm{\alpha}) = 0010\,0$, 
$D(\bm{\alpha}) = 0010\,0010\,0$,
$\bm{\theta} = 0$, 
$\bm{\theta}^{\langle 0\rangle} = 0010\,0$, and
$\Theta=\{00,0,\bm{\varepsilon}\}$.
\end{itemize}
\end{example}

\begin{lemma}[Properties of $P(\cdot)$, $S(\cdot)$, and $D(\cdot)$]
\label{lemma:main}
Let $\bm{x}\in\mathcal{X}^n$ be arbitrary but fixed, where $n\ge1$. Then:
\begin{itemize}
    \item[(1)] $D({P}(\bm{x}))={P}(\bm{x})$.
    \item[(2)] For any $i\in\mathbb{Z}[0,n-\ell(P(\bm{x}))]$, we have $\bm{x}_{[i;\ell(P(\bm{x}))]}=P(\bm{x})$ if and only if $i\bmod\ell(P(\bm{x}))=0$.
    \item[(3)] \(S(\bm{x})\) is the longest sequence $\bm{u}$ such that $\bm{u}\prec_{\mathrm p}\bm{x}$
    and $\bm{u}\prec_{\mathrm s}\bm{x}.$ 
    \item[(4)] $|\mathscr{S}(\bm{x})|\leq\ell(\bm{x})$, with equality if and only if $\ell(P(\bm{x}))=1$.
\end{itemize}
\end{lemma}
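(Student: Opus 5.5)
We prove the four items in order, relying on Lemma~\ref{lm:lemma2} and Lemma~\ref{lm:2}. Throughout, write $p=\ell(P(\bm{x}))$. The basic fact we use is Lemma~\ref{lm:2}: $\bm{x}_{[0;l]}$ is a prefix-unit of $\bm{x}$ exactly when $\bm{x}$ has period $l$, i.e., $x_i=x_{i+l}$ for all valid $i$. Hence $p$ is the smallest period of $\bm{x}$.

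For (1), we argue by contradiction. Suppose $D(P(\bm{x}))$ has length $d<p$ with $d\mid p$. Then $P(\bm{x})=(D(P(\bm{x})))^{p/d}$. Since $\bm{x}$ is $P(\bm{x})$ repeated and then truncated, $\bm{x}$ is also $D(P(\bm{x}))$ repeated and truncated. So $D(P(\bm{x}))$ is a shorter prefix-unit of $\bm{x}$, contradicting the minimality of $P(\bm{x})$.

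For (2), the ``if'' direction is immediate from the periodic structure of $\bm{x}$. For ``only if'', suppose $\bm{x}_{[i;p]}=P(\bm{x})$ with $0<r:=i\bmod p$. By Lemma~\ref{lm:lemma2}(2), $\bm{x}_{[i;p]}$ is also a unit of $\bm{x}$ with shift $i$. Because $\bm{x}_{[i;p]}=\bm{x}_{[0;p]}$, the cyclic word $P(\bm{x})$ equals its rotation by $r$. A word equal to its rotation by $r$ is a power of its prefix of length $\gcd(r,p)<p$. This contradicts (1), since $D(P(\bm{x}))=P(\bm{x})$ means $P(\bm{x})$ is primitive. This step can also be phrased via Lemma~\ref{lm:lemma2}(1), treating $P(\bm{x})$ as both a prefix-unit and a suffix-unit of $P(\bm{x})\circ P(\bm{x})$ shifted appropriately.

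For (3), note that a proper border $\bm{u}$ of $\bm{x}$ (both a proper prefix and a proper suffix) of length $n-l$ corresponds exactly to the condition $\bm{x}_{[0;n-l]}=\bm{x}_{[l;n-l]}$ with $l\ge1$. By Lemma~\ref{lm:2}(3), this says $\bm{x}_{[0;l]}$ is a prefix-unit. Maximizing $\ell(\bm{u})$ is therefore the same as minimizing $l$, which gives $l=p$ and $\bm{u}=\bm{x}_{[p;n-p]}=S(\bm{x})$.

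For (4), each application of $S$ shortens the word by at least one, since $S(\bm{y})$ is a proper suffix of $\bm{y}$. Starting from length $n$, we reach $\bm\varepsilon$ after at most $n$ steps, so $|\mathscr{S}(\bm{x})|\le n$. Equality holds iff every step removes exactly one symbol, i.e., $\ell(P(S^k(\bm{x})))=1$ for all $k$.

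If $p=1$, then $\bm{x}=c^n$ for a single symbol $c$. Every iterate $S^k(\bm{x})=c^{n-k}$ then also has wagon length $1$, so equality holds. Conversely, equality forces in particular the first step to remove one symbol, i.e., $p=1$.

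The main obstacle is the ``only if'' part of (2). It requires the classical fact that a primitive word is not a nontrivial rotation of itself. We would derive this from Lemma~\ref{lm:lemma2}(1) together with (1), rather than re-proving Fine–Wilf from scratch.
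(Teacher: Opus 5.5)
Your proposal is correct and follows essentially the same route as the paper for all four items: minimality of $P(\bm{x})$ for (1), a nontrivial self-rotation of the primitive word $P(\bm{x})$ contradicting (1) for (2), the border/prefix-unit correspondence of Lemma~\ref{lm:2} for (3), and the telescoping length count for (4). The only difference is in how (2) is phrased. Your vague ``shifted appropriately'' step is made concrete in the paper by applying Lemma~\ref{lm:lemma2}(1) with $\bm{x}_{[0;r]}$ as both a prefix-unit and a suffix-unit of $P(\bm{x})\circ\bm{x}_{[0;r]}$. This gives a dividing-unit of $P(\bm{x})$ of length $\gcd(r,\ell(P(\bm{x}))+r)=\gcd(r,\ell(P(\bm{x})))<\ell(P(\bm{x}))$, which is exactly your gcd argument.
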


\begin{IEEEproof}
\begin{itemize}
\item[(1)] Suppose for contradiction that $D(P(\bm{x})) \neq P(\bm{x})$.
Then $D(P(\bm{x}))$ is a shorter prefix-unit of $\bm{x}$ than $P(\bm{x})$, which contradicts that $P(\bm{x})$ is the shortest prefix-unit of $\bm{x}$ (Definition~\ref{5}).
Thus, $D(P(\bm{x})) = P(\bm{x})$.

\item[(2)]
Clearly, \(i\bmod \ell(P(\bm{x}))=0\) implies that
$\bm{x}_{[i;\ell(P(\bm{x}))]}=\bm{x}_{[0;\ell(P(\bm{x}))]}=P(\bm{x})$.
It remains to show that
if \(i\bmod \ell(P(\bm{x}))\neq0\), then $\bm{x}_{[i;\ell(P(\bm{x}))]}\neq P(\bm{x})$.
Suppose for contradiction that there exists
$i\in\mathbb{Z}[0,n-\ell(P(\bm{x}))]$
such that \(i\bmod \ell(P(\bm{x}))\neq0\)
and $\bm{x}_{[i;\ell(P(\bm{x}))]}=P(\bm{x})$.
Letting $r=i\bmod\ell(P(\bm{x})),$ we have
$\bm{x}_{[r;\ell(P(\bm{x}))]}
=\bm{x}_{[i;\ell(P(\bm{x}))]}
=P(\bm{x}),$
and hence
$$\bm{x}_{[0;\ell(P(\bm{x}))]}=\bm{x}_{[r;\ell(P(\bm{ x}))]}.$$
By Lemma~\ref{lm:2} and Lemma~\ref{lm:lemma2}(2),
we can further obtain that  
$\bm{x}_{[0;r]}$ is both a prefix-unit and a suffix-unit of
$\bm{x}_{[0;\ell(P(\bm{x}))+r]}
=P(\bm{x})\circ\bm{x}_{[0;r]}$. Therefore,
 by Lemma~\ref{lm:lemma2}(1), $\bm{x}_{[0;r]}$ is a dividing-unit of $P(\bm{x}) \circ \bm{x}_{[0;r]}$,
and thus also a dividing-unit of $P(\bm{x})$.
Then, $\ell(D(P(\bm{x}))) \le r < \ell(P(\bm{x}))$,
which contradicts Lemma~\ref{lemma:main}(1).
Thus, if \(i\bmod \ell(P(\bm{x}))\neq0\), then $\bm{x}_{[i;\ell(P(\bm{x}))]}\neq P(\bm{x})$.

\item[(3)] Note that $\bm{x}=P(\bm{x})\circ S(\bm{x})$. By Lemma~\ref{lm:2}, we have $S(\bm{x}) = \bm{x}_{[\ell(\bm{x})-\ell(S(\bm{x}));\,\ell(S(\bm{x}))]}=\bm{x}_{[0;\,\ell(S(\bm{x}))]}$, i.e., $S(\bm{x})\prec_\mathrm{p}\bm{x}$ and $S(\bm{x})\prec_\mathrm{s}\bm{x}$.
Suppose for contradiction that there exists a sequence of length $l > \ell(S(\bm{x}))$ that is also both a proper prefix and a suffix of $\bm{x}$.
Then, by Lemma~\ref{lm:2}, 
$\bm{x}_{[0;\,\ell(\bm{x})-l]}$ is a prefix-unit of $\bm{x}$ with $\ell(\bm{x}) - l < \ell(P(\bm{x}))$, which contradicts that $P(\bm{x})$ is the shortest prefix-unit of $\bm{x}$ (Definition~\ref{5}).
Consequently, \(S(\bm{x})\) is the longest sequence $\bm{u}$ such that $\bm{u}\prec_{\mathrm p}\bm{x}$
    and $\bm{u}\prec_{\mathrm s}\bm{x}.$

\item[(4)] By the definition of $S(\cdot)$, we have
\begin{align*}
    0&=\ell(S^{|\mathscr{S}(\bm{x})|}(\bm{x}))\\
    &=\ell(S^{|\mathscr{S}(\bm{x})|-1}(\bm{x}))
    -\ell(P(S^{|\mathscr{S}(\bm{x})|-1}(\bm{x})))\\
    &=\ell(S^{|\mathscr{S}(\bm{x})|-2}(\bm{x}))
    -\ell(P(S^{|\mathscr{S}(\bm{x})|-2}(\bm{x})))
    -\ell(P(S^{|\mathscr{S}(\bm{x})|-1}(\bm{x})))\\
    &\cdots\\
    &=\ell(\bm{x})
    -\sum_{i=0}^{|\mathscr{S}(\bm{x})|-1}
    \ell(P(S^i(\bm{x})))\\
    &\leq\ell(\bm{x})-|\mathscr{S}(\bm{x})|,
\end{align*}
where equality holds if and only if
$\ell(P(S^i(\bm{x})))=1$ for every
$i\in\mathbb{Z}[0,|\mathscr{S}(\bm{x})|-1]$, i.e., $\ell(P(\bm{x}))=1$.
\end{itemize}
\end{IEEEproof}

\begin{lemma}[Equivalent characterization of $\mathscr{S}(\bm{\alpha})$]\label{co22}
Define
$$\mathscr{S}'(\bm{\alpha})\triangleq
\left\{\bm{x}\mid
\bm{x}\prec_{\mathrm{p}}\bm{\alpha},\
\bm{x}\prec_{\mathrm{s}}\bm{\alpha}
\right\}
\quad\text{and}\quad
\mathscr{S}''(\bm{\alpha})\triangleq
\left\{(P(\bm{\alpha}))^k\circ\bm{\theta}\mid
k\in\mathbb{Z}[0,w-1]\right\}
\cup
\Theta\setminus\{\bm{\theta}\}.$$
Then, $\mathscr{S}(\bm{\alpha})=\mathscr{S}'(\bm{\alpha})=\mathscr{S}''(\bm{\alpha})$.
\end{lemma}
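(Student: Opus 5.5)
We prove $\mathscr{S}(\bm{\alpha})=\mathscr{S}'(\bm{\alpha})$ first, and then $\mathscr{S}'(\bm{\alpha})=\mathscr{S}''(\bm{\alpha})$. The one fact we use repeatedly is the following. If $\bm{u}$ is both a prefix and a suffix of $\bm{v}$, then every sequence that is both a proper prefix and a proper suffix of $\bm{u}$ is also both a prefix and a suffix of $\bm{v}$. Conversely, if $\bm{x}$ and $\bm{u}$ are both borders of $\bm{v}$ (sequences that are simultaneously a proper prefix and a proper suffix of $\bm{v}$) with $\ell(\bm{x})<\ell(\bm{u})$, then $\bm{x}$ is a border of $\bm{u}$. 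The reason is that $\bm{x}$ is a prefix of $\bm{u}$ because both are prefixes of $\bm{v}$, and similarly $\bm{x}$ is a suffix of $\bm{u}$.

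\textbf{Step 1: $\mathscr{S}(\bm{\alpha})=\mathscr{S}'(\bm{\alpha})$.} By Lemma~\ref{lemma:main}(3), $S(\bm{x})$ is the longest border of $\bm{x}$. Since the border relation is transitive, induction on $k$ shows that every $S^k(\bm{\alpha})$ with $k\ge1$ is a border of $\bm{\alpha}$. This gives $\mathscr{S}\subseteq\mathscr{S}'$.

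For the reverse inclusion, take $\bm{x}\in\mathscr{S}'(\bm{\alpha})$ and let $k\ge1$ be the largest index with $\ell(S^k(\bm{\alpha}))\ge\ell(\bm{x})$. Such a $k$ exists because $\ell(S(\bm{\alpha}))\ge\ell(\bm{x})$ by maximality of $S(\bm{\alpha})$, and the lengths strictly decrease down to $0$. If $\ell(S^k(\bm{\alpha}))=\ell(\bm{x})$, then $\bm{x}=S^k(\bm{\alpha})$, since both are prefixes of $\bm{\alpha}$. Otherwise $\bm{x}$ is a border of $S^k(\bm{\alpha})$, so by Lemma~\ref{lemma:main}(3) we get $\ell(\bm{x})\le\ell(S^{k+1}(\bm{\alpha}))$, which contradicts the choice of $k$.

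\textbf{Step 2: $\mathscr{S}'(\bm{\alpha})=\mathscr{S}''(\bm{\alpha})$.} We split $\mathscr{S}'(\bm{\alpha})$ by length and use Lemma~\ref{lm:2}: a sequence of length $\ell(\bm{\alpha})-l$ is a border of $\bm{\alpha}$ if and only if $\bm{\alpha}_{[0;l]}$ is a prefix-unit of $\bm{\alpha}$.

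\emph{Long borders.} For $l=j\ell(P(\bm{\alpha}))$ with $1\le j\le w$, the prefix $\bm{\alpha}_{[0;l]}=(P(\bm{\alpha}))^j$ is a prefix-unit. The corresponding border is $(P(\bm{\alpha}))^{w-j}\circ\bm{\theta}$, and these give exactly the first family $\{(P(\bm{\alpha}))^k\circ\bm{\theta}\mid k\in\mathbb{Z}[0,w-1]\}$.

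\emph{Borders shorter than $\bm{\theta}^{\langle0\rangle}$.} Apply Step 1 to $\bm{\theta}^{\langle0\rangle}$ in place of $\bm{\alpha}$. Since $\bm{\theta}^{\langle0\rangle}$ is a border of $\bm{\alpha}$ when $w\ge2$ (and equals $\bm{\alpha}$ when $w=1$), the borders of $\bm{\alpha}$ of length less than $\ell(\bm{\theta}^{\langle0\rangle})$ are exactly the borders of $\bm{\theta}^{\langle0\rangle}$, namely $\Theta$. Those of length at most $\ell(\bm{\theta})$ lie in $\Theta\setminus\{\bm{\theta}\}$ together with $\bm{\theta}$ itself; note that $\bm{\theta}$ is already the $k=0$ member of the first family. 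Here we must check that $\bm{\theta}\in\Theta$, or that it is covered by the first family. We will use that $S(\bm{\theta}^{\langle0\rangle})=\bm{\theta}$ whenever $\ell(P(\bm{\theta}^{\langle0\rangle}))=\ell(P(\bm{\alpha}))$, together with the case analysis below.

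\emph{Intermediate lengths.} It remains to exclude borders $\bm{x}$ with $\ell(\bm{\theta}^{\langle0\rangle})>\ell(\bm{x})$ that are not borders of $\bm{\theta}^{\langle0\rangle}$. There are none, by the transitivity fact above.

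\emph{Main obstacle.} The hardest step is showing that no border of $\bm{\alpha}$ has length in the open interval between two consecutive members of the first family, i.e., that $l$ must be a multiple of $\ell(P(\bm{\alpha}))$ whenever $l\ge\ell(P(\bm{\alpha}))$. Suppose $\bm{\alpha}_{[0;l]}$ is a prefix-unit with $l\ge\ell(P(\bm{\alpha}))$ and $l\not\equiv0\pmod{\ell(P(\bm{\alpha}))}$. Then $\bm{\alpha}_{[l;\ell(P(\bm{\alpha}))]}=\bm{\alpha}_{[0;\ell(P(\bm{\alpha}))]}=P(\bm{\alpha})$, which contradicts Lemma~\ref{lemma:main}(2), provided this window fits inside $\bm{\alpha}$, i.e., $l\le\ell(\bm{\alpha})-\ell(P(\bm{\alpha}))$.

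If the window does not fit, then $l>\ell(\bm{\alpha})-\ell(P(\bm{\alpha}))$ and the border has length less than $\ell(P(\bm{\alpha}))\le\ell(\bm{\theta}^{\langle0\rangle})$. In that case it already falls into the ``shorter than $\bm{\theta}^{\langle0\rangle}$'' case, hence lies in $\Theta$. Handling this boundary carefully, together with the placement of $\bm{\theta}$ when $w=1$, is the main bookkeeping.
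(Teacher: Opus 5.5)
Your proposal is correct. Step~1 is essentially the paper's argument. Your proof of the second equality, however, takes a different route from the paper's.

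\textbf{The paper's route.} The paper stays with the iterates of $S(\cdot)$. By strong induction it shows $P(S^k(\bm{\alpha}))=P(\bm{\alpha})$ for $k\in\mathbb{Z}[0,w-2]$, invoking Lemma~\ref{lemma:main}(2) at each step. Hence $S^k(\bm{\alpha})=(P(\bm{\alpha}))^{w-k}\circ\bm{\theta}$ for $k\le w-1$, and in particular $S^{w-1}(\bm{\alpha})=\bm{\theta}^{\langle0\rangle}$. The remaining iterates then form $\Theta$ by definition.

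\textbf{Your route.} You work with $\mathscr{S}'(\bm{\alpha})$ and classify borders by length.
\begin{itemize}
\item Lemma~\ref{lm:2} turns a border of length $\ell(\bm{\alpha})-l$ into a prefix-unit of length $l$.
\item A single application of Lemma~\ref{lemma:main}(2) forces $l$ to be a multiple of $\ell(P(\bm{\alpha}))$ whenever $l\le\ell(\bm{\alpha})-\ell(P(\bm{\alpha}))$. This is a weak form of Fine--Wilf periodicity.
\item The borders shorter than $\bm{\theta}^{\langle0\rangle}$ are identified with $\Theta$, using the hereditary property of borders and Step~1 applied to $\bm{\theta}^{\langle0\rangle}$.
\end{itemize}

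\textbf{Comparison.} Your version avoids the induction and makes transparent why $\bm{\theta}^{\langle0\rangle}$ is the cut-off length. The paper's version also records the exact position of each element in the $S$-chain, e.g.\ $\bm{\theta}^{\langle k\rangle}=S^{w-1+k}(\bm{\alpha})$, which later lemmas rely on. That information is recoverable from your result, since $\mathscr{S}(\bm{\alpha})$ is totally ordered by length.

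\textbf{The $\bm{\theta}\in\Theta$ worry.} This check is unnecessary for the set identity: $\bm{\theta}$ is the $k=0$ member of the first family, so every border equal to $\bm{\theta}$ already lies in $\mathscr{S}''(\bm{\alpha})$. In any case, $\bm{\theta}\in\Theta$ holds directly. $\bm{\theta}$ is a proper prefix of $P(\bm{\alpha})$, hence a border of $\bm{\theta}^{\langle0\rangle}$.

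The tool you planned to use would not settle it in general. That tool is $S(\bm{\theta}^{\langle0\rangle})=\bm{\theta}$ when $\ell(P(\bm{\theta}^{\langle0\rangle}))=\ell(P(\bm{\alpha}))$. For $\bm{\alpha}=0010\,0010\,0$ one has $P(\bm{\theta}^{\langle0\rangle})=001\neq P(\bm{\alpha})$, yet $\bm{\theta}=0\in\Theta$.
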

\begin{IEEEproof}
We first show that $\mathscr{S}(\bm{\alpha})=\mathscr{S}'(\bm{\alpha})$. 
By Lemma~\ref{lemma:main}(3), we have
$\bm\varepsilon=S^{|\mathscr{S}(\bm{\alpha})|}(\bm{\alpha})
\prec_{\mathrm p}S^{|\mathscr{S}(\bm{\alpha})|-1}(\bm{\alpha})
\prec_{\mathrm p}\cdots
\prec_{\mathrm p}S(\bm{\alpha})
\prec_{\mathrm p}\bm{\alpha}$
and
$\bm\varepsilon=S^{|\mathscr{S}(\bm{\alpha})|}(\bm{\alpha})
\prec_{\mathrm s}S^{|\mathscr{S}(\bm{\alpha})|-1}(\bm{\alpha})
\prec_{\mathrm s}\cdots
\prec_{\mathrm s}S(\bm{\alpha})\prec_{\mathrm s}\bm{\alpha}$.
Thus, $\mathscr{S}(\bm{\alpha})\subseteq\mathscr{S}'(\bm{\alpha})$.
Suppose for contradiction that $\mathscr{S}(\bm{\alpha})\subset\mathscr{S}'(\bm{\alpha})$, i.e., there exists \(\bm{u}\) such that
$\bm{u}\in\mathscr{S}'(\bm{\alpha})$ and
\(\bm{u}\notin\mathscr{S}(\bm{\alpha})\).
Then, there exists \(k\in\mathbb{Z}[1,|\mathscr{S}(\bm{\alpha})|]\)
such that
\(S^k(\bm{\alpha})\prec_\mathrm{p}\bm{u}
\prec_\mathrm{p}S^{k-1}(\bm{\alpha})\) and
\(S^k(\bm{\alpha})\prec_\mathrm{s}\bm{u}
\prec_\mathrm{s}S^{k-1}(\bm{\alpha})\).
Hence, \(\bm{u}\) is both a proper prefix and suffix of \(S^{k-1}(\bm{\alpha})\), with
$\ell(\bm{u})>\ell(S^k(\bm{\alpha})),$
which contradicts Lemma~\ref{lemma:main}(3).
Therefore, $\mathscr{S}(\bm{\alpha})=\mathscr{S}'(\bm{\alpha})$.

We now show that $\mathscr{S}(\bm{\alpha})=\mathscr{S}''(\bm{\alpha})$.
We begin by showing that $P(S^k(\bm{\alpha})) = P(\bm{\alpha})$ for all $k \in \mathbb{Z}[0,w-2]$.    
This is proved by strong induction on $k$.
    For $k=0$, clearly $P(S^0(\bm{\alpha})) = P(\bm{\alpha})$. 
    Assume that for some $i \in \mathbb{Z}[0,w-3]$, we have $P(S^t(\bm{\alpha})) = P(\bm{\alpha})$ for all $t \in \mathbb{Z}[0,i]$. 
    We now prove that $P(S^{i+1}(\bm{\alpha})) = P(\bm{\alpha})$. Note that $i \le w-3$, and thus
\begin{equation}\label{sj}
\ell(S^{i+1}(\bm{\alpha})) 
= \ell(\bm{\alpha}) - \sum_{t=0}^{i}\ell(P(S^{t}(\bm{\alpha}))) 
= \ell(\bm{\alpha}) - (i+1)\ell(P(\bm{\alpha})) 
\ge 2\ell(P(\bm{\alpha})).
\end{equation}
On the other hand, $P(\bm{\alpha})$ is a prefix-unit of $S^{i}(\bm{\alpha}) = P(\bm{\alpha}) \circ S^{i+1}(\bm{\alpha})$. Therefore, $P(\bm{\alpha})$ is also a prefix-unit of $S^{i+1}(\bm{\alpha})$. Suppose for contradiction that $P(S^{i+1}(\bm{\alpha}))\neq P(\bm{\alpha})$, i.e., there exists a shorter prefix-unit $\bm{v}$ of $S^{i+1}(\bm{\alpha})$ than $P(\bm{\alpha})$. Then, by \eqref{sj}, we have $\ell(S^{i+1}(\bm{\alpha}))\ge\ell(\bm{v})+\ell(P(\bm{\alpha}))$, and thus $S^{i+1}(\bm{\alpha})_{[\ell(\bm{v});\ell(P(\bm{\alpha}))]}=S^{i+1}(\bm{\alpha})_{[0;\ell(P(\bm{\alpha}))]}=P(\bm{\alpha})$. Note that $S^{i+1}(\bm{\alpha})_{[\ell(\bm{v});\ell(P(\bm{\alpha}))]}=\bm{\alpha}_{[(i+1)\ell(P(\bm{\alpha}))+\ell(\bm{v});\ell(P(\bm{\alpha}))]}$. We further have $\bm{\alpha}_{[(i+1)\ell(P(\bm{\alpha}))+\ell(\bm{v});\ell(P(\bm{\alpha}))]}=P(\bm{\alpha})$. By Lemma~\ref{lemma:main}(2), $((i+1)\ell(P(\bm{\alpha}))+\ell(\bm{v}))\bmod\ell(P(\bm{\alpha}))=0,$ and thus $\ell(\bm{v})\bmod\ell(P(\bm{\alpha}))=0,$ which contradicts the assumption that $\ell(\bm{v})<\ell(P(\bm{\alpha}))$. Thus, $P(S^{i+1}(\bm{\alpha})) = P(\bm{\alpha})$. By induction, $P(S^k(\bm{\alpha})) = P(\bm{\alpha})$ for all $k \in \mathbb{Z}[0,w-2]$.
Then we can see that $\{S^k(\bm{\alpha})\mid k\in\mathbb{Z}[1,w-1]\}=\{(P(\bm{\alpha}))^k\circ\bm{\theta}\mid k\in\mathbb{Z}[1,w-1]\}$ and $S^{w-1}(\bm{\alpha})=\bm{\theta}^{\langle0\rangle}$. 
Therefore, 
\begin{align*}
    \mathscr{S}(\bm{\alpha})
&=\{S^k(\bm{\alpha})\mid k\in\mathbb{Z}^+\}\\
&=\{S^k(\bm{\alpha})\mid k\in\mathbb{Z}[1,w-1]\}\cup\{S^{k}(\bm{\alpha})\mid k\ge w, k\in\mathbb{Z}^+\}\\
&=\{(P(\bm{\alpha}))^k\circ\bm{\theta}\mid k\in\mathbb{Z}[1,w-1]\}\cup\{S^{k'}(S^{w-1}(\bm{\alpha}))\mid k'\in\mathbb{Z}^+\}\\
&=\{(P(\bm{\alpha}))^k\circ\bm{\theta}\mid k\in\mathbb{Z}[1,w-1]\}\cup\Theta\\
&=\{(P(\bm{\alpha}))^k\circ\bm{\theta}\mid k\in\mathbb{Z}[0,w-1]\}\cup(\Theta\setminus\{\bm{\theta}\})\\
&=\mathscr{S}''(\bm{\alpha}).
\end{align*}
\end{IEEEproof}

\begin{remark}
By Lemma~\ref{co22}, each application of ${S}(\cdot)$ removes one wagon 
${P}(\bm{\alpha})$ from the start of the label $\bm{\alpha}$.  
The process continues until only $\bm{\theta}^{\langle 0 \rangle}$ remains. However, it does not necessarily hold that
${P}(\bm{\theta}^{\langle 0 \rangle}) = {P}(\bm{\alpha})$. For example, when $\bm{\alpha}=0010\,0010\,0$,
we have $P(\bm{\alpha})=0010$ and $\bm{\theta}^{\langle 0 \rangle}=0010\,0$. Then, ${P}(\bm{\theta}^{\langle 0 \rangle})=001 \neq {P}(\bm{\alpha})$.
\end{remark}

\begin{corollary}\label{c0001}
    There exists a unique index in $\mathbb{Z}[1,|\Theta|]$, denoted by $\tilde{k}$, such that $\bm{\theta}^{\langle \tilde{k}\rangle} = \bm{\theta}$.
\end{corollary}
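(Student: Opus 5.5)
Existence and uniqueness are handled separately.

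\emph{Existence.} Apply Lemma~\ref{co22} to the sequence $\bm{\theta}^{\langle0\rangle}$ in place of $\bm{\alpha}$. This gives
$\Theta=\mathscr{S}(\bm{\theta}^{\langle0\rangle})=\mathscr{S}'(\bm{\theta}^{\langle0\rangle})$, so $\Theta$ is exactly the set of sequences that are simultaneously a proper prefix and a proper suffix of $\bm{\theta}^{\langle0\rangle}=P(\bm{\alpha})\circ\bm{\theta}$. (Lemma~\ref{co22} is stated for the fixed label $\bm{\alpha}$, but its first part uses only Lemma~\ref{lemma:main}(3), which holds for any nonempty sequence.)

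It therefore suffices to check that $\bm{\theta}$ is a proper prefix and a proper suffix of $\bm{\theta}^{\langle0\rangle}$.
\begin{itemize}
\item It is a suffix by construction, and it is proper because $\ell(P(\bm{\alpha}))\ge1$.
\item For the prefix property: since $\bm{\alpha}=(P(\bm{\alpha}))^{w}\circ\bm{\theta}$ and $P(\bm{\alpha})$ is a prefix-unit of $\bm{\alpha}$, we have $\bm{\theta}=\bm{\alpha}_{[w\ell(P(\bm{\alpha}));\,\ell(\bm{\theta})]}=P(\bm{\alpha})_{[0;\ell(\bm{\theta})]}$, using $\ell(\bm{\theta})<\ell(P(\bm{\alpha}))$. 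Hence $\bm{\theta}\prec_{\mathrm p}P(\bm{\alpha})\preceq_{\mathrm p}\bm{\theta}^{\langle0\rangle}$.
\end{itemize}
Thus $\bm{\theta}\in\Theta=\{\bm{\theta}^{\langle k\rangle}\mid k\in\mathbb{Z}[1,|\Theta|]\}$, which gives some $\tilde{k}$ in that range with $\bm{\theta}^{\langle\tilde{k}\rangle}=\bm{\theta}$.

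\emph{Uniqueness.} By construction the chain
$\bm{\varepsilon}=\bm{\theta}^{\langle|\Theta|\rangle}\prec_{\mathrm s}\cdots\prec_{\mathrm s}\bm{\theta}^{\langle1\rangle}\prec_{\mathrm s}\bm{\theta}^{\langle0\rangle}$ is strictly decreasing in length. This holds because each application of $S(\cdot)$ removes the wagon of the current sequence, and every nonempty sequence has a wagon of length at least~$1$. So the $\bm{\theta}^{\langle k\rangle}$ for $k\in\mathbb{Z}[1,|\Theta|]$ have pairwise distinct lengths and are pairwise distinct, and at most one of them can equal $\bm{\theta}$.

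The only mildly delicate step is justifying that the equality $\mathscr{S}=\mathscr{S}'$ applies to $\bm{\theta}^{\langle0\rangle}$ rather than only to $\bm{\alpha}$. I would handle it either by the remark above or by briefly repeating the sandwich argument from the proof of Lemma~\ref{co22}. That argument goes as follows: if $\bm{\theta}$ were not in the chain, it would lie strictly between two consecutive elements $S^{k}$ and $S^{k-1}$. It would then be a proper border of $S^{k-1}(\bm{\theta}^{\langle0\rangle})$ longer than $S^{k}(\bm{\theta}^{\langle0\rangle})$, contradicting Lemma~\ref{lemma:main}(3).
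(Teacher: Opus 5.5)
Your proof is correct. It differs from the paper's route only in which string the border characterization is applied to.

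The paper gives no separate proof and presents the statement as a consequence of Lemma~\ref{co22} applied to $\bm{\alpha}$ itself. Presumably the argument runs as follows: $\bm{\theta}$ is a proper prefix and suffix of $\bm{\alpha}$, so $\bm{\theta}\in\mathscr{S}'(\bm{\alpha})=\mathscr{S}(\bm{\alpha})$. By the proof of that lemma, $\mathscr{S}(\bm{\alpha})=\{(P(\bm{\alpha}))^k\circ\bm{\theta}\mid k\in\mathbb{Z}[1,w-1]\}\cup\Theta$, because the chain passes through $S^{w-1}(\bm{\alpha})=\bm{\theta}^{\langle0\rangle}$. Since $\bm{\theta}$ is shorter than every $(P(\bm{\alpha}))^k\circ\bm{\theta}$ with $k\ge1$, it must lie in $\Theta$.

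You instead apply $\mathscr{S}=\mathscr{S}'$ directly to $\bm{\theta}^{\langle0\rangle}$. This is legitimate, since that half of the proof of Lemma~\ref{co22} uses only Lemma~\ref{lemma:main}(3), which holds for any nonempty sequence. The only input you then need is the elementary fact $\bm{\theta}\prec_{\mathrm p}P(\bm{\alpha})\preceq_{\mathrm p}\bm{\theta}^{\langle0\rangle}$.

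What your route buys:
\begin{itemize}
\item It does not depend on the wagon-stripping induction $P(S^k(\bm{\alpha}))=P(\bm{\alpha})$, that is, on $S^{w-1}(\bm{\alpha})=\bm{\theta}^{\langle0\rangle}$.
\item It establishes $\bm{\theta}\in\Theta$ explicitly. The proof of Lemma~\ref{co22} uses this membership tacitly in its penultimate set identity, where $\Theta$ is traded for $\{\bm{\theta}\}\cup(\Theta\setminus\{\bm{\theta}\})$.
\end{itemize}
The paper's route, in exchange, places $\bm{\theta}$ within the full border chain of $\bm{\alpha}$, which is the form used later in the converse part of Theorem~\ref{thm}. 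Your uniqueness argument, via strictly decreasing lengths along the $S$-chain, is the same as in the paper's route.
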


\begin{lemma}[Properties of $\Theta$]\label{lm:0}
\begin{itemize}
\item[(1)] For any $\bm{\varphi}\in\Theta$, we have $\bm{\varphi}\prec_{\mathrm{p}}P(\bm{\alpha})$.
\item[(2)] If $\ell(\bm{\theta}^{\langle1\rangle})=\ell(P(\bm{\alpha}))-1$, then $\tilde{k}=1$.
\end{itemize}
\end{lemma}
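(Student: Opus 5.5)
Part (1) rests on the chain structure of $\Theta$. By Definition of $\mathscr{S}(\cdot)$ together with Lemma~\ref{lemma:main}(3), every element of $\Theta=\mathscr{S}(\bm{\theta}^{\langle0\rangle})$ is a proper prefix of the previous one:
$\bm\varepsilon\prec_{\mathrm p}\cdots\prec_{\mathrm p}\bm{\theta}^{\langle1\rangle}\prec_{\mathrm p}\bm{\theta}^{\langle0\rangle}$.
Hence it suffices to show $\bm{\theta}^{\langle1\rangle}=S(\bm{\theta}^{\langle0\rangle})\prec_{\mathrm p}P(\bm{\alpha})$, i.e.\ $\ell(S(\bm{\theta}^{\langle0\rangle}))<\ell(P(\bm{\alpha}))$. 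Equivalently, $\ell(P(\bm{\theta}^{\langle0\rangle}))>\ell(\bm{\theta})$.

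I plan to argue this by contradiction. Suppose $\bm{u}\triangleq S(\bm{\theta}^{\langle0\rangle})$ has $\ell(\bm{u})\ge\ell(P(\bm{\alpha}))$. By Lemma~\ref{lemma:main}(3), $\bm{u}$ is a proper prefix and a proper suffix of $\bm{\theta}^{\langle0\rangle}=P(\bm{\alpha})\circ\bm{\theta}$. Its length lies in $[\ell(P(\bm{\alpha})),\ell(P(\bm{\alpha}))+\ell(\bm{\theta})-1]$, and since $\ell(\bm{\theta})<\ell(P(\bm{\alpha}))$, the shift $s\triangleq\ell(\bm{\theta}^{\langle0\rangle})-\ell(\bm{u})$ satisfies $1\le s\le\ell(\bm{\theta})<\ell(P(\bm{\alpha}))$. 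Because $\ell(\bm{u})\ge\ell(P(\bm{\alpha}))$, the suffix occurrence of $\bm{u}$ gives
$(\bm{\theta}^{\langle0\rangle})_{[s;\ell(P(\bm{\alpha}))]}=\bm{u}_{[0;\ell(P(\bm{\alpha}))]}=P(\bm{\alpha})$.
Since $\bm{\theta}^{\langle0\rangle}$ is a suffix of $\bm{\alpha}$ starting at position $(w-1)\ell(P(\bm{\alpha}))$, this is an occurrence of $P(\bm{\alpha})$ in $\bm{\alpha}$ at position $(w-1)\ell(P(\bm{\alpha}))+s$, which is not a multiple of $\ell(P(\bm{\alpha}))$. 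This contradicts Lemma~\ref{lemma:main}(2). The case $w\ge1$ must be checked, but it holds automatically, since $\ell(P(\bm{\alpha}))\le\ell(\bm{\alpha})$.

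Part (2) follows from part (1). By part (1), $\bm{\theta}^{\langle1\rangle}\prec_{\mathrm p}P(\bm{\alpha})$, so the hypothesis forces $\bm{\theta}^{\langle1\rangle}=P(\bm{\alpha})_{[0;\ell(P(\bm{\alpha}))-1]}$. I need to show that this equals $\bm{\theta}$. From $\ell(\bm{\theta}^{\langle1\rangle})=\ell(\bm{\theta}^{\langle0\rangle})-\ell(P(\bm{\theta}^{\langle0\rangle}))$ we get $\ell(P(\bm{\theta}^{\langle0\rangle}))=\ell(\bm{\theta})+1$. Also, $P(\bm{\theta}^{\langle0\rangle})\preceq_{\mathrm p}P(\bm{\alpha})$ because $\ell(\bm{\theta})+1\le\ell(P(\bm{\alpha}))$. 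I then compare lengths: $\bm{\theta}^{\langle1\rangle}$ is a suffix of $\bm{\theta}^{\langle0\rangle}$ of length $\ell(P(\bm{\alpha}))-1\ge\ell(\bm{\theta})$, so its last $\ell(\bm{\theta})$ symbols are $\bm{\theta}$.

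The main obstacle is obtaining $\ell(\bm{\theta})=\ell(P(\bm{\alpha}))-1$, which is what makes $\bm{\theta}^{\langle1\rangle}=\bm{\theta}$ and hence $\tilde k=1$. For this I would use that $\ell(\bm{\theta}^{\langle1\rangle})\le\ell(\bm{\theta})$. This follows because $\bm{\theta}^{\langle1\rangle}=S(\bm{\theta}^{\langle0\rangle})$ has length $\ell(\bm{\theta})+\ell(P(\bm{\alpha}))-\ell(P(\bm{\theta}^{\langle0\rangle}))$, and $P(\bm{\theta}^{\langle0\rangle})$ has length at least $\ell(P(\bm{\alpha}))$ whenever $\bm{\theta}^{\langle0\rangle}$ is long enough. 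When $\ell(P(\bm{\theta}^{\langle0\rangle}))<\ell(P(\bm{\alpha}))$, I would instead bound $\ell(\bm{\theta}^{\langle1\rangle})\le\ell(\bm{\theta})$ directly via the contradiction argument of part (1) applied to $s$. Combining this with the hypothesis $\ell(\bm{\theta}^{\langle1\rangle})=\ell(P(\bm{\alpha}))-1$ and $\ell(\bm{\theta})\le\ell(P(\bm{\alpha}))-1$ gives equality of lengths. Then $\bm{\theta}^{\langle1\rangle}$ and $\bm{\theta}$ are both suffixes of $\bm{\theta}^{\langle0\rangle}$ of the same length, so they coincide, and $\tilde k=1$ by Corollary~\ref{c0001}.
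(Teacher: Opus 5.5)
Part (1) is correct and is essentially the paper's proof. Both assume $\ell(\bm{\theta}^{\langle1\rangle})\ge\ell(P(\bm{\alpha}))$, find a copy of $P(\bm{\alpha})$ at offset $\ell(P(\bm{\theta}^{\langle0\rangle}))\in\mathbb{Z}[1,\ell(\bm{\theta})]$ inside $\bm{\theta}^{\langle0\rangle}$, and contradict Lemma~\ref{lemma:main}(2). The only difference is that you place $\bm{\theta}^{\langle0\rangle}$ in $\bm{\alpha}$ as a suffix, while the paper places it as a prefix.

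Part (2), however, has a genuine gap. The inequality $\ell(\bm{\theta}^{\langle1\rangle})\le\ell(\bm{\theta})$ that you plan to ``use'' is false in general. For $\bm{\alpha}=001000100$ (Example~\ref{exp}) we have $P(\bm{\alpha})=0010$, $\bm{\theta}=0$, $\bm{\theta}^{\langle0\rangle}=00100$ and $P(\bm{\theta}^{\langle0\rangle})=001$, which is shorter than $P(\bm{\alpha})$. So $\bm{\theta}^{\langle1\rangle}=00$ is longer than $\bm{\theta}$, and $\tilde{k}=2$. More to the point, $\bm{\theta}=\bm{\theta}^{\langle\tilde{k}\rangle}$ (Corollary~\ref{c0001}), and lengths strictly decrease along $\bm{\theta}^{\langle1\rangle},\bm{\theta}^{\langle2\rangle},\dots$. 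Hence $\ell(\bm{\theta}^{\langle1\rangle})\ge\ell(\bm{\theta})$ always, with equality iff $\tilde{k}=1$. Your inequality is therefore not an auxiliary fact but the statement itself. Likewise, your justification (``$P(\bm{\theta}^{\langle0\rangle})$ has length at least $\ell(P(\bm{\alpha}))$ whenever $\bm{\theta}^{\langle0\rangle}$ is long enough'') is exactly what has to be proved.

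Your fallback, rerunning the part (1) argument with $s=\ell(P(\bm{\theta}^{\langle0\rangle}))=\ell(\bm{\theta})+1$, falls short by exactly one symbol. Since $\ell(\bm{\theta}^{\langle1\rangle})=\ell(P(\bm{\alpha}))-1$, the suffix occurrence of $\bm{\theta}^{\langle1\rangle}$ only shows that $\bm{\theta}^{\langle0\rangle}_{[s;\ell(P(\bm{\alpha}))-1]}$ equals $P(\bm{\alpha})$ with its last symbol deleted. A full copy cannot even fit inside $\bm{\theta}^{\langle0\rangle}$, because $s+\ell(P(\bm{\alpha}))=\ell(\bm{\theta}^{\langle0\rangle})+1$. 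So Lemma~\ref{lemma:main}(2) cannot be applied.

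The missing idea is an argument that this last symbol matches as well, i.e., $\theta^{\langle0\rangle}_{\ell(\bm{\theta})}=\theta^{\langle0\rangle}_{\ell(P(\bm{\alpha}))-1}$. The paper gets it by counting symbols:
\begin{itemize}
\item $P(\bm{\theta}^{\langle0\rangle})=\bm{\theta}^{\langle0\rangle}_{[0;\ell(\bm{\theta})+1]}$ and the length-$(\ell(\bm{\theta})+1)$ suffix of $P(\bm{\alpha})$ are units of $P(\bm{\alpha})$ of equal length (Lemma~\ref{lm:lemma2}(2)).
\item Their first $\ell(\bm{\theta})$ symbols both equal $\bm{\theta}$, by the period $\ell(\bm{\theta})+1$.
\item Lemma~\ref{lm:lemma2}(3) then forces their last symbols to coincide.
\item Hence $P(\bm{\theta}^{\langle0\rangle})$ is both a prefix-unit and a suffix-unit of $P(\bm{\alpha})$.
\item Lemma~\ref{lm:lemma2}(1), together with $D(P(\bm{\alpha}))=P(\bm{\alpha})$, yields $\ell(\bm{\theta})+1=\ell(P(\bm{\alpha}))$. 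Thus $P(\bm{\theta}^{\langle0\rangle})=P(\bm{\alpha})$ and $\bm{\theta}^{\langle1\rangle}=\bm{\theta}$.
\end{itemize}
Equivalently, $\bm{\theta}^{\langle0\rangle}$ has periods $p=\ell(\bm{\theta})+1$ and $q=\ell(P(\bm{\alpha}))$ and length $p+q-1\ge p+q-\gcd(p,q)$. By the Fine--Wilf theorem, $\gcd(p,q)$ is then a period, and primitivity of $P(\bm{\alpha})$ forces $p=q$. Your part (1) technique only reaches words of length at least $p+q$, which is why it cannot close this case.
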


\begin{IEEEproof}
\begin{itemize}
\item[(1)]
By Lemma~\ref{co22}, we have
$\bm{\varphi}\prec_\mathrm{p}\bm{\alpha}$, and thus it suffices to show that
$\ell(P(\bm{\alpha}))>\max_{\bm{\varphi}\in\Theta}\ell(\bm{\varphi})=\ell(\bm{\theta}^{\langle1\rangle})$.
Suppose for contradiction that $\ell(P(\bm{\alpha}))\leq\ell(\bm{\theta}^{\langle1\rangle}).$
Then, 
\begin{equation}\label{am}
    \ell(\bm{\theta}^{\langle 0 \rangle})=\ell(P(\bm{\theta}^{\langle 0 \rangle}))+\ell(\bm{\theta}^{\langle 1 \rangle})\geq\ell(P(\bm{\theta}^{\langle 0 \rangle}))+\ell(P(\bm{\alpha})).
\end{equation}
Thus, $\bm{\theta}^{\langle 0 \rangle}_{[\ell(P(\bm{\theta}^{\langle 0 \rangle}));\ell(P(\bm{\alpha}))]}=\bm{\theta}^{\langle 0 \rangle}_{[0;\ell(P(\bm{\alpha}))]}=P(\bm{\alpha})$.
Since $\bm{\theta}^{\langle 0 \rangle}\preceq\bm{\alpha}$, we further have $\bm{\alpha}_{[\ell(P(\bm{\theta}^{\langle 0 \rangle}));\ell(P(\bm{\alpha}))]}=\bm{\theta}^{\langle 0 \rangle}_{[\ell(P(\bm{\theta}^{\langle 0 \rangle}));\ell(P(\bm{\alpha}))]}=P(\bm{\alpha})$.
By Lemma~\ref{lemma:main}(2), we have $\ell(P(\bm{\theta}^{\langle 0 \rangle}))\bmod\ell(P(\bm{\alpha}))=0,$ 
and thus $\ell(P(\bm{\theta}^{\langle 0 \rangle}))\geq\ell(P(\bm{\alpha}))$. 
However, by \eqref{am},
$$\ell(P(\bm{\theta}^{\langle 0 \rangle}))\leq\ell(\bm{\theta}^{\langle 0 \rangle})-\ell(P(\bm{\alpha}))=\ell(\bm{\theta})<\ell(P(\bm{\alpha})),$$
which is a contradiction. 
Therefore, $\ell(P(\bm{\alpha}))>\ell(\bm{\theta}^{\langle1\rangle}).$

\item[(2)]
If $\ell(\bm{\theta}^{\langle 1\rangle})=\ell(P(\bm{\alpha}))-1$, then
\begin{equation}\label{1-11}
    \ell(P(\bm{\theta}^{\langle 0\rangle}))
=\ell(\bm{\theta}^{\langle 0\rangle})-\ell(\bm{\theta}^{\langle 1\rangle})
=(\ell(P(\bm{\alpha}))+\ell(\bm{\theta}))-(\ell(P(\bm{\alpha}))-1)
=\ell(\bm{\theta})+1.
\end{equation}
Hence,
$P(\bm{\theta}^{\langle 0\rangle})=\bm{\theta}^{\langle 0\rangle}_{[
0;\ell(\bm{\theta})+1]}
=\bm{\theta}\circ\theta^{\langle 0\rangle}_{\ell(\bm{\theta})}$.
Note that \(P(\bm{\alpha})\) is a prefix-unit of
\(\bm{\theta}^{\langle 0\rangle}\). We see that
\(P(\bm{\theta}^{\langle 0\rangle})\) is also a prefix-unit of
\(P(\bm{\alpha})\). Moreover, by Lemma~\ref{lm:lemma2}(2),
$P(\bm{\alpha})_{[\ell(P(\bm{\alpha}))-(\ell(\bm{\theta})+1);\ell(\bm{\theta})+1]}$ is a suffix-unit of
$P(\bm{\alpha})$ and
\begin{align*}
P(\bm{\alpha})_{[
\ell(P(\bm{\alpha}))-(\ell(\bm{\theta})+1);\ell(\bm{\theta})+1]}
&=\bm{\theta}^{\langle 0\rangle}_{[\ell(P(\bm{\alpha}))-(\ell(\bm{\theta})+1);\ell(\bm{\theta})+1]}\\
&=\bm{\theta}^{\langle 0\rangle}_{[\ell(P(\bm{\alpha}))-(\ell(\bm{\theta})+1);\ell(\bm{\theta})]}\circ\theta^{\langle 0\rangle}_{\ell(P(\bm{\alpha}))-1}\\
&\overset{(a)}{=}\bm{\theta}^{\langle 0\rangle}_{[\ell(P(\bm{\alpha}));\ell(\bm{\theta})]}\circ\theta^{\langle 0\rangle}_{\ell(P(\bm{\alpha}))-1}\\
&=\bm{\theta}\circ\theta^{\langle 0\rangle}_{\ell(P(\bm{\alpha}))-1},
\end{align*}
where $(a)$ follows from $\bm{\theta}\circ\theta^{\langle 0\rangle}_{\ell(\bm{\theta})}$ being a prefix-unit of $\theta^{\langle 0\rangle}$.
By Lemma~\ref{lm:lemma2}(3),
$\theta^{\langle 0\rangle}_{\ell(\bm{\theta})}=\theta^{\langle 0\rangle}_{\ell(P(\bm{\alpha}))-1}.$
Hence, $P(\bm{\theta}^{\langle 0\rangle})=P(\bm{\alpha})_{[
\ell(P(\bm{\alpha}))-(\ell(\bm{\theta})+1);\ell(\bm{\theta})+1]}$, and thus
$P(\bm{\theta}^{\langle 0\rangle})$
is both a prefix-unit and a suffix-unit of $P(\bm{\alpha})$.
Then, by Lemma~\ref{lm:lemma2}(1),
$P(\bm{\theta}^{\langle 0\rangle})$
is a dividing-unit of $P(\bm{\alpha})$.
By Lemma~\ref{lemma:main}(1), $D(P(\bm{\alpha}))=P(\bm{\alpha})$, and thus $P(\bm{\theta}^{\langle 0\rangle})=P(\bm{\alpha})$.
Then, 
$\bm{\theta}^{\langle 1 \rangle}=S(\bm{\theta}^{\langle 0 \rangle})=\bm{\theta}=\bm{\theta}^{\langle \tilde{k} \rangle}$, i.e., $\tilde{k}=1$. 
\end{itemize}
\end{IEEEproof}

\begin{lemma}[Properties of $\Phi$]\label{lm:01}
Let $\bm{\phi}\in\Phi$.
\begin{itemize}
\item[(1)] $\bm{\alpha}_{[0;\ell(\bm{\alpha})-\ell(P(\bm{\alpha}))+2]}\preceq_{\mathrm{p}}\bm{\phi}$.
    \item[(2)] $P(\bm{\alpha})\prec_{\mathrm{p}}\bm{\phi}$.
    \item[(3)] $|\Phi|\leq\ell(P(\bm{\alpha}))-1$, with equality if and only if one of the following holds:
\begin{itemize}
    \item[(a)] $\ell(P(\bm{\alpha}))=1$;
    \item[(b)] $\ell(P(\bm{\theta}))=1$ and $\ell(\bm{\theta})=\ell(P(\bm{\alpha}))-1$.
\end{itemize}
\end{itemize}
\end{lemma}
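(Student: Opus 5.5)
Throughout, I write $p=\ell(P(\bm{\alpha}))$. Every element of $\Phi$ has the form $\bm{\phi}=\bm{\alpha}_{[0;\ell(\bm{\alpha})-\ell(\bm{\varphi})]}$ with $\bm{\varphi}\in\Theta\setminus\{\bm{\theta}\}$, so all three parts reduce to statements about the lengths $\ell(\bm{\varphi})$. Each $\bm{\phi}$ is a prefix of $\bm{\alpha}$, so prefix relations among them are determined by length alone.

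For (1), it suffices to show $\ell(\bm{\phi})\ge \ell(\bm{\alpha})-p+2$, i.e.\ $\ell(\bm{\varphi})\le p-2$. By Lemma~\ref{lm:0}(1), $\ell(\bm{\varphi})\le \ell(\bm{\theta}^{\langle1\rangle})\le p-1$. Suppose $\ell(\bm{\varphi})=p-1$. Then $\bm{\varphi}=\bm{\theta}^{\langle1\rangle}$, since $\bm{\theta}^{\langle1\rangle}$ is the longest element of $\Theta$ and the elements of $\Theta$ have strictly decreasing lengths. Lemma~\ref{lm:0}(2) then gives $\tilde k=1$, so $\bm{\varphi}=\bm{\theta}^{\langle1\rangle}=\bm{\theta}$, contradicting $\bm{\varphi}\neq\bm{\theta}$.

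For (2), by (1) we get $\ell(\bm{\phi})\ge \ell(\bm{\alpha})-p+2$. If $w\ge2$ this is at least $p+2>p$, and $\bm{\phi}\preceq_{\mathrm p}\bm{\alpha}$ gives the claim. If $w=1$, I argue directly instead. Here $\bm{\alpha}=\bm{\theta}^{\langle0\rangle}$, and $\bm{\varphi}$ is a proper suffix of $\bm{\theta}$ or shorter; indeed $\ell(\bm{\varphi})<\ell(\bm{\theta})$ because the elements after $\bm{\theta}=\bm{\theta}^{\langle\tilde k\rangle}$ in the chain are shorter, and those before it are excluded unless they are longer. This needs care: elements $\bm{\theta}^{\langle k\rangle}$ with $k<\tilde k$ have length greater than $\ell(\bm{\theta})$, and these are exactly the problematic ones. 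So I should instead use $\ell(\bm{\phi})=\ell(\bm{\alpha})-\ell(\bm{\varphi})$ together with $\ell(\bm{\varphi})\le p-2$, which gives $\ell(\bm{\phi})\ge \ell(\bm{\alpha})-p+2=\ell(\bm{\theta})+2$. That alone does not exceed $p$. The real argument should instead be: $\ell(\bm{\varphi})<\ell(\bm{\theta}^{\langle0\rangle})-p=\ell(\bm{\theta})$ fails in general. I therefore plan to use that $\bm{\varphi}\in\mathscr S(\bm{\alpha})$ (Lemma~\ref{co22}) is a border of $\bm{\alpha}$, so $\ell(\bm{\phi})$ is a period of $\bm{\alpha}$. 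Every period is at least $p$, and $\ell(\bm{\phi})\ne p$ because $\ell(\bm{\phi})\not\equiv0 \pmod p$. Hence $\ell(\bm{\phi})>p$ and $P(\bm{\alpha})\prec_{\mathrm p}\bm{\phi}$. This period argument handles all $w$ uniformly, so I would use it throughout.

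For (3), the lengths $\ell(\bm{\varphi})$ for $\bm{\varphi}\in\Theta\setminus\{\bm{\theta}\}$ are distinct. By the proof of (1) they lie in $\mathbb{Z}[0,p-2]$, giving at most $p-1$ values and hence $|\Phi|\le p-1$. Equality forces every length in $[0,p-2]$ to occur and $\bm{\theta}$ itself to be absent from that range. Consider the chain $\bm{\theta}^{\langle1\rangle},\bm{\theta}^{\langle2\rangle},\dots$, each element of length at most $p-1$. Equality holds exactly when the set of chain lengths other than $\ell(\bm{\theta})$ covers $\{0,\dots,p-2\}$. In case (a), $p=1$ and both sides equal $0$. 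Otherwise $p\ge2$ and the chain must descend by steps of $1$ through $p-2,\dots,0$, with $\bm{\theta}$ sitting at length $p-1$; that is, $\bm{\theta}=\bm{\theta}^{\langle1\rangle}$, so $\ell(\bm{\theta})=p-1$. The subsequent steps of size one mean $\ell(P(\bm{\theta}))=1$ via Lemma~\ref{lemma:main}(4), since $\Theta\setminus\{\bm{\theta}\}=\mathscr S(\bm{\theta})$ when $\tilde k=1$. Conversely, under (b), Lemma~\ref{lemma:main}(4) gives $|\mathscr S(\bm{\theta})|=\ell(\bm{\theta})=p-1$. I also need to check that if $\bm{\theta}$ is not at length $p-1$, some length is missed: the chain entries above $\ell(\bm{\theta})$ are excluded, so there are at most $\ell(\bm{\theta})\le p-2<p-1$ values. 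This counting in the ``only if'' direction is the main obstacle, and I would handle it by splitting on whether $\tilde k=1$.
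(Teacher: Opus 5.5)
Your argument is correct. Part (1) is the paper's argument in contrapositive form. Parts (2) and (3) take different routes.

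\textbf{Part (2).} The paper does not use periods here. It first observes that $w=1$ forces $\tilde{k}=1$: in that case $\bm{\theta}^{\langle0\rangle}=\bm{\alpha}$, so $S(\bm{\theta}^{\langle0\rangle})=\bm{\theta}$. It then splits on $\tilde{k}$.
\begin{itemize}
\item If $\tilde{k}=1$, every $\bm{\varphi}\in\Theta\setminus\{\bm{\theta}\}$ is shorter than $\bm{\theta}$, so $\ell(\bm{\phi})>\ell(\bm{\alpha})-\ell(\bm{\theta})=w\,\ell(P(\bm{\alpha}))$.
\item If $\tilde{k}>1$, then $w\ge2$ and $\ell(\bm{\phi})>\ell(\bm{\alpha})-\ell(\bm{\theta}^{\langle0\rangle})=(w-1)\ell(P(\bm{\alpha}))$.
\end{itemize}
The implication $w=1\Rightarrow\tilde{k}=1$ is exactly what your abandoned first attempt at $w=1$ was missing.

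Your replacement runs as follows. By Lemma~\ref{co22}, $\bm{\varphi}$ is a proper border of $\bm{\alpha}$. By Lemma~\ref{lm:2}, $\bm{\phi}$ is then a prefix-unit, so $\ell(\bm{\phi})\ge\ell(P(\bm{\alpha}))$. Equality is excluded because $\ell(\bm{\phi})\not\equiv0\pmod{\ell(P(\bm{\alpha}))}$. This is uniform in $w$ and shorter. The price is that it leans on Lemma~\ref{co22} and on the paper's ``clearly'' non-divisibility remark, which itself needs Lemma~\ref{lm:0}(1) and distinctness of lengths in $\Theta$.

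\textbf{Part (3).} The paper writes
\[
|\Phi|=|\mathscr{S}(\bm{\theta}^{\langle1\rangle})|\le\ell(\bm{\theta}^{\langle1\rangle})\le\ell(P(\bm{\alpha}))-1,
\]
using Lemma~\ref{lemma:main}(4) and Lemma~\ref{lm:0}(1), handles $\bm{\theta}^{\langle1\rangle}=\bm{\varepsilon}$ separately, and reads the equality conditions off the two inequalities. You instead apply pigeonhole to the distinct lengths in $\mathbb{Z}[0,\ell(P(\bm{\alpha}))-2]$ supplied by part (1). This gives the bound at once. At equality it forces $\ell(\bm{\theta})=\ell(P(\bm{\alpha}))-1$, hence $\tilde{k}=1$, and Lemma~\ref{lemma:main}(4) applied to $\bm{\theta}$ finishes. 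The content is the same, just routed through (1).

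\textbf{Fixes to the write-up.}
\begin{itemize}
\item In (2), delete the false starts and keep only the period argument.
\item The last sentence of (3) is false as stated. The entries $\bm{\theta}^{\langle k\rangle}$ with $k<\tilde{k}$ lie in $\Theta\setminus\{\bm{\theta}\}$ and do contribute to $\Phi$, so ``at most $\ell(\bm{\theta})$ values'' fails in general. For $\bm{\alpha}=001000100$ you get $\Theta=\{00,0,\bm{\varepsilon}\}$, $\bm{\theta}=0$, $\tilde{k}=2$, and $|\Phi|=2>\ell(\bm{\theta})$.
\item No split on $\tilde{k}$ is needed there. If $\ell(\bm{\theta})\le\ell(P(\bm{\alpha}))-2$, then $\bm{\theta}$ itself occupies one of the $\ell(P(\bm{\alpha}))-1$ admissible lengths. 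That leaves at most $\ell(P(\bm{\alpha}))-2$ for $\Theta\setminus\{\bm{\theta}\}$, which is the pigeonhole you already stated.
\item In the converse under (b), say explicitly that $\ell(\bm{\theta})=\ell(P(\bm{\alpha}))-1$ makes $\bm{\theta}$ the longest element of $\Theta$, by Lemma~\ref{lm:0}(1). Hence $\bm{\theta}=\bm{\theta}^{\langle1\rangle}$ and $\Theta\setminus\{\bm{\theta}\}=\mathscr{S}(\bm{\theta})$ before invoking Lemma~\ref{lemma:main}(4).
\end{itemize}
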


\begin{IEEEproof}
    \begin{itemize}
\item[(1)]
Note that $\bm{\phi}\preceq_\mathrm{p}\bm{\alpha}$. It suffices to show that
$\ell(\bm{\phi})\geq\ell(\bm{\alpha})-\ell(P(\bm{\alpha}))+2$.
By Lemma~\ref{lm:0}(1), we have
$\ell(\bm{\theta}^{\langle1\rangle})\leq\ell(P(\bm{\alpha}))-1$.
If $\tilde{k}=1$, then
\[
\ell(\bm{\phi})\ge\ell(\bm{\alpha})-\ell(\bm{\theta}^{\langle2\rangle})>\ell(\bm{\alpha})-\ell(\bm{\theta}^{\langle1\rangle})
\geq\ell(\bm{\alpha})-\ell(P(\bm{\alpha}))+1.
\]
If $\tilde{k}>1$, by Lemma~\ref{lm:0}(2), we have
\[
\ell(\bm{\phi})\geq\ell(\bm{\alpha})-\ell(\bm{\theta}^{\langle1\rangle})
>\ell(\bm{\alpha})-\ell(P(\bm{\alpha}))+1.
\]
Therefore,
$\ell(\bm{\phi})>\ell(\bm{\alpha})-\ell(P(\bm{\alpha}))+1$, i.e., $\ell(\bm{\phi})\ge\ell(\bm{\alpha})-\ell(P(\bm{\alpha}))+2$.
   
        \item[(2)]
We first show that $\tilde{k}\in\mathbb{Z}[2,|\Theta|]$ implies $w\ge2$.
When $w=1$, we have $\bm{\theta}^{\langle 0 \rangle}=\bm{\alpha}$, and thus
$P(\bm{\theta}^{\langle 0 \rangle})=P(\bm{\alpha}).$
Then, 
$\bm{\theta}^{\langle 1 \rangle}=S(\bm{\theta}^{\langle 0 \rangle})=\bm{\theta}=\bm{\theta}^{\langle \tilde{k} \rangle}$, i.e., $\tilde{k}=1$.
Therefore, $w=1$ implies $\tilde{k}=1$, or equivalently,
$\tilde{k}\in\mathbb{Z}[2,|\Theta|]$ implies $w\ge2$.

We now show that $P(\bm{\alpha})\prec_{\mathrm{p}}\bm{\phi}$.
Since $P(\bm{\alpha})\preceq\bm{\alpha}$ and $\bm{\phi}\preceq\bm{\alpha}$, we only need to show
$\ell(\bm{\phi})>\ell(P(\bm{\alpha}))$.
If $\tilde{k}=1$, then $$\ell(\bm{\phi})>\ell(\bm{\alpha})-\ell(\bm{\theta}^{\langle 1\rangle})= \ell(\bm{\alpha})-\ell(\bm{\theta})\geq\ell(P(\bm{\alpha})).$$
If $\tilde{k}>1$, then $w\geq2$, and thus
$$\ell(\bm{\phi})\geq\ell(\bm{\alpha})-\ell(\bm{\theta}^{\langle 1\rangle})>
\ell(\bm{\alpha})-\ell(\bm{\theta}^{\langle 0\rangle})=(w-1)\ell(P(\bm{\alpha}))\geq \ell(P(\bm{\alpha})).$$
Therefore, $\ell(\bm{\phi})>\ell(P(\bm{\alpha}))$.

\item[(3)]
We first show that $|\Phi|\leq\ell(P(\bm{\alpha}))-1$, with equality only if Condition (a) or (b) holds.
If $\bm{\theta}^{\langle1\rangle}=\bm\varepsilon$, then $\Phi=\varnothing$, and hence $|\Phi|\leq\ell(P(\bm{\alpha}))-1$, where equality implies Condition (a).
If $\bm{\theta}^{\langle1\rangle}\neq\bm\varepsilon$, then
\begin{equation}\label{phi}
    |\Phi|
    =|\Theta|-1
    =|\mathscr{S}(\bm{\theta}^{\langle1\rangle})|
    \overset{(a)}{\leq}\ell(\bm{\theta}^{\langle1\rangle})
    \overset{(b)}{\leq}\ell(P(\bm{\alpha}))-1,
\end{equation}
where $(a)$ follows from Lemma~\ref{lemma:main}(4), with equality only if $\ell(P(\bm{\theta}^{\langle1\rangle}))=1$, and $(b)$ follows from Lemma~\ref{lm:0}(1). Moreover, by Lemma~\ref{lm:0}(2), equality in $(b)$ implies $\bm{\theta}^{\langle1\rangle}=\bm{\theta}$. Thus, both equalities in \eqref{phi} can hold only if Condition (b) holds.

It remains to show that Conditions (a) and (b) are sufficient for $|\Phi|=\ell(P(\bm{\alpha}))-1$.
If Condition (a) holds, then $\bm{\theta}=\bm{\theta}^{\langle1\rangle}=\bm\varepsilon$, and thus $|\Phi|=0=\ell(P(\bm{\alpha}))-1$.
If Condition (b) holds, Lemma~\ref{lm:0}(2) gives $\bm{\theta}=\bm{\theta}^{\langle1\rangle}$. Since $\ell(P(\bm{\theta}))=1$, Lemma~\ref{lemma:main}(4) further yields
\begin{equation*}
|\Phi|
=|\Theta|-1
=|\mathscr{S}(\bm{\theta}^{\langle1\rangle})|
=|\mathscr{S}(\bm{\theta})|
=\ell(\bm{\theta})
=\ell(P(\bm{\alpha}))-1.
\end{equation*}
\end{itemize}
\end{IEEEproof}

Now, we determine the zero-error capacity of star graph $G(\bm{\alpha})$.

\begin{defn}\label{cn}
Let
$\mathcal{B}\triangleq
\{P(\bm{\alpha})\}\cup\Phi\cup\mathcal{S},$
where
$\mathcal{S}\triangleq
\left\{
\bm{\alpha}\circ s^t
\mid t\in\mathbb{Z}^{+}
\right\}$
with an arbitrary but fixed
$s\in\mathcal{X}\setminus\{\alpha_{\ell(P(\bm{\alpha}))-1}\}$.
Clearly, for any two distinct elements
\(\bm b,\bm b'\in\mathcal{B}\), either
$\bm b\prec_{\mathrm p}\bm b'$
or
$\bm b'\prec_{\mathrm p}\bm b.$

Let $\mathcal{B}^{*}$ denote the Kleene closure of $\mathcal{B}$, i.e.,
$\mathcal{B}^{*}\triangleq
\bigcup_{m=0}^{\infty}\mathcal{B}^{m},$
where $\mathcal{B}^{m}$ denotes the set of all sequences obtained by
concatenating $m$ elements from $\mathcal{B}$, with
$\mathcal{B}^{0}=\emptyset$.
\end{defn}

\begin{theorem}\label{thm}
The zero-error capacity of \(G(\bm\alpha)\) is given by
$C_\mathrm{0}({\bm{\alpha}})= -\log \mu,$
where \(\mu\) is the only positive root of the
equation
\[
x^{\ell({P}(\bm{\alpha}))}
+\sum_{\bm\phi\in\Phi}x^{\ell(\bm\phi)}
+\frac{x^{\ell(\bm{\alpha})+1}}{1-x}=1.
\]
\end{theorem}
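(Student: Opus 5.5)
By Lemma~\ref{co2}, it suffices to determine the growth rate of $|\mathcal{F}_n(\bm{\alpha})|$, the number of distinct $\bm{\alpha}$-labeling sequences of length $n$. We will prove matching lower and upper bounds, both of which reduce to the generating function
\[
f(x)\triangleq x^{\ell(P(\bm{\alpha}))}+\sum_{\bm\phi\in\Phi}x^{\ell(\bm\phi)}+\sum_{t\ge1}x^{\ell(\bm{\alpha})+t}.
\]
Note that $f$ is exactly the length generating function of $\mathcal{B}$ in Definition~\ref{cn}. Since $f$ is strictly increasing on $(0,1)$ with $f(0)=0$ and $f(x)\to\infty$ as $x\to1^-$, the root $\mu$ exists and is unique.

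\textbf{Achievability.} We take as code all sequences of $\mathcal{B}^*$ of a given length. Lemma~\ref{CnB} and Corollary~\ref{0o0} guarantee unique parsing, so the number of length-$n$ sequences in $\mathcal{B}^*$ has generating function $1/(1-f(x))$ and grows as $\mu^{-n}$. Distinct lengths of the elements of $\mathcal{B}$ do not cause a problem here, since the counts satisfy a linear renewal recursion. Truncation issues at the end of a length-$n$ block are handled by padding with a suffix and invoking Lemma~\ref{lem:rate_invariance}.

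The key claim is that distinct elements of $\mathcal{B}^*$ produce distinct labeling sequences. The plan is to show that the labeling sequence of $\bm{b}_1\circ\bm{b}_2\circ\cdots$ places a $1$ exactly at the start of each block $\bm{b}_j$ that is followed by enough material, and $0$ elsewhere. By Lemma~\ref{lm:distinguish}, every block start that is followed by at least $\ell(\bm{\alpha})$ symbols of $\mathcal{B}^*$ begins with $\bm{\alpha}$, so it is labeled $1$. Interior positions must be shown to carry no occurrence of $\bm{\alpha}$:
\begin{itemize}
\item inside $P(\bm{\alpha})$ or $\bm{\phi}$, an occurrence at offset $r$ would make $\bm{\alpha}$ overlap itself with a shift not in the allowed set $\mathscr{S}(\bm{\alpha})$, which is characterized by Lemma~\ref{co22} and Lemma~\ref{lemma:main}(2);
\item inside $\bm{\alpha}\circ s^t$, the choice $s\ne\alpha_{\ell(P(\bm{\alpha}))-1}$ kills occurrences that straddle the run of $s$.
\end{itemize}
Hence the labeling sequence determines the block boundaries, and therefore the parse, and therefore the codeword.

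\textbf{Converse.} Given any labeling output $\bm{y}\in\mathcal{F}_n(\bm{\alpha})$, we map it injectively to a composition of $n$, up to a bounded boundary term, with parts drawn from the multiset of lengths of $\mathcal{B}$. The parts are the gaps between consecutive $1$'s in $\bm{y}$.

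If two consecutive occurrences of $\bm{\alpha}$ start at distance $d<\ell(\bm{\alpha})$, then $\bm{\alpha}_{[0;\ell(\bm{\alpha})-d]}$ is both a prefix and a suffix of $\bm{\alpha}$. By Lemma~\ref{co22} it lies in $\mathscr{S}''(\bm{\alpha})$, so $d\in\{\ell(P(\bm{\alpha}))\}\cup\{\ell(\bm\phi)\mid\bm\phi\in\Phi\}$; the overlaps of the form $(P(\bm{\alpha}))^k\circ\bm{\theta}$ with $k\ge1$ correspond to multiples of $\ell(P(\bm{\alpha}))$, and we must check that these are forced to split as repeated gaps of length $\ell(P(\bm{\alpha}))$. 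Any gap $d=\ell(\bm{\alpha})$ is likewise forced to be a multiple-wagon split or must be excluded, and all gaps $d>\ell(\bm{\alpha})$ are allowed.

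The number of such compositions is $[x^n]\,1/(1-f(x))$, up to polynomial factors from the prefix before the first $1$ and the suffix after the last one, which gives the upper bound $\mu^{-n}\cdot\mathrm{poly}(n)$.

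\textbf{Main obstacle.} The delicate step is the exact bookkeeping of gap lengths, in both directions.
\begin{itemize}
\item In the converse, we must show that a gap of length exactly $\ell(\bm{\alpha})$, or gaps that are sums of wagon lengths, never need extra counting, so that the gap set is exactly $\{\ell(\bm b)\mid\bm b\in\mathcal{B}\}$ with the correct multiplicities.
\item In achievability, we must verify the no-spurious-occurrence claim, in particular that concatenating $\bm{\phi}$ with the next block does not create an unintended occurrence of $\bm{\alpha}$.
\end{itemize}
I would first try to prove a single lemma: an occurrence of $\bm{\alpha}$ at position $i$ is followed by one at $i+d$ with $d<\ell(\bm{\alpha})$ if and only if $\bm{x}_{[i;d]}$ equals $P(\bm{\alpha})$ or some $\bm\phi\in\Phi$. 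This lemma would serve both directions at once, using Lemma~\ref{lm:01}(1)--(2) to control overlaps.
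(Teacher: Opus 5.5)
\textbf{Converse.} Your converse is essentially the paper's. The gap $l$ to the next $1$ forces $\bm{\alpha}_{[0;\ell(\bm{\alpha})-l]}\in\mathscr{S}''(\bm{\alpha})$ by Lemma~\ref{co22}. Gaps $k\ell(P(\bm{\alpha}))$ with $k\ge 2$ are excluded, as you anticipate, because Lemma~\ref{lm:distinguish} applied to $P(\bm{\alpha})\circ\bm{\alpha}$ produces an intermediate occurrence. One correction: a gap of exactly $\ell(\bm{\alpha})$ should not always be excluded. When $\bm{\theta}\neq\bm{\varepsilon}$ it is legitimate, and it is counted by $\bm{\alpha}\in\Phi$.

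\textbf{Achievability: the gap.} You want to show that on a $\mathcal{B}^*$-word the output is exactly the indicator of block starts. That claim is in fact true, but the reasons you give do not prove it.
\begin{itemize}
\item For a block $\bm{\phi}\in\Phi$, the offset $\ell(P(\bm{\alpha}))$ is interior by Lemma~\ref{lm:01}(2). The corresponding self-overlap is the border $S(\bm{\alpha})\in\mathscr{S}(\bm{\alpha})$, so it is an \emph{allowed} shift, contrary to your stated reason. Take $\bm{\alpha}=0010\,0010\,0$ and $\bm{\phi}=0010001$. The window at offset $4$ of $\bm{\phi}\circ\bm{\alpha}$ is $001001000$. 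It agrees with $\bm{\alpha}$ on every symbol lying in $\bm{\phi}$ and first differs inside the next block.
\item For $\mathcal{S}$-blocks, the condition $s\neq\alpha_{\ell(P(\bm{\alpha}))-1}$ does not kill straddling occurrences locally at the run. Take $\bm{\alpha}=0010$, which forces $s=0$. The window at offset $3$ of $00100\circ\bm{\alpha}$ matches $\bm{\alpha}$ through the symbol $s$ and fails only in the following block.
\end{itemize}
So any correct proof of your claim must use the occurrence of $\bm{\alpha}$ at the \emph{next} block start. For $\bm{\phi}$, for instance, an interior occurrence at offset $r$ makes both $r$ and $\ell(\bm{\phi})-r$ periods of $\bm{\alpha}$. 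By Lemma~\ref{co22} each period is a multiple of $\ell(P(\bm{\alpha}))$ or some $\ell(\bm{\phi}')\ge\ell(\bm{\alpha})-\ell(P(\bm{\alpha}))+2$. Two such periods cannot sum to $\ell(\bm{\phi})\le\ell(\bm{\alpha})$, since $\ell(\bm{\phi})$ is not a multiple of $\ell(P(\bm{\alpha}))$.

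Your proposed ``single lemma'' cannot supply this. Its right-hand side holds automatically at every occurrence, because $P(\bm{\alpha})$ and all $\bm{\phi}$ are prefixes of $\bm{\alpha}$. It therefore serves only the converse.

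\textbf{How the paper does it.} The paper never identifies the output. It needs only one distinguishing window, and gets it from the lemma you use merely for unique parsing. Codewords are $\bm{b}\circ\bm{\alpha}$ with $\bm{b}\in\mathcal{B}^*$. For two distinct codewords, take the last block (from the right) where their decompositions differ, say $\bm{u}\prec_{\mathrm p}\bm{u}'$, both followed by the same $\bm{e}$, which ends in $\bm{\alpha}$.
\begin{itemize}
\item Suffix-freeness (Lemma~\ref{CnB}) gives $\bm{u}\neq\bm{u}'''$, where $\bm{u}'''$ is the last $\ell(\bm{u})$ symbols of $\bm{u}'$.
\item Lemma~\ref{lm:distinguish} puts $\bm{\alpha}$ at the start of $\bm{u}$ in one codeword, while the other codeword has $\bm{u}'''$ there.
\item If $\ell(\bm{u})>\ell(\bm{\alpha})$, both blocks lie in $\mathcal{S}$, so the discrepancy sits in the first $\ell(\bm{\alpha})$ symbols.
\end{itemize}
Lemma~\ref{co1} then finishes.

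\textbf{Padding.} The padding by $\bm{\alpha}$ is needed for correctness, not just for length bookkeeping. $\mathcal{B}^*\cap\mathcal{X}^n$ is not itself a code: for $\bm{\alpha}=00$ and $s=1$, both $001\circ 0$ and $0011$ have output $1000$. So ``the output determines the parse'' fails unless every block is followed by at least $\ell(\bm{\alpha})$ symbols of $\mathcal{B}^*$.
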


\begin{lemma}\label{CnB}
$\mathcal{B}$ is suffix-free.
\end{lemma}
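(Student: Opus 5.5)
Since every two distinct elements of $\mathcal{B}$ are comparable under $\prec_{\mathrm p}$ (Definition~\ref{cn}), it suffices to show that whenever $\bm b\prec_{\mathrm p}\bm b'$ with $\bm b,\bm b'\in\mathcal{B}$, the sequence $\bm b$ is not also a suffix of $\bm b'$. Suppose for contradiction that $\bm b$ is both a proper prefix and a proper suffix of $\bm b'$, and let $d\triangleq\ell(\bm b')-\ell(\bm b)>0$.

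The key observation is that every element of $\mathcal{B}$ has $P(\bm{\alpha})$ as a prefix. This holds trivially for $P(\bm{\alpha})$ itself, by Lemma~\ref{lm:01}(2) for elements of $\Phi$, and trivially for elements of $\mathcal{S}$. Hence $\bm b'$ contains a copy of $P(\bm{\alpha})$ starting at position $d$. I would then split into two cases according to whether $\bm b'\in\{P(\bm\alpha)\}\cup\Phi$ or $\bm b'\in\mathcal{S}$.

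\emph{Case 1: $\bm b'\in\{P(\bm\alpha)\}\cup\Phi$.} Here $\bm b'\preceq_{\mathrm p}\bm\alpha$, so $\bm{\alpha}_{[d;\ell(P(\bm\alpha))]}=P(\bm\alpha)$. Lemma~\ref{lemma:main}(2) then forces $d\equiv 0 \pmod{\ell(P(\bm\alpha))}$. Now compare lengths.
\begin{itemize}
\item If $\bm b=P(\bm\alpha)$, then $\ell(\bm b')\equiv 0\pmod{\ell(P(\bm\alpha))}$, which contradicts the stated fact that no element of $\Phi$ has length divisible by $\ell(P(\bm\alpha))$.
\item If $\bm b,\bm b'\in\Phi$, write $\ell(\bm b)=\ell(\bm\alpha)-\ell(\bm\varphi)$ and $\ell(\bm b')=\ell(\bm\alpha)-\ell(\bm\varphi')$ with $\bm\varphi,\bm\varphi'\in\Theta$. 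Then $d=\ell(\bm\varphi)-\ell(\bm\varphi')$. By Lemma~\ref{lm:0}(1), both lengths lie in $[0,\ell(P(\bm\alpha))-1]$, so $0<d<\ell(P(\bm\alpha))$, again a contradiction.
\end{itemize}

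\emph{Case 2: $\bm b'=\bm\alpha\circ s^t\in\mathcal S$.} Here I would exploit the choice $s\neq\alpha_{\ell(P(\bm\alpha))-1}$. The copy of $P(\bm\alpha)$ at position $d$ places the symbol $\alpha_{\ell(P(\bm\alpha))-1}$ at position $d+\ell(P(\bm\alpha))-1$ of $\bm b'$. This symbol is not $s$, so that position lies in the $\bm\alpha$ part, and hence the whole copy lies inside $\bm\alpha$. Lemma~\ref{lemma:main}(2) again gives $\ell(P(\bm\alpha))\mid d$, and in particular $d\ge\ell(P(\bm\alpha))$.

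Next, the suffix relation says $\bm b'_{[d;\ell(\bm b)]}=\bm b$. I would compare the last $\min(t,\ell(\bm b))$ positions and the portion overlapping $\bm\alpha$. The aim is to show that $\bm\alpha$ (or $\bm b\preceq_{\mathrm p}\bm\alpha$) contains a run of at least $\ell(P(\bm\alpha))$ consecutive $s$'s at positions where, by $\bm\alpha=(P(\bm\alpha))^w\circ\bm\theta$ and the periodicity of $P(\bm\alpha)$, some residue class $\equiv\ell(P(\bm\alpha))-1\pmod{\ell(P(\bm\alpha))}$ must appear. That residue carries the symbol $\alpha_{\ell(P(\bm\alpha))-1}\neq s$, giving a contradiction. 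When $\bm b\in\Phi$ I would use the length bound $\ell(\bm b)\ge\ell(\bm\alpha)-\ell(P(\bm\alpha))+2$ from Lemma~\ref{lm:01}(1), together with the divisibility of $d$, to guarantee that this run of $s$'s is long enough and falls inside the periodic part.

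I expect Case 2 to be the main obstacle. The bookkeeping of which window of $\bm b$ overlaps the $s^t$ tail, versus which overlaps $\bm\alpha$, needs care, especially when $\bm b\in\mathcal S$ as well, where $d=t-t'$ and the suffix $\bm\alpha\circ s^{t'}$ straddles both parts. I would handle that subcase first: $\bm\alpha$ must itself occur at position $d$ in $\bm\alpha\circ s^t$, so its last $d\ge\ell(P(\bm\alpha))$ symbols are all $s$. This is impossible because any $\ell(P(\bm\alpha))$ consecutive symbols of $\bm\alpha$ form a unit of $\bm\alpha$ and therefore, by Lemma~\ref{lm:lemma2}(3), contain the symbol $\alpha_{\ell(P(\bm\alpha))-1}$.
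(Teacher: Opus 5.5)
Your Case~1, the divisibility step $\ell(P(\bm{\alpha}))\mid d$ at the start of Case~2, and the subcase $\bm{b}\in\mathcal{S}$ are sound. In Case~1 you close the $\Phi$--$\Phi$ subcase with Lemma~\ref{lm:0}(1) rather than the paper's Lemma~\ref{lm:01}(1), which is fine. The unwritten subcases $\bm{b}'=P(\bm{\alpha})$ and, in Case~2, $\bm{b}=P(\bm{\alpha})$ are immediate.

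\textbf{The gap.} It is in the subcase $\bm{b}\in\Phi$, $\bm{b}'=\bm{\alpha}\circ s^{t}\in\mathcal{S}$. Your mechanism is to force a run of at least $\ell(P(\bm{\alpha}))$ consecutive $s$'s inside $\bm{\alpha}$, which must then hit the residue class $\ell(P(\bm{\alpha}))-1$. This does not go through in general.

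Write $\bm{b}=\bm{\alpha}_{[0;\ell(\bm{\alpha})-\ell(\bm{\varphi})]}$ with $\bm{\varphi}\in\Theta\setminus\{\bm{\theta}\}$, so $d=t+\ell(\bm{\varphi})$. Besides $\alpha_i=\alpha_{i+d}$, which periodicity already gives since $\ell(P(\bm{\alpha}))\mid d$, the suffix relation yields only one fact: $\alpha_i=s$ for $i\in\mathbb{Z}[\ell(\bm{\alpha})-d,\,\ell(\bm{\alpha})-\ell(\bm{\varphi})-1]$. That is a run of length exactly $t=d-\ell(\bm{\varphi})$.

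\begin{itemize}
\item If $d\ge 2\ell(P(\bm{\alpha}))$ or $\bm{\varphi}=\bm{\varepsilon}$, the run is long enough and your argument works.
\item If $d=\ell(P(\bm{\alpha}))$ and $\bm{\varphi}\neq\bm{\varepsilon}$, the run is shorter than $\ell(P(\bm{\alpha}))$; Lemma~\ref{lm:01}(1) only gives $t\ge 2$. The run occupies residues $\ell(\bm{\theta}),\dots,\ell(\bm{\theta})+t-1$ modulo $\ell(P(\bm{\alpha}))$, and these miss $\ell(P(\bm{\alpha}))-1$ whenever $\ell(\bm{\varphi})>\ell(\bm{\theta})$.
\end{itemize}

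This configuration does occur. Take $\bm{\alpha}=0010\,0010\,0$ and $\bm{\varphi}=\bm{\theta}^{\langle 1\rangle}=00$. Then $d=4$, $t=2$, and the forced run sits at positions $5,6$, i.e., residues $1,2$ modulo $4$. So periodicity plus $\alpha_{\ell(P(\bm{\alpha}))-1}\neq s$ cannot close this case. You have to use that $\bm{\varphi}$ is a suffix of $\bm{\alpha}$ and a proper prefix of $P(\bm{\alpha})$.

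\textbf{The fix.} Apply the counting argument you already use for $\bm{b}\in\mathcal{S}$, but to windows of length $d$.
\begin{itemize}
\item Since $\bm{b}$ is both a prefix and a suffix of $\bm{b}'$, Lemma~\ref{lm:2} shows that $\bm{b}'_{[0;d]}=(P(\bm{\alpha}))^{d/\ell(P(\bm{\alpha}))}$ is a prefix-unit of $\bm{b}'$ and $\bm{b}'_{[\ell(\bm{b}')-d;\,d]}=\bm{\varphi}\circ s^{t}$ is a suffix-unit of $\bm{b}'$.
\item By Lemma~\ref{lm:lemma2}(3), these two windows contain equally many $s$'s.
\item As $\bm{\varphi}\prec_{\mathrm{p}}P(\bm{\alpha})$ (Lemma~\ref{lm:0}(1)), both windows begin with $\bm{\varphi}$. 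Hence the last $t\ge 1$ symbols of $(P(\bm{\alpha}))^{d/\ell(P(\bm{\alpha}))}$ are all $s$.
\item This forces $\alpha_{\ell(P(\bm{\alpha}))-1}=s$, a contradiction.
\end{itemize}
This is how the paper handles Case~2, uniformly for $\bm{b}\in\Phi\cup\mathcal{S}$; the $\mathcal{S}$ case is just $\bm{\varphi}=\bm{\varepsilon}$.
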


\begin{IEEEproof}
Let $\bm{b}'\in\mathcal{B}$ be arbitrary.
Suppose for contradiction that there exists
$\bm{b}\in\mathcal{B}$ such that
$\bm{b}\prec_{\mathrm{s}}\bm{b}'.$
Since $\ell(\bm{b})<\ell(\bm{b}')$, by Definition~\ref{cn},
we have $\bm{b}\prec_{\mathrm{p}}\bm{b}'.$
Then, there exist strings \(\bm{u}\) and \(\bm{u}'\) such that
$\bm{b}'=\bm{u}\circ\bm{b}=\bm{b}\circ\bm{u}'.$
By Lemma~\ref{lm:2}, $\bm{u}$ is a prefix-unit of $\bm{b}'$ and $\bm{u}'$ is a suffix-unit of $\bm{b}'$.

We first show that $\bm{u}=(P(\bm{\alpha}))^p$ for some $p\in\mathbb{Z}^{+}$.
Since $\bm{b}'=\bm{u}\circ\bm{b}$ and
$P(\bm{\alpha})\preceq_{\mathrm p}\bm{b}$,
we have
$\bm{u}\circ P(\bm{\alpha})\preceq_\mathrm{p}\bm{b}'$, and thus
$\alpha_{\ell(P(\bm{\alpha}))-1}=b'_{\ell(\bm{u}\circ P(\bm{\alpha}))-1}$.
If $\ell(\bm{u}\circ P(\bm{\alpha}))>\ell(\bm{\alpha})$, then
$\ell(\bm{b}')\geq\ell(\bm{u}\circ P(\bm{\alpha}))>\ell(\bm{\alpha})$,
which implies
$\bm{b}'=\bm{\alpha}\circ s^{\ell(\bm{b}')-\ell(\bm{\alpha})}$.
Consequently,
$\alpha_{\ell(P(\bm{\alpha}))-1}
=b'_{\ell(\bm{u}\circ P(\bm{\alpha}))-1}
=s$,
which contradicts
$\alpha_{\ell(P(\bm{\alpha}))-1}\neq s$
(Definition~\ref{cn}).
Therefore,
$\ell(\bm{u}\circ P(\bm{\alpha}))
\leq\ell(\bm{\alpha}),$
and thus
$\bm{u}\circ P(\bm{\alpha})\preceq_{\mathrm p}\bm{\alpha}$.
By Lemma~\ref{lemma:main}(2),
$\ell(\bm{u})\bmod\ell(P(\bm{\alpha}))=0$,
and thus
$\bm{u}=(P(\bm{\alpha}))^p$
for some $p\in\mathbb{Z}^{+}$.

Since
\(\bm b\prec_{\mathrm p}\bm b'\),
we have
\(\bm b'\neq P(\bm{\alpha})\).
We next consider the following two cases.
\begin{itemize}
   \item[(1)] $\bm{b}'\in\Phi$.

Since \(\bm b\prec_{\mathrm p}\bm b'\),
we have $\bm{b}\in\{P(\bm{\alpha})\}\cup\Phi$. 
Since $\bm{b}'\in\Phi$, we have $\ell(\bm{b}')\bmod\ell(P(\bm{\alpha}))\neq 0 $.
Since \(\bm{b}'=\bm{u}\circ\bm{b}\) and $\bm{u}=(P(\bm{\alpha}))^p$, we have $\ell(\bm{b})\bmod\ell(P(\bm{\alpha}))=\ell(\bm{b}')\bmod\ell(P(\bm{\alpha}))\neq 0$. Thus $\bm{b}\neq P(\bm{\alpha})$.

If $\bm{b}\in\Phi$, by Lemma~\ref{lm:01}(1), we have $\bm{\alpha}_{[0;\ell(\bm{\alpha})-\ell(P(\bm{\alpha}))+2]}\preceq_{\mathrm{p}}\bm{b}\prec_\mathrm{p}\bm{b}'\preceq_{\mathrm{p}}\bm{\alpha}$, and thus
$$\ell(\bm{b}')-\ell(\bm{b})
\leq\ell(\bm{\alpha})-(\ell(\bm{\alpha})-\ell(P(\bm{\alpha}))+2)=\ell(P(\bm{\alpha}))-2<\ell(P(\bm{\alpha})),$$
which contradicts $\ell(\bm{b}')-\ell(\bm{b})=\ell(\bm{u})\ge\ell(P(\bm{\alpha}))$.
Then, we have $\bm{b}\notin\Phi$. Therefore, for any $\bm{b}'\in\Phi$, there does not exist
$\bm{b}\in\mathcal{B}$ such that
$\bm{b}\prec_{\mathrm{s}}\bm{b}'.$

\item[(2)] $\bm{b}'\in\mathcal{S}$.

By the definition of $\mathcal{S}$, we have $\bm{b}'=\bm{\alpha}\circ s^{t}$ for some positive integer \(t\).
Then, the last symbol of \(\bm{b}'\) is \(s\), whereas the last symbol of \(P(\bm{\alpha})\) is \(\alpha_{\ell(P(\bm{\alpha}))-1}\neq s\).
Therefore,
$P(\bm{\alpha})$ cannot be a suffix of $\bm{b}'$.
Since $\bm{b}\prec_\mathrm{s}\bm{b}'$,
we have
$\bm{b}\neq P(\bm{\alpha})$,
and thus
$\bm{b}\in\Phi\cup\mathcal{S}$.
Then, $\bm{b}$ is either
$\bm{\alpha}_{[0;\ell(\bm{\alpha})-\ell(\bm{\varphi})]}$
for some
$\bm{\varphi}\in\Theta\setminus\{\bm{\theta}\}$,
or
$\bm{\alpha}\circ{s}^{t'}$
for some positive integer $t'$ satisfying $t'<t$.
Since $\bm{b}'=\bm{u}\circ\bm{b}=\bm{b}\circ\bm{u}'$, we have
\[
\bm{u}'=
\begin{cases}
\bm{\varphi}\circ s^{\ell(\bm{u})-\ell(\bm{\varphi})},
& \text{if } \bm{b}=\bm{\alpha}_{[0;\ell(\bm{\alpha})-\ell(\bm{\varphi})]},\\[1mm]
s^{\ell(\bm{u})},
& \text{if } \bm{b}=\bm{\alpha}\circ{s}^{t'}.
\end{cases}
\]

By Lemma~\ref{lm:0}(1), we have
$\bm{\varphi}\prec_{\mathrm{p}}P(\bm{\alpha})$, and thus $\bm{\varphi}\prec_{\mathrm p}\bm{u}$. 
On the other hand, since \(\bm{u}\) and \(\bm{u}'\) are units of \(\bm{b}'\) having the same length, by Lemma~\ref{lm:lemma2}(3), we have \(N_s(\bm{u})=N_s(\bm{u}')\).

If $\bm{u}'=\bm{\varphi}\circ s^{\ell(\bm{u})-\ell(\bm{\varphi})},$
then since \(\bm{\varphi}\prec_{\mathrm p}\bm{u}\) and $N_s(\bm{u})=N_s(\bm{u}')$, we have $\bm{u}=\bm{\varphi}\circ
s^{\ell(\bm{u})-\ell(\bm{\varphi})}$. Thus, \(\alpha_{\ell(P(\bm{\alpha}))-1}=u_{\ell(\bm{u})-1}=s\), which contradicts 
\(\alpha_{\ell(P(\bm{\alpha}))-1}\neq s\) (Definition~\ref{cn}).

If $\bm{u}'=s^{\ell(\bm{u})}$, then since $N_s(\bm{u})=N_s(\bm{u}')$, we have $\bm{u}=s^{\ell(\bm{u})}$.
Thus,
$\alpha_{\ell(P(\bm{\alpha}))-1}=u_{\ell(\bm{u})-1}=s,$
which contradicts
$\alpha_{\ell(P(\bm{\alpha}))-1}\neq s$
(Definition~\ref{cn}).

Therefore, for any $\bm{b}'\in\mathcal{S}$, there does not exist
$\bm{b}\in\mathcal{B}$ such that
$\bm{b}\prec_{\mathrm{s}}\bm{b}'.$
\end{itemize}

Together with Cases (1) and (2), we can conclude that $\mathcal{B}$ is suffix-free.
\end{IEEEproof}

\begin{corollary}\label{0o0}
Every sequence in $\mathcal{B}^{*}$ has a unique decomposition into a concatenation of elements of $\mathcal{B}$.
\end{corollary}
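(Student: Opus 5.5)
The plan is to use Lemma~\ref{CnB} (suffix-freeness of $\mathcal{B}$) and parse a sequence of $\mathcal{B}^{*}$ from the right end.

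\emph{Existence.} Existence of a decomposition is immediate from the definition of $\mathcal{B}^{*}$: every element of $\mathcal{B}^{*}$ is by definition some concatenation $\bm{b}_1\circ\cdots\circ\bm{b}_m$ with $\bm{b}_j\in\mathcal{B}$. So only uniqueness needs an argument.

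\emph{Uniqueness.} Suppose
\[
\bm{b}_1\circ\cdots\circ\bm{b}_m=\bm{b}'_1\circ\cdots\circ\bm{b}'_{m'}
\]
with all factors in $\mathcal{B}$. I will argue by induction on the total length (equivalently on $m+m'$) that $m=m'$ and $\bm{b}_j=\bm{b}'_j$ for all $j$.

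\begin{enumerate}[(1)]
\item Compare the last factors. Both $\bm{b}_m$ and $\bm{b}'_{m'}$ are suffixes of the same sequence, so the shorter one is a suffix of the longer one.
\item If their lengths differ, the shorter is a proper suffix of the longer. This contradicts Lemma~\ref{CnB}, so they have equal length and hence $\bm{b}_m=\bm{b}'_{m'}$.
\item Strip this common last factor and apply the induction hypothesis to the remaining, shorter sequences.
\end{enumerate}

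\emph{Base of the induction.} One side cannot be exhausted while the other still has factors: every element of $\mathcal{B}$ has positive length, since $\ell(P(\bm{\alpha}))\ge1$ and the elements of $\Phi$ and $\mathcal{S}$ are even longer. Hence when one side becomes empty the total length is $0$, so the other side is empty too, and the two decompositions coincide.

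The argument is routine. The only point needing care is ensuring that every element of $\mathcal{B}$ is nonempty, so that the induction terminates and the counts $m$ and $m'$ must agree.
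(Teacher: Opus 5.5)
Your proof is correct and takes the same route as the paper. The paper gives no separate argument: it states this as an immediate consequence of Lemma~\ref{CnB}, using the fact that a suffix-free set is uniquely decodable by peeling off factors from the right, which is exactly your induction. Your check that every element of $\mathcal{B}$ is nonempty (via $\ell(P(\bm{\alpha}))\ge 1$, Lemma~\ref{lm:01}(2) for $\Phi$, and $\ell(\bm{\alpha}\circ s^t)>\ell(\bm{\alpha})$ for $\mathcal{S}$) supplies the one detail the paper leaves implicit.
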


\begin{lemma}\label{lm:distinguish}
Letting $\bm{b} \in \mathcal{B}^{*}$ with $\ell(\bm{b}) \ge \ell(\bm{\alpha})$, we have  $\bm{\alpha}\preceq_\mathrm{p}\bm{b}$.
\end{lemma}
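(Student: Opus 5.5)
We argue by strong induction on the number $m$ of blocks in the unique decomposition (Corollary~\ref{0o0}) $\bm{b}=\bm{b}_1\circ\bm{b}_2\circ\cdots\circ\bm{b}_m$ with $\bm{b}_j\in\mathcal{B}$. The organizing observation is that every element of $\mathcal{B}$ is either a prefix of $\bm{\alpha}$ (the elements $P(\bm{\alpha})$ and $\bm{\phi}\in\Phi$) or has $\bm{\alpha}$ as a prefix (the elements of $\mathcal{S}$). If $\bm{b}_1\in\mathcal{S}$, the claim is immediate. Otherwise $\bm{b}_1\in\{P(\bm{\alpha})\}\cup\Phi$ is a proper prefix of $\bm{\alpha}$, and since $\ell(\bm{b})\ge\ell(\bm{\alpha})$ we have $m\ge2$.

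\textbf{Reduction to a tail.} Write $\bm{b}=\bm{b}_1\circ\bm{b}'$ with $\bm{b}'\in\mathcal{B}^*$.
\begin{itemize}
\item \emph{Case $\bm{b}_1=P(\bm{\alpha})$.} We need $\bm{\alpha}_{[\ell(P(\bm{\alpha}));\,\ell(\bm{\alpha})-\ell(P(\bm{\alpha}))]}=S(\bm{\alpha})\preceq_{\mathrm p}\bm{b}'$. By Lemma~\ref{lemma:main}(3), $S(\bm{\alpha})\prec_{\mathrm p}\bm{\alpha}$. If $\ell(\bm{b}')\ge\ell(\bm{\alpha})$, induction gives $\bm{\alpha}\preceq_{\mathrm p}\bm{b}'$, hence $S(\bm{\alpha})\preceq_{\mathrm p}\bm{b}'$.
\item \emph{Case $\bm{b}_1=\bm{\phi}=\bm{\alpha}_{[0;\ell(\bm{\alpha})-\ell(\bm{\varphi})]}$ with $\bm{\varphi}\in\Theta\setminus\{\bm{\theta}\}$.} We need $\bm{\varphi}\preceq_{\mathrm p}\bm{b}'$. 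By Lemma~\ref{lm:0}(1), $\bm{\varphi}\prec_{\mathrm p}P(\bm{\alpha})$, and every element of $\mathcal{B}$ begins with $P(\bm{\alpha})$ (Lemma~\ref{lm:01}(2) for $\Phi$; trivially for the others). So $\bm{\varphi}\preceq_{\mathrm p}\bm{b}_2\preceq_{\mathrm p}\bm{b}'$, and this case closes directly, since $\ell(\bm{b}')\ge\ell(\bm{\varphi})$ follows from $\ell(\bm{b})\ge\ell(\bm{\alpha})$.
\end{itemize}

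\textbf{Strengthened statement.} The gap is the first case when $\ell(\bm{b}')<\ell(\bm{\alpha})$. Induction alone does not handle it, because it would require that any element of $\mathcal{B}^*$ of length at least $\ell(S(\bm{\alpha}))$ begins with $S(\bm{\alpha})$. So I would prove the following stronger claim by induction on $m$.

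\emph{Claim.} For every $\bm{b}\in\mathcal{B}^*$ and every $\bm{x}\in\mathscr{S}(\bm{\alpha})\cup\{\bm{\alpha}\}$ with $\ell(\bm{b})\ge\ell(\bm{x})$, we have $\bm{x}\preceq_{\mathrm p}\bm{b}$.

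Since by Lemma~\ref{co22} the elements of $\mathscr{S}(\bm{\alpha})\cup\{\bm{\alpha}\}$ are nested prefixes of $\bm{\alpha}$, the claim is equivalent to: $\bm{b}_{[0;\,l]}=\bm{\alpha}_{[0;\,l]}$, where $l$ is the largest length in $\{\ell(\bm{x})\mid\bm{x}\in\mathscr{S}(\bm{\alpha})\cup\{\bm{\alpha}\}\}$ not exceeding $\ell(\bm{b})$. The inductive step then splits on $\bm{b}_1$.
\begin{itemize}
\item \emph{$\bm{b}_1=P(\bm{\alpha})$ and $\bm{x}\neq\bm{\varepsilon}$.} Lemma~\ref{co22} together with $S(\bm{x})=\bm{x}_{[\ell(P(\bm{x}));\,\cdot]}$ reduces the requirement to $S(\bm{x})\preceq_{\mathrm p}\bm{b}'$. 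This needs $P(\bm{x})=P(\bm{\alpha})$, which holds for the elements $(P(\bm{\alpha}))^k\circ\bm{\theta}$ with $k\ge1$, as shown in the proof of Lemma~\ref{co22}. For $\bm{x}\in\Theta$ (or $\bm{x}=\bm{\theta}^{\langle0\rangle}$) we have $\ell(\bm{x})<\ell(P(\bm{\alpha}))$ (or the statement reduces to the claim for $\bm{\theta}$), so $\bm{x}\preceq_{\mathrm p}\bm{b}_1$ directly.
\item \emph{$\bm{b}_1=\bm{\phi}$.} Any $\bm{x}$ with $\ell(\bm{x})\le\ell(\bm{\phi})$ is already a prefix of $\bm{\phi}$. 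For longer $\bm{x}$, I expect the relevant $\bm{x}$ to be $\bm{\alpha}$ itself, or elements whose overhang $\ell(\bm{x})-\ell(\bm{\phi})$ is covered by $\bm{\varphi}\preceq_{\mathrm p}P(\bm{\alpha})\preceq_{\mathrm p}\bm{b}_2$. This would use Lemma~\ref{lm:01}(1), namely that $\bm{\phi}$ lacks fewer than $\ell(P(\bm{\alpha}))-1$ symbols of $\bm{\alpha}$, so the overhang is shorter than $P(\bm{\alpha})$.
\end{itemize}

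\textbf{Main obstacle.} The delicate point is the case $\bm{b}_1=\bm{\phi}$ with a target $\bm{x}$ strictly between $\bm{\phi}$ and $\bm{\alpha}$ in length, i.e. $\bm{x}=(P(\bm{\alpha}))^k\circ\bm{\theta}$ or $\bm{x}\in\Theta$ lying in that range. There one must check that $\bm{x}_{[\ell(\bm{\phi});\,\cdot]}$ is a prefix of $P(\bm{\alpha})$. I would handle it by showing that, for such $\bm{x}$, $\bm{x}_{[\ell(\bm{\phi});\,\cdot]}$ is a suffix of $\bm{\alpha}$ shorter than $\ell(P(\bm{\alpha}))$ that is also a prefix of $\bm{\alpha}$. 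By Lemma~\ref{co22} it then lies in $\mathscr{S}(\bm{\alpha})$, and Lemma~\ref{lm:0}(1) makes it a prefix of $P(\bm{\alpha})$.
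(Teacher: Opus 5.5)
\textbf{Verdict.} Your argument is correct. Your $\Phi$-first and $\mathcal{S}$-first cases are exactly the paper's. The routes differ only when the decomposition of $\bm{b}$ starts with $P(\bm{\alpha})$.

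\textbf{The $P(\bm{\alpha})$-first case.} The paper does not induct there. It takes the maximal number $p$ of leading copies of $P(\bm{\alpha})$. If $p\,\ell(P(\bm{\alpha}))<\ell(\bm{\alpha})$, the next block $\bm{v}$ must lie in $\Phi\cup\mathcal{S}$. Lemma~\ref{lm:01}(1) then says $\bm{v}$ agrees with $\bm{\alpha}$ on its first $\ell(\bm{\alpha})-\ell(P(\bm{\alpha}))+2\ge\ell(\bm{\alpha})-p\,\ell(P(\bm{\alpha}))$ symbols. The invariance of $\bm{\alpha}$ under shifts by multiples of $\ell(P(\bm{\alpha}))$ (Lemma~\ref{lemma:main}(2), via Lemma~\ref{lm:2}) then gives $\bm{b}_{[0;\ell(\bm{\alpha})]}=\bm{\alpha}$ in one step.

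You instead peel off one wagon at a time and pay for it with a strengthened hypothesis over the borders $\mathscr{S}(\bm{\alpha})\cup\{\bm{\alpha}\}$. This works. It also buys something small: your route never actually needs Lemma~\ref{lm:01}(1), since Lemma~\ref{lm:0}(1) and Lemma~\ref{lm:01}(2) suffice. The paper's one-shot argument is shorter but relies on Lemma~\ref{lm:01}(1).

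\textbf{A simpler form of your strengthening.} Your strengthening is heavier than necessary. It amounts to saying that every $\bm{b}\in\mathcal{B}^*$ is prefix-comparable with $\bm{\alpha}$, i.e., $\bm{b}\preceq_{\mathrm p}\bm{\alpha}$ or $\bm{\alpha}\preceq_{\mathrm p}\bm{b}$. In that form the $P(\bm{\alpha})$-first inductive step takes two lines:
\begin{itemize}
\item if $\bm{b}=P(\bm{\alpha})\circ\bm{b}'$ with $\bm{b}'$ comparable with $\bm{\alpha}$, then $\bm{b}'$ is comparable with $S(\bm{\alpha})\preceq_{\mathrm p}\bm{\alpha}$ (Lemma~\ref{lemma:main}(3));
\item hence $\bm{b}$ is comparable with $P(\bm{\alpha})\circ S(\bm{\alpha})=\bm{\alpha}$.
\end{itemize}
You therefore never need $P(\bm{x})=P(\bm{\alpha})$. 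That is just as well, because the proof of Lemma~\ref{co22} only establishes it for $(P(\bm{\alpha}))^k\circ\bm{\theta}$ with $k\ge2$, not for $k=1$. You do handle $\bm{\theta}^{\langle0\rangle}$ separately, so no harm is done.

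Your ``main obstacle'' is also immediate. If $\bm{x}\preceq_{\mathrm p}\bm{\alpha}=\bm{\phi}\circ\bm{\varphi}$ and $\ell(\bm{x})>\ell(\bm{\phi})$, then the part of $\bm{x}$ beyond $\bm{\phi}$ is a prefix of $\bm{\varphi}$. Since $\bm{\varphi}\prec_{\mathrm p}P(\bm{\alpha})\preceq_{\mathrm p}\bm{b}_2$, you are done without any detour through $\mathscr{S}(\bm{\alpha})$.

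\textbf{One small slip.} A first block in $\{P(\bm{\alpha})\}\cup\Phi$ can equal $\bm{\alpha}$ itself. This happens when $\ell(P(\bm{\alpha}))=\ell(\bm{\alpha})$, or when $\bm{\theta}\neq\bm{\varepsilon}$, since then $\bm{\varepsilon}\in\Theta\setminus\{\bm{\theta}\}$ puts $\bm{\alpha}\in\Phi$. So ``proper prefix, hence $m\ge2$'' needs that trivial case split off first.
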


\begin{IEEEproof}
Let $\bm{u}$ be the first element in the decomposition of $\bm{b}$ over $\mathcal{B}$. 
We consider three cases.

\begin{itemize}
\item[(1)] $\bm{u}=P(\bm{\alpha})$.

Let ${p}$ denote the number of consecutive copies of
$P(\bm{\alpha})$ at the beginning of the decomposition of $\bm{b}$ over $\mathcal{B}$. 
Clearly, ${p}\ge1$. If ${p}\ell(P(\bm{\alpha}))\geq \ell(\bm{\alpha})$, then $\bm{\alpha}\preceq_\mathrm{p}(P(\bm{\alpha}))^p\preceq_\mathrm{p}\bm{b}$. Otherwise, letting $\bm{v}$ be the $({p}+1)$-th element in the decomposition of $\bm{b}$ over $\mathcal{B}$, we have $\bm{v}\in\Phi\cup\mathcal{S}$.
By Lemma~\ref{lm:01}(1), $\bm\alpha_{[0;\ell(\bm\alpha)-\ell(P(\bm\alpha))+2]}\preceq_\mathrm{p}\bm v$, and thus $\bm\alpha_{[0;\ell(\bm\alpha)-p\ell(P(\bm\alpha))]}\prec_\mathrm{p}\bm v$.
Then,
\[
\bm{b}_{[0;\ell(\bm\alpha)]}
=(P(\bm\alpha))^{{p}}\circ\bm v_{[0;\ell(\bm\alpha)-{{p}}\ell(P(\bm\alpha))]}
=(P(\bm\alpha))^{{p}}\circ\bm\alpha_{[0;\ell(\bm\alpha)-{{p}}\ell(P(\bm\alpha))]}
=(P(\bm\alpha))^{{p}}\circ\bm\alpha_{[{{p}}\ell(P(\bm\alpha));\ell(\bm\alpha)-{{p}}\ell(P(\bm\alpha))]}
=\bm\alpha.
\]

\item[(2)]
$\bm{u}\in\Phi$.

By the definition of $\Phi$, we have $\bm{u}=\bm{\alpha}_{[0;\ell(\bm{\alpha})-\ell(\bm{\varphi})]}$
for some $\bm{\varphi}\in\Theta\setminus\{\bm{\theta}\}$. 
By Lemma~\ref{lm:0}(1), we have $\bm{\varphi}\prec_{\mathrm{p}}P(\bm{\alpha})$. 
Let $\bm{v}$ be the second element in the decomposition
of $\bm{b}$ over $\mathcal{B}$.
By Definition~\ref{cn} and Lemma~\ref{lm:01}(2), we have $P(\bm{\alpha})\preceq_\mathrm{p}\bm{v},$ and thus $\bm{\varphi}\prec_\mathrm{p}\bm{v}.$
Then,
$$\bm{b}_{[0;\,\ell(\bm{\alpha})]}=\bm{u}\circ\bm{v}_{[0;\,\bm{\varphi})]}
=\bm{u}\circ\bm{\varphi}
=\bm{\alpha}.$$

\item[(3)]
$\bm{u}\in\mathcal{S}.$ 

Clearly, $\bm{\alpha}\prec_\mathrm{p}\bm{u}\preceq_\mathrm{p}\bm{b}$.
\end{itemize}
\end{IEEEproof}

With the above auxiliary results, we turn to the proof of
Theorem 1.

\begin{IEEEproof}[Proof of Theorem~1]
For an arbitrary but fixed integer
\(n\ge2\ell(\bm{\alpha})\), define
\[
\mathcal{C}_n
=
\{\bm{b}\circ\bm{\alpha}\mid
\bm{b}\in\mathcal{B}^*,\,
\ell(\bm{b}\circ\bm{\alpha})=n
\}.
\]

We first show that \(\mathcal{C}_n\subseteq\mathcal{B}^{*}\) and
\(\mathcal{C}_n\neq\emptyset\). To show that
\(\mathcal{C}_n\subseteq\mathcal{B}^{*}\), it suffices to prove that
\(\bm{\alpha}\in\mathcal{B}^{*}\). If \(\bm{\theta}=\bm{\varepsilon}\), then
\(\bm{\alpha}=(P(\bm{\alpha}))^w\in\mathcal{B}^*\).
Otherwise, \(\bm{\theta}\in\Theta\setminus\{\bm{\varepsilon}\}\),
and hence \(\bm{\alpha}\in\Phi\subseteq\mathcal{B}\).
Therefore \(\bm{\alpha}\in\mathcal{B}^{*}\).
Also,
$\bm{\alpha}\circ s^{n-2\ell(\bm{\alpha})}\circ\bm{\alpha}
\in\mathcal{C}_n,$
which shows that \(\mathcal{C}_n\neq\emptyset\).

We now show that \(\mathcal{C}_n\) is a code for the graph
\(G(\bm{\alpha})\), which provides a lower bound on
\(C_0(\bm{\alpha})\).
Let
\(\bm{c},\bm{c}'\in\mathcal{C}_{n}\)
be any two distinct sequences.
Consider their decompositions over \(\mathcal{B}\).
Look at the last position where the two decompositions differ. At that position, suppose $\bm{c}$ has $\bm{u}$ and $\bm{c}'$ has $\bm{u}'$, with $\bm{u},\bm{u}'\in\mathcal{B}$ and $\bm{u}\neq\bm{u}'$. Without loss of generality, assume that $\bm{u}\prec_\mathrm{p}\bm{u}'$. The strings after the position are identical in both sequences; denote this common suffix by $\bm{e}$. The prefixes before this position are denoted by $\bm{d}$ and $\bm{d}'$, respectively.
Then, 
\[
\bm{c}=\bm{d}\circ\bm{u}\circ\bm{e}
\quad\text{and}\quad
\bm{c}'=\bm{d}'\circ\bm{u}'\circ\bm{e}.
\]
Since \(\bm u\prec_{\mathrm p}\bm u'\), we can write
$\bm u'=\bm u''\circ\bm u'''$
with \(\ell(\bm u''')=\ell(\bm u)\). Thus,
$$\bm c'=\bm d'\circ\bm u''\circ\bm u'''\circ\bm e .$$
Clearly, $\bm{d},\bm{d}',\bm{u},\bm{u}',\bm{e}\in\mathcal{B}^*$, $\bm{\alpha}\preceq_{\mathrm{s}}\bm{e}$ and $\ell(\bm{d})=\ell(\bm{d}'\circ\bm{u}'')$.
By Lemma~\ref{CnB}, \(\bm{u}\) is not a suffix of
\(\bm{u}'\). Thus,
\begin{equation}\label{suffix}
\bm{u}\neq\bm{u}'''.
\end{equation}
We consider two cases to show that $\bm{c}$ and $\bm{c}'$ are distinguishable.

\begin{itemize}
\item[(1)] \(\ell(\bm{u})\leq\ell(\bm{\alpha})\).

We have
\[
\bm{\alpha}
\overset{(a)}{=}\bm{u}\circ
\bm{e}_{[0;\,\ell(\bm{\alpha})-\ell(\bm{u})]}
\overset{(b)}{\neq}
\bm{u}'''\circ\bm{e}_{[0;\,\ell(\bm{\alpha})-\ell(\bm{u})]},
\]
where \((a)\) follows from
\(\bm{u}\circ\bm{e}\in\mathcal{B}^{*}\) and Lemma~\ref{lm:distinguish},
and \((b)\) follows from \eqref{suffix}.
Then, by Lemma~\ref{co1},
\(\bm{c}\) and \(\bm{c}'\) are distinguishable for 
\(G(\bm{\alpha})\).

\item[(2)] \(\ell(\bm{u})>\ell(\bm{\alpha})\).

Clearly, $\bm{u},\bm{u}'\in\mathcal{S}$. Then,
$\bm{u}=\bm{\alpha}\circ s^{\ell(\bm{u})-\ell(\bm{\alpha})}$,
and
$\bm{u}'=\bm{u}''\circ\bm{u}'''=\bm{\alpha}\circ s^{\ell(\bm{u}')-\ell(\bm{\alpha})}.$
Thus,
$s=u_i=u_i''', \forall i\in\mathbb{Z}[\ell(\bm{\alpha}),\ell(\bm{u})-1].$
On the other hand, by \eqref{suffix}, there exists
\(j\in\mathbb{Z}[0,\ell(\bm{u})-1]\) such that
\(u_i\neq u_i'''\).
Therefore, $j\in\mathbb{Z}[0,\ell(\bm{\alpha})-1],$
and thus
$\bm{u}'''_{[0,\ell(\bm{\alpha})]}\neq\bm{u}_{[0,\ell(\bm{\alpha})]}=\bm{\alpha}$.
Then, by Lemma~\ref{co1},
\(\bm{c}\) and \(\bm{c}'\) are distinguishable for
\(G(\bm{\alpha})\).
\end{itemize}
Together with Cases (1) and (2), we can conclude that
\(\mathcal{C}_{n}\) is a code for \(G(\bm{\alpha})\).
    By Lemma~\ref{lem:rate_invariance} and a classical result of Shannon (cf. e.g.~\cite[Lemma 4.5]{book-it-janos}), we have
    $R(\{\mathcal{C}_{n}\})=-\log_2\mu,$
    where $\mu$ is the only positive root of the equation \[
x^{\ell({P}(\bm{\alpha}))}
+\sum_{\bm\phi \in \Phi}
x^{\ell(\bm{\phi})}
+\frac{x^{\ell(\bm{\alpha})+1}}{1-x}=1.
\]

We next show that
$C_0(\bm{\alpha})\leq -\log_2\mu.$
Recall the definition of \(\mathrm{L}_{\bm{\alpha}}(\cdot)\) in Definition~\ref{def:labeling_sequences}.
Let \(\{\mathcal{Y}_n\}\) be a sequence of sets indexed by \(n\), where $\mathcal{Y}_n=\{\mathrm{L}_{\bm{\alpha}}(\bm{x})\mid \bm{x}\in\mathcal{X}^{n}\}.$
Clearly, \(\mathcal{Y}_n\subseteq\{0,1\}^n\).
By Definition~\ref{rm1} and Lemma~\ref{co2},
$R(\{\mathcal{Y}_n\})=C_{\mathrm{L}}(\bm{\alpha})=C_0(\bm{\alpha}).$
Therefore, it suffices to show that
$R(\{\mathcal{Y}_{n}\})\leq -\log_2\mu.$

We prove this inequality by characterizing the constraints satisfied by the output sequences in \(\mathcal{Y}_n\).
Let \(\mathcal{Y}_{n}^{y_0y_1\cdots y_p}\) denote the subset of
\(\mathcal{Y}_n\) consisting of sequences with prefix
\(y_0y_1\cdots y_p\),
and let \(\bm{y}\in\mathcal{Y}_{n}^{1}\) be arbitrary but fixed. Then, there
exists \(\bm{x}\in\mathcal{X}^{n}\) such that
\(\mathrm{L}_{\bm{\alpha}}(\bm{x})=\bm{y}\).
Let $l$ be the coordinate of the second $1$ in $\bm{y}$, and thus \(\bm{y}\in\mathcal{Y}^{10^{l-1}1}_n\). By Definition~\ref{def:labeling_sequences}, we have
\begin{subequations}\label{eq:group01}
\begin{equation}
\bm{\alpha}=\bm{x}_{[0;\,\ell(\bm{\alpha})]}=\bm{x}_{[l;\,\ell(\bm{\alpha})]},
\label{eq:group01:a}
\end{equation}
\begin{equation}
\bm{x}_{[i;\,\ell(\bm{\alpha})]}\neq\bm{\alpha},
\forall i\in\mathbb{Z}[1,l-1].
\label{eq:group01:c}
\end{equation}
\end{subequations}

We now consider the case that $l\leq\ell(\bm{\alpha})$.
From \eqref{eq:group01:a}, we obtain $\bm{\alpha}_{[l;\ell(\bm{\alpha})-l]}
=\bm{x}_{[l;\ell(\bm{\alpha})-l]}$ 
and 
$\bm{\alpha}_{[0;\ell(\bm{\alpha})-l]}
=\bm{x}_{[l;\ell(\bm{\alpha})-l]},$
respectively, from
$\bm{\alpha}=\bm{x}_{[0,\ell(\bm{\alpha})]}$ and $\bm{\alpha}=\bm{x}_{[l,\ell(\bm{\alpha})]}$.
Thus, by Lemma~\ref{co22}, we have
$\bm{x}_{[l;\ell(\bm{\alpha})-l]}
\in\mathscr{S}''(\bm{\alpha})$. 
Note that $\bm{x}_{[0;l]}\circ\bm{x}_{[l;\ell(\bm{\alpha})-l]}=\bm{x}_{[0;\ell(\bm{\alpha})]}=\bm{\alpha}$. We further have $\bm{x}_{[0;l]}
\in\{(P(\bm{\alpha}))^k\mid k\in\mathbb{Z}[1,w]\}\cup\Phi.$
Moreover, if
$\bm{x}_{[0;l]}
\in\{(P(\bm{\alpha}))^k\mid k\in\mathbb{Z}[2,w]\},$
then since $\bm{x}_{[l;\ell(\bm{\alpha})]}=\bm{\alpha}$,
we have 
$\bm{x}_{[l-\ell(P(\bm{\alpha}));\,\ell(P(\bm{\alpha}))+\ell(\bm{\alpha})]}
=P(\bm{\alpha})\circ\bm{\alpha}\in\mathcal{B}^*.$ By Lemma~\ref{lm:distinguish}, we have $\bm{x}_{[l-\ell(P(\bm{\alpha}));\,\ell(\bm{\alpha})]}
=\bm{\alpha},$ 
which contradicts~\eqref{eq:group01:c}.
Therefore,
\begin{align*}
\bm{x}_{[0;l]}
\in(\{(P(\bm{\alpha}))^k\mid k\in\mathbb{Z}[1,w]\}\cup\Phi)
\setminus
\{(P(\bm{\alpha}))^k\mid k\in\mathbb{Z}[2,w]\}
=P(\bm{\alpha})\cup\Phi.
\end{align*}
Consequently, when $l\leq\ell(\bm{\alpha})$, we have
$l\in
\{\ell(P(\bm{\alpha}))\}
\cup
\{\ell(\bm{\phi})\mid\bm{\phi}\in\Phi\},$ and thus
$$l\in
\{\ell(P(\bm{\alpha}))\}
\cup
\{\ell(\bm{\phi})\mid\bm{\phi}\in\Phi\}
\cup
\{t\mid t>\ell(\bm{\alpha}),\,t\in\mathbb{Z}^{+}\}.$$
Then, we have
$$|{\mathcal{Y}}_n^1|
\leq\sum_l|{\mathcal{Y}}_n^{10^{l-1}1}|=
|{\mathcal{Y}}_{n-\ell({P}(\bm{\alpha}))}^1|
+\sum_{\bm{\phi} \in \Phi}
|{\mathcal{Y}}_{n-\ell(\bm{\phi})}^1|
+\sum_{t> \ell(\bm{\alpha})}|{\mathcal{Y}}_{n-t}^1|.$$
Hence,
$\lim\limits_{n \to \infty} \frac{1}{n}\log |{\mathcal{Y}}_n^1|\leq-\log\mu$,
where $\mu$ is the unique positive root of the equation:
\[
x^{\ell({P}(\bm{\alpha}))}
+\sum_{\bm{\phi} \in \Phi}x^{\ell(\bm{\phi})}
+\frac{x^{\ell(\bm{\alpha})+1}}{1-x}=1.
\]
Therefore,
\begin{align*}
    R(\{\mathcal{Y}_{n}\})
    &=\lim_{n \to \infty} \frac{\log |{\mathcal{Y}}_n|}{n}\\
    &=\lim_{n \to \infty} \frac{\log \left(|{\mathcal{Y}}_n^1|+|{\mathcal{Y}}_n^{01}|+|{\mathcal{Y}}_n^{001}|+\cdots\right)}{n}\\
    &\leq\lim_{n \to \infty} \frac{\log \left(n|{\mathcal{Y}}_n^1|\right)}{n}\\
    &=\lim_{n \to \infty} \frac{\log |{\mathcal{Y}}_n^1|}{n}\\
    &\leq-\log\mu.
\end{align*}
\end{IEEEproof}

Next, we characterize the label structures that achieve the minimum and maximum labeling capacities for an arbitrary but fixed label length.
\begin{theorem}\label{thm:main}
We have
$-\log\beta \leq C_0(\bm{\alpha}) \leq -\log\gamma,$
where $\beta$ and $\gamma$, respectively, are the unique positive roots of the equations
\[
x+x^{\ell(\bm{\alpha})}=1
\quad\text{and}\quad
x+x^\frac{\ell(\bm{\alpha})+1}{2}=1.
\]
Moreover, the lower bound is attained if and only if
$\ell(P(\bm{\alpha}))=\ell(\bm{\alpha}),$
and the upper bound is attained if and only if one of the following conditions holds:
\begin{enumerate}
    \item[(a)] $\ell(P(\bm{\alpha}))=1$;
    \item[(b)] $\ell(\bm{\alpha})=2\ell(P(\bm{\alpha}))-1$ and
    $\ell(P(\bm{\theta}))=1$.
\end{enumerate}
\end{theorem}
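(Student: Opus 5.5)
Throughout, write $L\triangleq\ell(\bm{\alpha})$, $p\triangleq\ell(P(\bm{\alpha}))$, and
\[
f(x)\triangleq x^{p}+\sum_{\bm\phi\in\Phi}x^{\ell(\bm\phi)}+\frac{x^{L+1}}{1-x}.
\]
The plan is to compare $f$ with the extremal generating functions, using Theorem~\ref{thm}. Since $f$ is strictly increasing on $(0,1)$, we have $\mu\le\xi$ (hence $C_0(\bm\alpha)\ge-\log\xi$) iff $f(\xi)\ge1$, and $\mu\ge\xi$ iff $f(\xi)\le 1$. Equality holds iff $f(\xi)=1$. So both bounds reduce to evaluating $f$ at $\beta$ and at $\gamma$.

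\emph{Lower bound.} The equation $x+x^{L}=1$ is equivalent to $x^{L}/(1-x)=1$, i.e., to $x^{L}+\frac{x^{L+1}}{1-x}=1$. Since $p\le L$ and the $\Phi$-terms are nonnegative, $f(x)\ge x^{L}+\frac{x^{L+1}}{1-x}$ on $(0,1)$, so $f(\beta)\ge1$. Equality forces $p=L$ and $\Phi=\varnothing$. Conversely, $p=L$ gives $w=1$, $\bm\theta=\bm\varepsilon$ and $\bm\theta^{\langle0\rangle}=\bm\alpha$. Then $S(\bm\alpha)=\bm\varepsilon$, so $\Theta=\{\bm\varepsilon\}$ and $\Phi=\varnothing$. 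This settles the lower bound and its equality case.

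\emph{Upper bound.} I will use three structural facts about the exponents of $f$:
\begin{itemize}
\item[(i)] $|\Phi|\le p-1$, by Lemma~\ref{lm:01}(3);
\item[(ii)] every $\ell(\bm\phi)$ lies in $[\max(p+1,L-p+2),\,L]$, by Lemma~\ref{lm:01}(1),(2) and $\bm\phi\preceq_{\mathrm p}\bm\alpha$;
\item[(iii)] the lengths $\ell(\bm\phi)$ are distinct.
\end{itemize}
Hence
\[
f(x)\le x^{p}+\sum_{t=L-p+2}^{L}x^{t}+\frac{x^{L+1}}{1-x}=x^{p}+\frac{x^{L-p+2}}{1-x},
\]
with equality iff $\Phi$ fills the whole interval, i.e., iff $|\Phi|=p-1$. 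For $p\ge2$, by (ii) this requires $L-p+2\ge p+1$, i.e., $L\ge 2p-1$.

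\emph{Reduction to a one-variable inequality.} It remains to show $g_p(\gamma)\le1$, where $g_p(x)\triangleq x^{p}+x^{L-p+2}/(1-x)$, with equality only in the stated cases. Set $m=(L+1)/2$, so that $1-\gamma=\gamma^{m}$ and $\gamma^{L+1}=(1-\gamma)^2$. Substitute $u=\gamma^{p}$ and multiply by $1-\gamma$. The target becomes
\[
u(1-\gamma)+\frac{\gamma^{L+2}}{u}\le 1-\gamma .
\]
As a function of $u>0$ the left side is convex. The two roots of equality are $u=1-\gamma$ (which corresponds to $p=m$) and $u=\gamma^{L+2}/(1-\gamma)=\gamma(1-\gamma)$ (which corresponds to $p=m+1$). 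So the inequality holds precisely for $p\in[m,m+1]$. Outside this window the crude bound fails, and the restriction (ii) must be used instead.

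For $p\ge m+1$, the exponents $\ell(\bm\phi)\ge p+1$ leave fewer than $p-1$ admissible slots in $[p+1,L]$. I will redo the bound with $\sum_{t=p+1}^{L}x^t$ and check it directly, possibly by induction on $L-p$. For small $p$, I expect the case $p=1$ to be the model: there $\Phi=\varnothing$, and $f=1$ is equivalent to $x^{L+1}=(1-x)^2$, i.e., exactly the $\gamma$-equation. For $1<p<m$, I will try to show that $f(\gamma)<1$ strictly, using the facts that $\Phi$ is short ($|\Phi|\le p-1$) and that its exponents start at $L-p+2>m$.

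\emph{Main obstacle and equality cases.} The hard part is the region $1<p<m$, where neither the crude interval bound nor the convexity window applies directly. I will first try an exchange argument. It compares the exponent multiset $\{p\}\cup\{\ell(\bm\phi)\}\cup\{t>L\}$ with the extremal multisets for $p=1$ and $p=m$, moving mass to larger exponents and tracking $f(\gamma)$ through each move. Equality then forces one of two situations. The first is $p=1$, which is (a). The second is $p=m$ together with $|\Phi|=p-1$. In that case $L=2p-1$, so $w=1$ and $\ell(\bm\theta)=p-1$, and Lemma~\ref{lm:01}(3)(b) gives $\ell(P(\bm\theta))=1$; this is (b). Conversely, (a) and (b) give equality by direct substitution into the computations above.
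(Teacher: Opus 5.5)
Your lower-bound argument is correct, and it differs from the paper, which simply cites Theorem~7 of \cite{dnalable1}. Comparing $f$ with $x^{L}/(1-x)$ via $x^{p}\ge x^{L}$ derives both the bound and its equality case directly from Theorem~\ref{thm}. Your reduction of the upper bound to $g_p(\gamma)\le 1$, with $g_p(x)=x^{p}+x^{L-p+2}/(1-x)$, is the paper's. The gap is an algebra slip in the next step. Divide your target by $1-\gamma$ and use $\gamma^{L+2}/(1-\gamma)=\gamma(1-\gamma)$; the equality case becomes $u^{2}-u+\gamma(1-\gamma)=0$. Its roots are $u=\gamma$ and $u=1-\gamma=\gamma^{m}$. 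The value $\gamma(1-\gamma)$ that you list as a root is their product: at $u=\gamma(1-\gamma)$ the left side equals $(1-\gamma)+\gamma(1-\gamma)^{2}\neq 1-\gamma$. So the convexity window is $\gamma^{m}\le u\le\gamma$, i.e.\ $1\le p\le m$, not $m\le p\le m+1$.

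The error causes two problems. First, your claim that the crude bound suffices for $p\in(m,m+1]$ is false. For every $p>m$ one has $u<\gamma^{m}$, outside the window; for instance $L=5$, $p=4$ gives $g_4(\gamma)=\gamma^{4}+\gamma^{3}/(1-\gamma)=1+\gamma^{4}>1$. Second, and more seriously, the range $1<p<m$ is covered only by an unspecified ``exchange argument''. So neither $f(\gamma)\le 1$ nor the strict inequality needed for the ``only if'' part of the equality characterization is proved there.

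The repair is immediate and is the paper's proof. For $p\le m$, note that $f(\gamma)\le g_p(\gamma)=h(\gamma^{p})$ with $h(u)=u+\gamma(1-\gamma)/u$. Since $h$ is strictly convex with $h(\gamma)=h(\gamma^{m})=1$, this gives $f(\gamma)\le 1$, strictly unless $p\in\{1,m\}$. For $p>m$, the bound $\ell(\bm\phi)\ge p+1$ gives $f(\gamma)\le\gamma^{p}/(1-\gamma)<\gamma^{m}/(1-\gamma)=1$, with no induction needed. Equality then forces $p\in\{1,m\}$ together with $|\Phi|=p-1$, and your final appeal to Lemma~\ref{lm:01}(3) yields (a) or (b).
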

\begin{IEEEproof}
The inequality \(C_0(\bm{\alpha})\geq-\log\beta\) and its necessary and sufficient equality condition were established in Theorem~7 of~\cite{dnalable1}.
We now consider the upper bound.
By letting 
\begin{equation*}
f(x)=x+x^\frac{\ell(\bm{\alpha})+1}{2}
\end{equation*}
and
\begin{equation*}
g(x)=x^{\ell(P(\bm{\alpha}))}
+\sum_{\bm{\phi} \in \Phi}x^{\ell(\bm{\phi})}
+\frac{x^{\ell(\bm{\alpha})+1}}{1-x},
\end{equation*}
we have $f(\gamma)=1$ and 
$C_0(\bm{\alpha})=-\log\gamma_1$, where $\gamma_1$ is the unique positive root of $g(x)=1$.
Showing that $C_0(\bm{\alpha})\leq-\log\gamma$
is equivalent to showing that $\gamma_1\geq\gamma$.
Since $g(x)$ is strictly increasing for $x>0$,
this is further equivalent to showing that
$g(\gamma)\leq g(\gamma_1)=1$.

By Lemma~\ref{lm:01}(1) and Lemma~\ref{lm:01}(2), for any $\bm{\phi}\in\Phi$, we have $\ell(\bm{\phi})\in\mathbb{Z}[l,\ell(\bm{\alpha})]$, where
$l=\max\{\ell(P(\bm{\alpha}))+1,\ell(\bm{\alpha})-\ell(P(\bm{\alpha}))+2\}$. Thus,
\begin{equation}\label{you-}
    g(\gamma)
\overset{(a)}{\leq}\gamma^{\ell(P(\bm{\alpha}))}
+\sum_{i=l}^{\ell(\bm{\alpha})}\gamma^{i}
+\frac{\gamma^{\ell(\bm{\alpha})+1}}{1-\gamma}
=\gamma^{\ell(P(\bm{\alpha}))}
+\frac{\gamma^{l}}{1-\gamma}.
\end{equation}
where equality in $(a)$ holds only if 
\(|\Phi|=\ell(\bm{\alpha})-l+1\).
We now consider two cases.
\begin{itemize}
\item[(1)]
$l=\ell(P(\bm{\alpha}))+1>\ell(\bm{\alpha})-\ell(P(\bm{\alpha}))+2$, i.e.,
\(\ell(P(\bm{\alpha}))>\frac{\ell(\bm{\alpha})+1}{2}\).

From \eqref{you-} and \(0<\gamma<1\), we have
\begin{equation*}
g(\gamma)\leq\gamma^{\ell(P(\bm{\alpha}))}
+\frac{\gamma^{\ell(P(\bm{\alpha}))+1}}{1-\gamma}
=\frac{\gamma^{\ell(P(\bm{\alpha}))}}{1-\gamma}
<\frac{\gamma^{\frac{\ell(\bm{\alpha})+1}{2}}}{1-\gamma}
=\frac{f(\gamma)-\gamma}{1-\gamma}
=1.
\end{equation*}
Thus, \(g(\gamma)<1\) when \(\ell(P(\bm{\alpha}))>\frac{\ell(\bm{\alpha})+1}{2}\).

\item [(2)] 
$l=\ell(\bm{\alpha})-\ell(P(\bm{\alpha}))+2\ge\ell(P(\bm{\alpha}))+1$, i.e.,
\(\ell(P(\bm{\alpha}))\leq\frac{\ell(\bm{\alpha})+1}{2}\).

Letting \(h(x)=x+\frac{(1-\gamma)\gamma}{x}\), from \eqref{you-}, we have
\begin{align}\label{you-3}
    g(\gamma)
    &\leq \gamma^{\ell(P(\bm{\alpha}))}
    +\frac{\gamma^{\ell(\bm{\alpha})-\ell(P(\bm{\alpha}))+2}}{1-\gamma} \notag\\
    &=\gamma^{\ell(P(\bm{\alpha}))}
    +\frac{\gamma^{\ell(\bm{\alpha})+1}\gamma}
    {(1-\gamma)\gamma^{\ell(P(\bm{\alpha}))}} \notag\\
    &=\gamma^{\ell(P(\bm{\alpha}))}
    +\frac{(f(\gamma)-\gamma)^2\gamma}
    {(1-\gamma)\gamma^{\ell(P(\bm{\alpha}))}} \notag\\
    &=\gamma^{\ell(P(\bm{\alpha}))}
    +\frac{(1-\gamma)\gamma}
    {\gamma^{\ell(P(\bm{\alpha}))}} \notag\\
    &=h(\gamma^{\ell(P(\bm{\alpha}))}).
\end{align}
Since \(1\leq\ell(P(\bm{\alpha}))\leq\frac{\ell(\bm{\alpha})+1}{2}\) and \(0<\gamma<1\), we have
\(\gamma^{\ell(P(\bm{\alpha}))}\in[\gamma^{\frac{\ell(\bm{\alpha})+1}{2}},\gamma]\).
Moreover, since \((1-\gamma)\gamma>0\), \(h(x)\) is strictly convex on \(x>0\).
Therefore, by \eqref{you-3},
\(g(\gamma)\leq\max\{h(\gamma^{\frac{\ell(\bm{\alpha})+1}{2}}),h(\gamma)\}\).
A direct calculation gives
\(h(\gamma^{\frac{\ell(\bm{\alpha})+1}{2}})=h(\gamma)=1\).
Hence, $g(\gamma)\leq1$, where equality can hold only if
\begin{equation}\label{(1)}
    \ell(P(\bm{\alpha}))=\frac{\ell(\bm{\alpha})+1}{2}\quad\text{or}\quad\ell(P(\bm{\alpha}))=1.
\end{equation}
On the other hand, equality in \eqref{you-} can hold only if
$|\Phi|=\ell(\bm{\alpha})-l+1=\ell(P(\bm{\alpha}))-1$.
By Lemma~\ref{lm:01}(3), equality in~\eqref{you-} further requires
\begin{equation}\label{(2)}
    \ell(P(\bm{\alpha}))=1
\quad\text{or}\quad
\ell(P(\bm{\theta}))=1.
\end{equation}
Therefore, by \eqref{(1)} and \eqref{(2)}, $g(\gamma)=1$ only if one of the following conditions holds:
\begin{itemize}
    \item[(a)] $\ell(P(\bm{\alpha}))=1$;
    \item[(b)] $\ell(\bm{\alpha})=2\ell(P(\bm{\alpha}))-1$ and
    $\ell(P(\bm{\theta}))=1$.
\end{itemize}
By Theorem~1, these two conditions are also sufficient for
\(g(\gamma)=1\). Hence, \(g(\gamma)=1\) if and only if one of these two conditions holds.
\end{itemize}
\end{IEEEproof}

\section{Conclusion}\label{conclusion}
In this paper, we studied the labeling capacity in the single-label setting, which is equivalent to the zero-error capacity of star graphs. We completely characterized the zero-error capacity of all star graphs, thereby resolving the labeling capacity for all single-label cases. We also developed a general coding method for constructing capacity-achieving codes for all star graphs. Furthermore, for any fixed label length, we derived tight lower and upper bounds on the achievable labeling capacities and established necessary and sufficient conditions for a label structure to attain each bound. These results hold for arbitrary finite alphabets and are not restricted to the DNA alphabet. An interesting direction for future research is to extend the coding methods developed in this work to DNA labeling systems subject to practical constraints, such as run-length and \(\mathrm{GC}\)-content constraints.

\bibliographystyle{IEEEtran}       
\bibliography{reference}   

\end{document}